\newcommand{\euler}{\mathrm{e}}
\newcommand{\RR}{\mathbb{R}}
\newcommand{\CC}{\mathbb{C}}
\newcommand{\ZZ}{\mathbb{Z}}
\newcommand{\TT}{\mathbb{T}}
\newcommand{\vol}{\operatorname{Vol}}
\newcommand{\drm}{\mathrm{d}}
\newcommand{\tr}{\operatorname{Tr}}
\newcommand{\id}{\operatorname{Id}}
\newcommand{\largediamond}{\mathlarger{\mathlarger{\mathlarger{\diamond}}}}
\newcommand{\hex}{\mathrm{(Hex)}}
\newcommand{\tri}{\mathrm{(Tri)}}
\newcommand{\V}{\mathcal{V}}
\newcommand{\E}{\mathcal{E}}
\newtheorem{theorem}{Theorem}[section]
\newtheorem{lemma}[theorem]{Lemma}
\newtheorem{proposition}[theorem]{Proposition}
\newtheorem{corollary}[theorem]{Corollary}
\theoremstyle{definition}
\newtheorem{definition}[theorem]{Definition}
\newtheorem{remark}[theorem]{Remark}
\begin{document}
\title[Eigenfunctions and the IDS of Archimedean Tilings]{Eigenfunctions and the Integrated Density of States on Archimedean Tilings}
\author{Norbert Peyerimhoff and Matthias T\"aufer}
\address[NP]{Department of Mathematical Sciences, Durham University, UK}
\address[MT]{School of Mathematical Sciences, Queen Mary University of London, UK}
\keywords{eigenfunctions -- Archimedean tilings -- Floquet Theory  -- Integrated density of states}
\date{}

\begin{abstract}
 We study existence and absence of $\ell^2$-eigenfunctions of the
 combinatorial Laplacian on the $11$ Archimedean tilings of the
 Euclidean plane by regular convex polygons. We show that exactly two
 of these tilings (namely the $(3.6)^2$ ``Kagome'' tiling and the
 $(3.12^2)$ tiling) have $\ell^2$-eigenfunctions.  These
 eigenfunctions are infinitely degenerate and are constituted of
 explicitly described eigenfunctions which are supported on a finite
 number of vertices of the underlying graph (namely the hexagons and
 $12$-gons in the tilings, respectively).
 Furthermore, we provide an explicit expression for the Integrated
 Density of States (IDS) of the Laplacian on Archimedean tilings in
 terms of eigenvalues of Floquet matrices and deduce integral formulas
 for the IDS of the Laplacian on the $(4^4)$, $(3^6)$, $(6^3)$,
 $(3.6)^2$, and $(3.12^2)$ tilings.
 Our method of proof can be applied to
 other $\ZZ^d$-periodic graphs as well.
\end{abstract}

\maketitle

\section{Introduction and statement of results}

The goal of this paper is to provide concrete formulas for the
Integrated Density of States (IDS) on Archimedean tilings,
viewed as combinatorial graphs, and to study existence or absence of
$\ell^2$-eigenfunctions for the associated Laplacians. 

A \emph{plane tiling by regular convex polygons} is a countable family of
regular convex polygons covering the plane without gaps or overlaps.  It is
called \emph{edge-to-edge} if the corners and sides of the polygons
coincide with the vertices and edges of the tiling
(see~\cite{GruenbaumS-89}).  The \emph{type} of a vertex of an
edge-to-edge plane tiling by regular polygons describes the order of
the polygons arranged cyclically around the vertex, for example the
vertices in the honeycomb tiling are all of the type $(6.6.6) =: (6^3)$.

\begin{definition}
 An \emph{Archimedean tiling} is an edge-to-edge tiling of the plane
 by regular convex polygons such that all vertices are of the same type.
\end{definition}

Archimedean tilings were systematically investigated in 1619 by
Johannes Kepler in his book Harmonices Mundi \cite{Ke1619} (see
\cite{Fi79} for an English translation). Kepler found all $11$
Archimedean tilings, namely with vertices of type $(4^4)$, $(3^6)$,
$(6^3)$, $(3.6)^2$, $(3.12^2)$, $(4.8^2)$, $(3^3.4^2)$, $(3^2.4.3.4)$,
$(3.4.6.4)$, $(3^4.6)$, and $(4.6.12)$, cf.~\cite[p.~59,
  63]{GruenbaumS-89} and Figure~\ref{fig:11_tilings} for an illustration.

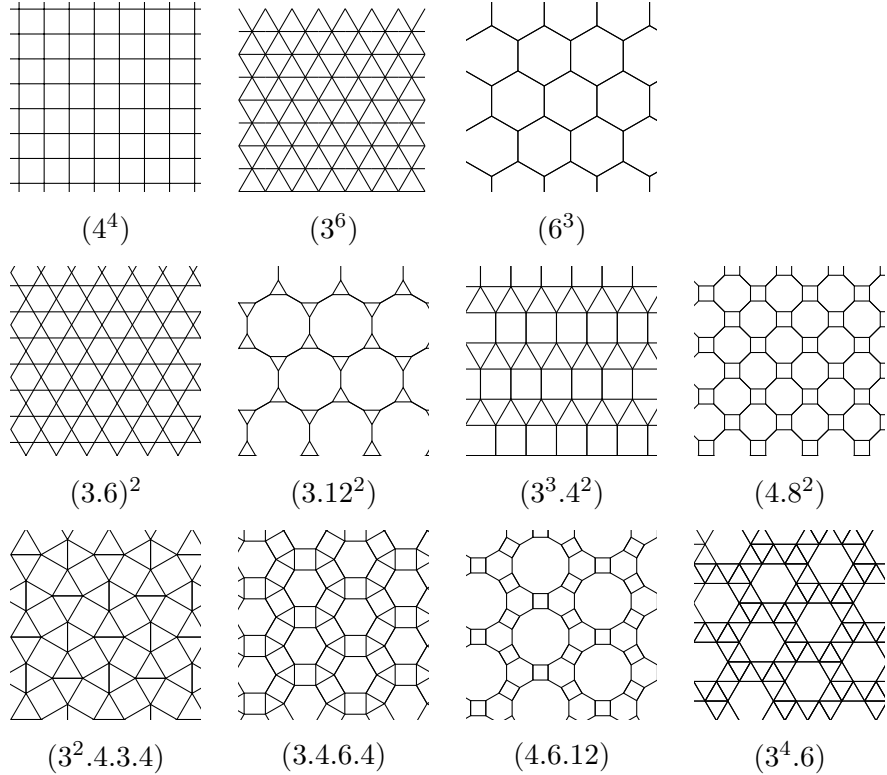
\begin{figure}[ht]
 \begin{tikzpicture}
  
\begin{scope}
\draw (1.25,-.5) node {$(4^4)$};

 \clip (0,0) rectangle (2.5,2.5); 
\begin{scope}[scale =.33, xshift = .33cm, yshift = .33cm]
   \draw (-1, -1) grid (8,8);
\end{scope}
\end{scope}
  
\begin{scope}[xshift = 3cm]
\draw (1.25,-.5) node {$(3^6)$};
 \clip (0,0) rectangle (2.5,2.5);
\begin{scope}[scale =.35]
  \foreach \i in {0,...,6}
    {
    \foreach \j in {0,...,3}
        {
        \pgfmathsetmacro{\jj}{2*\j*0.86602540378}
        \draw (\i,\jj) -- (\i+1,\jj);
        \draw (\i,\jj) -- (\i+1, \jj+2*0.86602540378);
        \draw (\i+1,\jj) -- (\i, \jj + 2*0.86602540378);
        \draw (\i, \jj + 0.86602540378) -- (\i + 1,\jj + 0.86602540378);
        }
    }
\end{scope}
\end{scope}

\begin{scope}[xshift = 6cm]
\draw (1.25,-.5) node {$(6^3)$};
 \clip (0,0) rectangle (2.5,2.5);
\begin{scope}[scale = .4, xshift = -.05cm]
  \foreach \i in {0,...,6}
    {
    \foreach \j in {0,...,3}
        {
        \pgfmathsetmacro{\ii}{2*\i*0.86602540378}
        \pgfmathsetmacro{\jj}{3*\j}
        \begin{scope}[xshift = \ii cm, yshift = \jj cm]
                \draw (30:1) -- (90:1) -- (150:1) -- (210:1) -- (270:1) -- (330:1) -- (30:1);
            
            \begin{scope}[xshift = 0.86602540378 cm , yshift = 1.5 cm]
                \draw (30:1) -- (90:1) -- (150:1) -- (210:1) -- (270:1) -- (330:1) -- (30:1);
            \end{scope}
        \end{scope}
        }
    }
\end{scope}
\end{scope}

\begin{scope}[yshift = -3.5cm]
\draw (1.25,-.5) node {$(3.6)^2$};
 \clip (0,0) rectangle (2.5,2.5);
\begin{scope}[scale = .2]
  \foreach \i in {0,...,6}
    {
    \foreach \j in {0,...,3}
        {
        \pgfmathsetmacro{\ii}{2*\i}
        \pgfmathsetmacro{\jj}{4*\j*0.86602540378}
        \begin{scope}[xshift = \ii cm, yshift = \jj cm]
                \draw (0:1) -- (60:1) -- (120:1) -- (180:1) -- (240:1) -- (300:1) -- (360:1);
                
            \begin{scope}[xshift = 1cm, yshift = 2*0.86602540378cm]
                \draw (0:1) -- (60:1) -- (120:1) -- (180:1) -- (240:1) -- (300:1) -- (360:1);
            \end{scope}
        \end{scope}
        }
    }
\end{scope}
\end{scope}    
    
\begin{scope}[xshift = 3cm, yshift = -3.5cm]
\draw (1.25,-.5) node {$(3.12^2)$};
 \clip (0,0) rectangle (2.5,2.5);
\begin{scope}[scale = .22] 
   \foreach \i in {0,...,6}
    {
    \foreach \j in {0,...,3}
        {
        \pgfmathsetmacro{\ii}{\i * (2*0.86602540378 + 2)}
        \pgfmathsetmacro{\jj}{\j * (4 * 0.86602540378 + 3)}
        \begin{scope}[xshift = \ii cm, yshift = \jj cm]
        \draw (0,0) -- (1,0) -- (.5,0.86602540378) -- (0,0);
        \draw (.5,0.86602540378) -- (.5,1.86602540378);
        \draw (.5,1.86602540378) -- (1,1.86602540378 + 0.86602540378) -- (0,1.86602540378 + 0.86602540378) -- (.5,1.86602540378);
            \begin{scope}
            \draw (210:1) -- (0,0);
            \end{scope}
            \begin{scope}[xshift = 1cm]
            \draw (330:1) -- (0,0);
            \end{scope}
            \pgfmathsetmacro{\l}{2*0.86602540378 + 1}
            \begin{scope}[yshift = \l cm]
            \draw (150:1) -- (0,0);
            \end{scope}
            \begin{scope}[xshift=1cm, yshift = \l cm]
            \draw (30:1) -- (0,0);
            \end{scope}
        \end{scope}
        
        \pgfmathsetmacro{\iii}{\ii + (0.86602540378 + 1)}
        \pgfmathsetmacro{\jjj}{\jj + (2 * 0.86602540378 + 1.5)}
        
        \begin{scope}[xshift = \iii cm, yshift = \jjj cm]
        \draw (0,0) -- (1,0) -- (.5,0.86602540378) -- (0,0);
        \draw (.5,0.86602540378) -- (.5,1.86602540378);
        \draw (.5,1.86602540378) -- (1,1.86602540378 + 0.86602540378) -- (0,1.86602540378 + 0.86602540378) -- (.5,1.86602540378);
            \begin{scope}
            \draw (210:1) -- (0,0);
            \end{scope}
            \begin{scope}[xshift = 1cm]
            \draw (330:1) -- (0,0);
            \end{scope}
            \pgfmathsetmacro{\l}{2*0.86602540378 + 1}
            \begin{scope}[yshift = \l cm]
            \draw (150:1) -- (0,0);
            \end{scope}
            \begin{scope}[xshift=1cm, yshift = \l cm]
            \draw (30:1) -- (0,0);
            \end{scope}
        \end{scope}
        }
    }
\end{scope}
\end{scope}

\begin{scope}[xshift=6cm, yshift=-3.5cm]
\draw (1.25,-.5) node {$(3^3.4^2)$};
 \clip (0,0) rectangle (2.5,2.5);
\begin{scope}[scale = .4, xshift = -0.05cm]

  \foreach \i in {-1,...,6}
    {
    \foreach \j in {0,...,2}
        {
        \pgfmathsetmacro{\jj}{\j * 2 * (1+0.86602540378)}
        \begin{scope}[xshift = \i cm, yshift = \jj cm]
            \draw (0,0) -- (0,1) -- (1,1) -- (1,0) -- (0,0);
            \draw(0,1) -- (0.5,1+0.86602540378) -- (1,1);
             \pgfmathsetmacro{\l}{1+0.86602540378}
                \begin{scope}[xshift = .5cm, yshift = \l cm]
                    \draw (0,0) -- (0,1) -- (1,1) -- (1,0) -- (0,0);
                    \draw(0,1) -- (0.5,1+0.86602540378) -- (1,1);
                \end{scope}
        \end{scope}
        }
    }
\end{scope}
\end{scope}

\begin{scope}[xshift=9cm, yshift=-3.5cm]
\draw (1.25,-.5) node {$(4.8^2)$};
 \clip (0,0) rectangle (2.5,2.5);
\begin{scope}[scale = .2, xshift = .25cm]
 \foreach \i in {0,...,3}
    {
    \foreach \j in {0,...,3}
        {
        \pgfmathsetmacro{\ii}{\i * 2 * (1+0.70710678118)}
        \pgfmathsetmacro{\jj}{\j * 2 * (1+0.70710678118)}
        \begin{scope}[xshift = \ii cm, yshift = \jj cm]
            \draw (0,0) -- (0,1) -- (1,1) -- (1,0) -- (0,0);
            \draw (0,0) -- (-0.70710678118,-0.70710678118);
            \draw (1,0) -- (1.70710678118,-0.70710678118);
            \draw (0,1) -- (-0.70710678118,1.70710678118);
            \draw (1,1) -- (1.70710678118,1.70710678118);
            \pgfmathsetmacro{\l}{1+0.70710678118}
                \begin{scope}[xshift = \l cm, yshift = \l cm]
                    \draw (0,0) -- (0,1) -- (1,1) -- (1,0) -- (0,0);
                    \draw (0,0) -- (-0.70710678118,-0.70710678118);
                    \draw (1,0) -- (1.70710678118,-0.70710678118);
                    \draw (0,1) -- (-0.70710678118,1.70710678118);
                    \draw (1,1) -- (1.70710678118,1.70710678118);
                \end{scope}
        \end{scope}
        }
    }
\end{scope}
\end{scope}

\begin{scope}[yshift = -7cm]
\draw (1.25,-.5) node {$(3^2.4.3.4)$};
 \clip (0,0) rectangle (2.5,2.5);
\begin{scope}[scale = .4]
 \foreach \i in {0,...,3}
    {
    \foreach \j in {0,...,3}
        {
        \pgfmathsetmacro{\l}{1 + 2* 0.86602540378}
        \pgfmathsetmacro{\ii}{\i * \l}
        \pgfmathsetmacro{\jj}{\j * \l}
        \begin{scope}[xshift = \ii cm, yshift = \jj cm]
            \pgfmathsetmacro{\ll}{0.86602540378 + 0.5}
            \begin{scope}[xshift = 0cm, yshift = 0cm]
            \draw (0,0) -- (1,0) -- (.5,-0.86602540378) -- (0,0);
            \draw (0,0) -- (1,0) -- (.5,0.86602540378) -- (0,0);  
            \end{scope}
            \begin{scope}[xshift = .5cm, yshift = 0.86602540378 cm]
                \draw (0,0) -- (0,1) -- (-0.86602540378, .5) -- (0,0);
                \draw (0,0) -- (0,1) -- (0.86602540378,.5) -- (0,0);
            \end{scope}
            \begin{scope}[xshift = \ll cm, yshift = \ll cm]
            \draw (0,0) -- (1,0) -- (.5,-0.86602540378) -- (0,0);
            \draw (0,0) -- (1,0) -- (.5,0.86602540378) -- (0,0);  
            \end{scope}
            \begin{scope}[xshift = 1.86602540378 cm, yshift = -.5cm]
                \draw (0,0) -- (0,1) -- (-0.86602540378, .5) -- (0,0);
                \draw (0,0) -- (0,1) -- (0.86602540378,.5) -- (0,0);
            \end{scope}
        \end{scope}
        }
    }
\end{scope}
\end{scope}

\begin{scope}[xshift = 3cm, yshift = -7cm]
\draw (1.25,-.5) node {$(3.4.6.4)$};
 \clip (0,0) rectangle (2.5,2.5);
\begin{scope}[scale = .28, xshift = .25cm, yshift = .25cm]
 \foreach \i in {0,...,3}
    {
    \foreach \j in {0,...,5}
        {
        \pgfmathsetmacro{\ii}{\i * 2 * (1 + 0.86602540378 + 0.5)}
        \pgfmathsetmacro{\jj}{\j * 2 * (1 + 0.86602540378 - 0.5)}
        \begin{scope}[xshift = \ii cm, yshift = \jj cm]
            \draw (0,0) -- (0,1) -- (1,1) -- (1,0) -- (0,0);
            \draw (0,0) -- (-0.86602540378, .5) -- (0,1);
            \draw (1,0) -- (1.86602540378, .5) -- (1,1);
            \draw (0,0) -- (-.5,-0.86602540378);
            \draw (0,1) -- (-.5,1.86602540378);
            \draw (1,0) -- (1.5,-0.86602540378);
            \draw (1,1) -- (1.5,1.86602540378);
            \begin{scope}[xshift = -0.86602540378 cm, yshift = .5 cm]
                \draw (0,0) -- (-.5,-0.86602540378);
                \draw (0,0) -- (-.5,0.86602540378);
            \end{scope}
            \begin{scope}[xshift = 1.86602540378 cm, yshift = .5 cm]
                \draw (0,0) -- (.5,-0.86602540378);
                \draw (0,0) -- (.5,0.86602540378);
            \end{scope}
             \pgfmathsetmacro{\lx}{1 + 0.86602540378 + 0.5}
             \pgfmathsetmacro{\ly}{1 + 0.86602540378 - 0.5}
                \begin{scope}[xshift = \lx cm, yshift = \ly cm]
                    \draw (0,0) -- (0,1) -- (1,1) -- (1,0) -- (0,0);
                    \draw (0,0) -- (-0.86602540378, .5) -- (0,1);
                    \draw (1,0) -- (1.86602540378, .5) -- (1,1);
                    \draw (0,0) -- (-.5,-0.86602540378);
                    \draw (0,1) -- (-.5,1.86602540378);
                    \draw (1,0) -- (1.5,-0.86602540378);
                    \draw (1,1) -- (1.5,1.86602540378);
                    \begin{scope}[xshift = -0.86602540378 cm, yshift = .5 cm]
                        \draw (0,0) -- (-.5,-0.86602540378);
                        \draw (0,0) -- (-.5,0.86602540378);
                    \end{scope}
                    \begin{scope}[xshift = 1.86602540378 cm, yshift = .5 cm]
                        \draw (0,0) -- (.5,-0.86602540378);
                        \draw (0,0) -- (.5,0.86602540378);
                    \end{scope}
            \end{scope}
            
        \end{scope}
        }
    }
\end{scope}
\end{scope}

\begin{scope}[xshift = 6cm, yshift = -7cm]
\draw (1.25,-.5) node {$(4.6.12)$};
 \clip (0,0) rectangle (2.5,2.5);
\begin{scope}[scale = .2, xshift = .25cm, yshift = .25cm]
  \foreach \i in {0,...,2}
    {
    \foreach \j in {0,...,3}
        {
            \pgfmathsetmacro{\ii}{\i * 2*(1.5 + 3* 0.86602540378)} 
            \pgfmathsetmacro{\jj}{\j* 2*(1.5 + 0.86602540378)}

        \begin{scope}[xshift = \ii cm, yshift = \jj cm]
            \draw (0,0) -- (0,1) -- (1,1) -- (1,0) -- (0,0);
            \begin{scope}[xshift = -0.86602540378cm, yshift = .5cm]
                \draw (30:1) -- (90:1) -- (150:1) -- (210:1) -- (270:1) -- (330:1) -- (30:1);
            \end{scope}
            \draw (-0.86602540378, 1.5) -- (-0.86602540378-.5, 1.5+0.86602540378); 
            \draw (-2* 0.86602540378, 1) -- (-2*0.86602540378-.5,1.86602540378); 
            \draw (-0.86602540378, -.5) -- (-0.86602540378-.5, -.5-0.86602540378); 
            \draw (-2* 0.86602540378, 0) -- (-2*0.86602540378-.5,-.86602540378); 
            \begin{scope}[xshift = 1.86602540378cm, yshift = .5cm]
                \draw (30:1) -- (90:1) -- (150:1) -- (210:1) -- (270:1) -- (330:1) -- (30:1);
            \end{scope}
                \draw (1.86602540378, 1.5) -- (1.86602540378+.5, 1.5+0.86602540378); 
                \draw (1+2* 0.86602540378, 1) -- (1+2*0.86602540378+.5,1.86602540378); 
                \draw (1.86602540378, -.5) -- (1.86602540378+.5, -.5-0.86602540378); 
                \draw (1+2* 0.86602540378, 0) -- (1+2*0.86602540378+.5,-.86602540378); 
            \pgfmathsetmacro{\lx}{1.5 + 3* 0.86602540378} 
            \pgfmathsetmacro{\ly}{1.5 + 0.86602540378} 
            
            \begin{scope}[xshift = \lx cm, yshift = \ly cm]

            \draw (0,0) -- (0,1) -- (1,1) -- (1,0) -- (0,0);
            \begin{scope}[xshift = -0.86602540378cm, yshift = .5cm]
                \draw (30:1) -- (90:1) -- (150:1) -- (210:1) -- (270:1) -- (330:1) -- (30:1);
            \end{scope}
            \draw (-0.86602540378, 1.5) -- (-0.86602540378-.5, 1.5+0.86602540378); 
            \draw (-2* 0.86602540378, 1) -- (-2*0.86602540378-.5,1.86602540378); 
            \draw (-0.86602540378, -.5) -- (-0.86602540378-.5, -.5-0.86602540378); 
            \draw (-2* 0.86602540378, 0) -- (-2*0.86602540378-.5,-.86602540378); 
            \begin{scope}[xshift = 1.86602540378cm, yshift = .5cm]
                \draw (30:1) -- (90:1) -- (150:1) -- (210:1) -- (270:1) -- (330:1) -- (30:1);
            \end{scope}
                \draw (1.86602540378, 1.5) -- (1.86602540378+.5, 1.5+0.86602540378); 
                \draw (1+2* 0.86602540378, 1) -- (1+2*0.86602540378+.5,1.86602540378); 
                \draw (1.86602540378, -.5) -- (1.86602540378+.5, -.5-0.86602540378); 
                \draw (1+2* 0.86602540378, 0) -- (1+2*0.86602540378+.5,-.86602540378); 

            \end{scope}

        \end{scope}
        }
    }
\end{scope}
\end{scope}

\begin{scope}[xshift = 9cm, yshift = -7cm]
\draw (1.25,-.5) node {$(3^4.6)$};
 \clip (0,0) rectangle (2.5,2.5);
\begin{scope}[scale = .3, xshift = -0.05cm, yshift = -0.05cm]
  \foreach \i in {0,...,5}
    {
    \foreach \j in {-1,...,4}
        {
        \pgfmathsetmacro{\ii}{\i * 2.5 + \j * .5} 
        \pgfmathsetmacro{\jj}{\i * 0.86602540378 + \j * 3 * 0.86602540378} 
        \begin{scope}[xshift = \ii cm, yshift = \jj cm]
        \draw (0:1) -- (60:1) -- (120:1) -- (180:1) -- (240:1) -- (300:1) -- (0:1);
         \pgfmathsetmacro{\r}{2* 0.86602540378} 
        \draw   (0:1)   -- (30:\r) -- 
                (60:1)  -- (90:\r) --
                (120:1) -- (150:\r) --
                (180:1) -- (210:\r) --
                (240:1) -- (270:\r) --
                (300:1) -- (330:\r) -- (0:1);
        \draw  (0:2) -- (60:2) -- (120:2) -- (180:2) -- (240:2) -- (300:2) -- (0:2);
        \draw  (0:1)   -- (0:2);
        \draw  (60:1)  -- (60:2);
        \draw  (120:1) -- (120:2);
        \draw  (180:1) -- (180:2);
        \draw  (240:1) -- (240:2);
        \draw  (300:1) -- (300:2);
        \end{scope}
        }
    }
\end{scope}
\end{scope}

 \end{tikzpicture}

\caption{The $11$ Archimedean tilings}
\label{fig:11_tilings}
\end{figure}

There is a vast literature about various aspects of Archimedean
tilings. For historical details on Archimedean tilings we refer the
readers to \cite[Section 2.10]{GruenbaumS-89}. These tilings are relevant in
crystallography as layers of stacked $3$-dimensional structures
\cite{FK1,FK2}. Archimedean tiling structures at different length
scales have the potential to exhibit interesting properties: they may
form frustrated magnets \cite{Har04} or photonic crystals
\cite{UDG07}. Diffusion constants of Archimedean Tilings have been
calculated in \cite{Bas06}. Percolation thresholds of Archimedean
solids have been investigated, e.g., in \cite{SykesE-64, Kesten-80, Sud99, Jac14, Par07}.

We view these tilings as combinatorial graphs $G = (\V,\E)$ with
vertex set $\V$ and edge set $\E$.
The Laplacian $\Delta \colon \ell^2(\V) \to \ell^2(\V)$ on such a graph is defined as
\begin{equation} \label{eq:Lapl}
 (\Delta f)(v) 
 =
 f(v) - \frac{1}{\lvert v \rvert} \sum_{w \sim v} f(w),
\end{equation}
where $\lvert v \rvert$ denotes the vertex degree of $v \in \V$, and
$w \sim v$ means that $w$ and $v$ are adjacent, i.e. joined by an
edge.  This is a self-adjoint, bounded operator. On each of these
graphs, there is a cofinite $\ZZ^2$-action allowing to define the
Integrated Density of States (IDS) (for the precise definition see
Section~\ref{sec:2}).

Our first main results are concrete integral expressions of the
IDS for the Archimedean tilings $(4^4)$, $(3^6)$, $(6^3)$, $(3.6)^2$,
and $(3.12^2)$. Moreover, we show that the tilings $(3.6)^2$ (Kagome
lattice), and $(3.12^2)$ have $\Delta$-eigenfunctions of finite
support leading to jumps of the IDS.  Finally, we show that no other
Archimedean tiling has (any $\ell^2(\V)$) eigenfunctions.
\begin{remark}
 \label{rem:IDS_spectral_measure}
 For periodic graphs with co-finite $\ZZ^d$ action, the (distributional) derivative of the IDS, the \emph{density of states}, is a spectral measure in the sense that is carries all information on the spectrum:
 The points of increase of the IDS, i.e. the support of the density of states, are the spectrum of $\Delta$~\cite[p.119]{MathaiY-02}, see also \cite[Prop.~5.2]{LenzPV-07} for a proof of this statement in a more general context.
 The set of discontinuities of the IDS constitues the pure point spectrum and the singular continuous spectrum is empty~\cite[Theorem 6.10]{Kuchment-16}. 
 Thus, the remaining points of increase are the absolutely continuous spectrum.
 In particular, we have a complete description of the spectral types on all $11$ Archimedean lattices.
 
 Furthermore, since we have concrete expressions for the IDS of the tilings $(4^4)$, $(3^6)$, $(6^3)$, $(3.6)^2$,
and $(3.12^2)$, it is straightforward to calculate their densities of states from our expressions below.
\end{remark}

The method of proof is based on Floquet theory and can be applied to
more general graphs with cofinite $\ZZ^d$-action and not only to Archimedean tilings.
Examples include periodic finite hopping range operators on the nearest neighbour graph on $\ZZ^d$ or on non-planar, $\ZZ^2$-periodic graphs.

\section{General results on the IDS and the lattice $\ZZ^d$}

\label{sec:2}

\subsection{Floquet theory and the IDS}

Even though the goal of this article will be to study the $11$ (planar) graphs based on Archimedean tesselations, the results of this subsection do not require planarity of the graph.
More precisely, let $G = (\V,\E)$ be an infinite graph with vertex set $\V$ and edge
set $\E$.  We assume that the vertex degree $\lvert v \rvert$ is
finite for every $v \in \V$.

We also assume that there is a cofinite $\ZZ^d$-action on $G$,
given by 
\[
 \ZZ^d \ni \gamma 
 \mapsto
 T_\gamma : \V \to \V.
\]
Let $Q \subset \V$ be a (finite) fundamental domain of this action.

The graph Laplacian $\Delta$, a self-adjoint bounded operator on
$\ell^2(\V)$, was defined in \eqref{eq:Lapl}. The (abstract)
Integrated Density of States (IDS) $N_G : \RR \to [0,1]$ of the
Laplacian $\Delta$ on $G$ is
\[
 N_G(E)
 :=
 \frac{1}{\lvert Q \rvert}
 \tr ( \chi_Q \chi_{(- \infty, E]}(\Delta) )
\]
where $\chi_{(- \infty, E]}(\Delta)$ denotes the spectral projector onto the interval $(- \infty, E]$.
Intuitively, the IDS counts the number of states of $\Delta$ below the energy level $E$ per unit volume~\cite{LPPV09}. 
This is also reflected by Formula~\eqref{eq:integral_formula} below.
The IDS is non-decreasing and right continuous.

In order to apply Floquet theory, we also define the $d$-dimensional torus $\TT^d = \RR^d \slash (2 \pi \ZZ)^d$ and for every $\theta \in \TT^d$ the $\lvert Q \rvert$-dimensional Hilbert space
\[
 \ell^2(\V)_\theta
 :=
 \left\{
  \tilde f : \V \to \CC 
  \mid
  \tilde f(T_\gamma v) = e^{ i \langle \theta, \gamma \rangle} \tilde f(v)\ 
  \text{for all}\
  \gamma \in \ZZ^d
 \right\}
\]
with inner product
\[
 \langle f, g \rangle_\theta
 :=
 \sum_{v \in Q}
 f(v) \overline{g(v)}.
\]
Furthermore, we define on $\ell^2(\V)_\theta$ the $\theta$-pseudoperiodic Laplacian $\Delta^\theta$ as
\[
 \Delta^\theta f(v) 
 :=
 f(v) - \frac{1}{\lvert v \rvert}
 \sum_{w \sim v} f(w),
 \]
 that is, $\Delta^\theta$ acts in the same way as $\Delta$ but on the
 different vector space $\ell^2(\V)_\theta$. Since this is a
 $|Q|$-dimensional vector space due to quasiperiodicity, the operator
 $\Delta^\theta$ can be viewed as a hermitian $\lvert Q \rvert \times
 \rvert Q \rvert$-matrix.
 In Sections~\ref{sec:3} and~\ref{sec:4} we will give concrete examples of this matrix for the case of the $11$ Archimedean lattice graphs.
 The map $\TT^d \ni \theta \mapsto \sigma(\Delta^\theta)$ is also called \emph{dispersion relation}.

The following theorem provides an integral expression for the IDS on $\ZZ^d$-periodic graphs, see also~\cite[Theorem~6.18]{Kuchment-16}.

\begin{theorem}
 \label{thm:IDS_as_integral_expression}
\begin{equation}
  \label{eq:integral_formula}
 N_G(E)
 =
 \frac{1}{(2 \pi)^d \lvert Q \rvert}
 \int_{\TT^d} \# 
 \left\{
  \text{Eigenvalues of $\Delta^\theta$ less or equal than $E$}
 \right\}
 \mathrm{d} \theta
 .
\end{equation}
\end{theorem}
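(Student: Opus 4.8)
The plan is to realize the Laplacian $\Delta$ on $\ell^2(\V)$ as a direct integral over the torus $\TT^d$ of the finite-dimensional fibre operators $\Delta^\theta$, and then read off the trace formula for the IDS fibrewise. First I would fix the fundamental domain $Q$ and use the $\ZZ^d$-action to identify $\V$ with $Q \times \ZZ^d$; concretely, choose once and for all coset representatives so that every $v \in \V$ is written uniquely as $T_\gamma v_0$ with $v_0 \in Q$ and $\gamma \in \ZZ^d$. Under this identification $\ell^2(\V) \cong \ell^2(\ZZ^d) \otimes \CC^{\lvert Q\rvert} \cong L^2(\TT^d) \otimes \CC^{\lvert Q\rvert}$, where the second isomorphism is the (inverse) discrete Fourier transform $\mathcal{U}\colon \ell^2(\ZZ^d)\to L^2(\TT^d,\,\drm\theta/(2\pi)^d)$, $(\mathcal{U}a)(\theta)=\sum_{\gamma} a_\gamma e^{-i\langle\theta,\gamma\rangle}$. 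The key structural input is that $\Delta$ commutes with the translations $T_\gamma$ (immediate from the definition \eqref{eq:Lapl}, since the $\ZZ^d$-action is by graph automorphisms and hence preserves adjacency and vertex degrees); therefore $\mathcal{U}$ conjugates $\Delta$ into a decomposable operator, i.e. a measurable field of operators $\{\Delta(\theta)\}_{\theta\in\TT^d}$ on $\CC^{\lvert Q\rvert}$.

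Next I would identify each fibre $\Delta(\theta)$ with the pseudoperiodic Laplacian $\Delta^\theta$ of the excerpt. This is the one genuinely computational point: one checks that the map sending $\tilde f\in\ell^2(\V)_\theta$ to its vector of values $(\tilde f(v_0))_{v_0\in Q}\in\CC^{\lvert Q\rvert}$ intertwines $\Delta^\theta$ with $\Delta(\theta)$, because the finite-hopping-range nature of $\Delta$ means $(\Delta^\theta \tilde f)(v_0)$ depends only on the values $\tilde f(T_\gamma v_0')$ for $v_0'\in Q$ and $\gamma$ in a bounded set, and the pseudoperiodicity relation $\tilde f(T_\gamma v)=e^{i\langle\theta,\gamma\rangle}\tilde f(v)$ turns those into the same linear combination of the $\tilde f(v_0')$ that appears in the Fourier-transformed matrix entries of $\Delta$. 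Equivalently, the vector spaces $\ell^2(\V)_\theta$ are precisely the fibres of the direct-integral decomposition. One should note that $\Delta$ is bounded, so the whole discussion is within bounded self-adjoint operators and no domain subtleties arise; the field $\theta\mapsto\Delta^\theta$ is a trigonometric-polynomial-valued (hence real-analytic) family of Hermitian matrices, so all spectral quantities below are measurable.

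Finally I would compute the trace. For a decomposable operator, any bounded Borel function of it is again decomposable with fibres $g(\Delta^\theta)$; applying this to $g=\chi_{(-\infty,E]}$ gives $\chi_{(-\infty,E]}(\Delta)=\int_{\TT^d}^{\oplus}\chi_{(-\infty,E]}(\Delta^\theta)\,\drm\theta/(2\pi)^d$. The restriction $\chi_Q$ corresponds, under the identification $\ell^2(\V)\cong L^2(\TT^d)\otimes\CC^{\lvert Q\rvert}$, simply to projection onto the $\CC^{\lvert Q\rvert}$-part over each fibre acting as the identity (more precisely, $\chi_Q$ picks out one period, and in the Fourier picture this is the operator $\frac{1}{(2\pi)^d}\int^{\oplus}\id\,\drm\theta$, whose fibrewise trace over the "cell" normalization yields the averaging). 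Hence
\[
 \tr\bigl(\chi_Q\,\chi_{(-\infty,E]}(\Delta)\bigr)
 =
 \frac{1}{(2\pi)^d}\int_{\TT^d}\tr_{\CC^{\lvert Q\rvert}}\bigl(\chi_{(-\infty,E]}(\Delta^\theta)\bigr)\,\drm\theta,
\]
and since $\Delta^\theta$ is a Hermitian $\lvert Q\rvert\times\lvert Q\rvert$ matrix, $\tr_{\CC^{\lvert Q\rvert}}(\chi_{(-\infty,E]}(\Delta^\theta))$ is exactly the number of eigenvalues of $\Delta^\theta$ (with multiplicity) that are $\le E$. Dividing by $\lvert Q\rvert$ gives \eqref{eq:integral_formula}. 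The main obstacle is the careful bookkeeping in the middle step: getting the identification of $\ell^2(\V)_\theta$ with the direct-integral fibre exactly right, including the correct placement of the phase factors $e^{i\langle\theta,\gamma\rangle}$ and the normalization of the torus measure, so that $\chi_Q$ really does transform into fibrewise $\id$ up to the stated constant; once that dictionary is set up correctly the rest is the standard direct-integral trace computation.
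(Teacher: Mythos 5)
Your proposal is correct and follows essentially the same route as the paper: a Fourier/Floquet decomposition of $\ell^2(\V)$ into a direct integral of the fibres $\ell^2(\V)_\theta$, identification of the fibre operators with $\Delta^\theta$, fibrewise application of the spectral projection, and the trace computation $\tr\bigl(\chi_Q\,\chi_{(-\infty,E]}(\Delta)\bigr)=\frac{1}{(2\pi)^d}\int_{\TT^d}\tr\bigl(\chi_{(-\infty,E]}(\Delta^\theta)\bigr)\,\drm\theta$. The only loose point is your parenthetical claim that $\chi_Q$ becomes the fibrewise identity (it is a rank-$\lvert Q\rvert$ projection on $\ell^2(\V)$, hence not decomposable as $\id$), but the trace identity you actually invoke is correct and is exactly what the paper verifies by computing $\langle\delta_v,\chi_{(-\infty,E]}(\Delta)\delta_v\rangle$ for $v\in Q$.
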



For the convenience of the reader we now give a proof of Theorem~\ref{thm:IDS_as_integral_expression} using Fourier theory on $\ell^2(\ZZ^d)$.

We have $\ell^2(\V) = \oplus_{v \in Q} \ell^2(\ZZ^d)$, where each summand
$\ell^2(\ZZ^d)$ represents the space $\ell^2(\{ T_\gamma v \mid \gamma \in
\ZZ^d \})$. Therefore, we can isometrically identify $f \in
\ell^2(\V)$ with $(f_v)_{v \in Q} \in \oplus_{v \in Q} \ell^2(\ZZ^d)$
by $f_v (\gamma) := f(T_\gamma v)$.  Applying the Fourier transform on
every component, we obtain
\[
 \hat f \in \oplus_{v \in Q} L^2(\TT^d),
 \quad
 \hat f := (\hat f_v)_{v \in Q},
 \quad
 \text{where}
 \quad
 \hat f_v(\theta)
 :=
 \sum_{\gamma \in \ZZ^d}
 \euler^{-i \langle \theta, \gamma \rangle} f_v(\gamma).
\]
From Fourier theory it follows that $f \mapsto \hat f$ is an isometry with the norms
\[
 \lVert f \rVert_{\ell^2(\V)}
 :=
 \sum_{v \in \V} \lvert f(v) \rvert^2
 =
 \sum_{v \in Q}
 \sum_{\gamma \in \ZZ^d}
\lvert (T_\gamma v) \rvert^2
\]
and
\[
 \lVert \hat f \rVert_{\oplus_{v \in Q} L^2(\TT^d)}
 :=
 \sum_{v \in Q} \lVert \hat f_v \rVert_{L^2(\TT^d)}^2
 \quad
 \text{where}
 \quad
 \lVert g \rVert_{L^2(\TT^d)}^2
 :=
 \frac{1}{(2 \pi)^d}
 \int_{\TT^d}
 \lvert g (\theta) \rvert^2 \drm \theta.
\]
We write $\tilde f_\theta(v) := \hat f_v(\theta)$ and extend $\tilde f_\theta(v)$ quasiperiodically to $\V$ via
\[
 \tilde f_\theta(T_\gamma v_0) 
 =
 \euler^{ i \langle \theta, \gamma \rangle} \tilde f_\theta(v_0) ,
 \quad
 \text{where $v_0 \in Q$}.
\]
We have isometrically identified the spaces
\[
 \ell^2(\V)
 \simeq
 \int_{\TT^d}^\oplus \ell^2(\V)_\theta \mathrm{d} \theta.
\]
\begin{lemma}
 For all $v \in \V$, all $f \in \ell^2(\V)$, and all $\theta \in \TT^d$, we have
 \begin{equation}
  \label{eq:lemma_1}
  \tilde f_\theta(v)
  =
  \sum_{\gamma \in \ZZ^d} \euler^{- i \langle \theta, \gamma \rangle} f(T_\gamma v).
 \end{equation}
\end{lemma}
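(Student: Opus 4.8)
The plan is to reduce \eqref{eq:lemma_1} to the definition of the quasiperiodic extension $\tilde f_\theta$ together with a reindexing of the Fourier series that defines it. First I would record the case $v\in Q$, which is immediate: there $f_v(\gamma)=f(T_\gamma v)$ by construction, so
\[
 \tilde f_\theta(v)=\hat f_v(\theta)=\sum_{\gamma\in\ZZ^d}\euler^{-i\langle\theta,\gamma\rangle}f_v(\gamma)=\sum_{\gamma\in\ZZ^d}\euler^{-i\langle\theta,\gamma\rangle}f(T_\gamma v),
\]
which is exactly \eqref{eq:lemma_1}. It then remains to propagate this identity from $Q$ to all of $\V$.

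For an arbitrary $v\in\V$ I would use that $Q$ is a fundamental domain of the (free) $\ZZ^d$-action to write $v=T_{\gamma_0}v_0$ with a unique pair $(\gamma_0,v_0)\in\ZZ^d\times Q$. Using that $\gamma\mapsto T_\gamma$ is a group action, i.e. $T_\gamma T_{\gamma_0}=T_{\gamma+\gamma_0}$, and substituting $\delta=\gamma+\gamma_0$, the right-hand side of \eqref{eq:lemma_1} becomes
\[
 \sum_{\gamma\in\ZZ^d}\euler^{-i\langle\theta,\gamma\rangle}f(T_{\gamma+\gamma_0}v_0)
 =\euler^{i\langle\theta,\gamma_0\rangle}\sum_{\delta\in\ZZ^d}\euler^{-i\langle\theta,\delta\rangle}f(T_\delta v_0)
 =\euler^{i\langle\theta,\gamma_0\rangle}\tilde f_\theta(v_0),
\]
where the last equality is the already established case $v_0\in Q$. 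By the defining relation of the quasiperiodic extension, $\euler^{i\langle\theta,\gamma_0\rangle}\tilde f_\theta(v_0)=\tilde f_\theta(T_{\gamma_0}v_0)=\tilde f_\theta(v)$, which gives \eqref{eq:lemma_1} for all $v\in\V$.

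The main point of care is the meaning of the infinite sums rather than any deep difficulty: for general $f\in\ell^2(\V)$ the series $\sum_{\gamma}\euler^{-i\langle\theta,\gamma\rangle}f_v(\gamma)$ is a Fourier series, convergent in $L^2(\TT^d)$ but not necessarily pointwise in $\theta$, so \eqref{eq:lemma_1} should be read as an identity in $L^2(\TT^d)$, i.e. for almost every $\theta$. To make the manipulations above fully rigorous I would first prove \eqref{eq:lemma_1} for finitely supported $f$, where all sums are finite and the reindexing is a literal term-by-term identity, and then extend to all $f\in\ell^2(\V)$ by density, noting that $f\mapsto(\tilde f_\theta)_{\theta}$ is (a piece of) the isometry $\ell^2(\V)\simeq\int_{\TT^d}^\oplus\ell^2(\V)_\theta\,\drm\theta$ constructed above, and that the shift $\gamma\mapsto\gamma+\gamma_0$ corresponds on the Fourier side to multiplication by the unimodular function $\theta\mapsto\euler^{i\langle\theta,\gamma_0\rangle}$, a unitary operation compatible with $L^2$-limits. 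I do not expect a genuine obstacle here; the lemma is essentially a consistency check that the componentwise Fourier transform over $Q$ glues correctly with the $\ZZ^d$-translation structure of the graph.
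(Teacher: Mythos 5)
Your argument is essentially the paper's proof: both write $v=T_{\gamma_0}v_0$ with $v_0\in Q$ and establish the identity by the reindexing $\gamma\mapsto\gamma-\gamma_0$ combined with the quasiperiodicity relation $\tilde f_\theta(T_{\gamma_0}v_0)=\euler^{i\langle\theta,\gamma_0\rangle}\tilde f_\theta(v_0)$; you merely run the chain of equalities from right to left instead of left to right. Your closing remark that for general $f\in\ell^2(\V)$ the identity should be read in $L^2(\TT^d)$ (a.e.\ in $\theta$), with the pointwise case reserved for finitely supported $f$ and the rest obtained by density, is a correct precision that the paper's formal computation leaves implicit.
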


\begin{proof}
 Write $v = T_{\gamma_0} v_0$ for $v_0 \in Q$.
 Then
\begin{align*}
 \tilde f_\theta(v)
 =
 \tilde f_\theta(T_{\gamma_0} v_0)
 &=
 \euler^{ i \langle \theta, \gamma_0 \rangle} 
 \tilde f_\theta(v_0)
 =
 \euler^{ i \langle \theta, \gamma_0 \rangle} 
 \hat f_{v_0}(\theta)
 =
 \sum_{\gamma \in \ZZ^d}
 \euler^{ i \langle \theta, \gamma_0 \rangle} 
 \euler^{-i \langle \theta, \gamma \rangle} 
 f(T_\gamma v_0)\\
 &=
 \sum_{\gamma \in \ZZ^d}
 \euler^{- i \langle \theta, \gamma - \gamma_0 \rangle} 
 f(T_{\gamma - \gamma_0} T_{\gamma_0} v_0)
 =
 \sum_{\gamma' \in \ZZ^d}
 \euler^{- i \langle \theta, \gamma' \rangle} 
 f(T_{\gamma'} v).
 \qedhere
\end{align*}

\end{proof}

Now, we can identify operators:
\begin{align}
 \label{eq:identify_operators}
 \Delta 
 \quad
 \cong
 \quad
 \int_{\theta \in \TT^d}^\oplus \Delta^\theta \mathrm{d} \theta
\end{align}
where
\[
 \Delta f(v) 
 =
 f(v) - \frac{1}{\lvert v \rvert} \sum_{w \sim v} f(w)
\]
and $\Delta f = g$, if and only if for all $v = T_{\gamma_0} v_0$ with $v_0 \in Q$
\begin{align*}
 \tilde g_\theta(v)
 & \stackrel{\text{\eqref{eq:lemma_1}}}{=}
 \sum_{\gamma \in \ZZ^d} e^{- i \langle \theta, \gamma \rangle} g(T_\gamma v)
 = 
 \sum_{\gamma \in \ZZ^d} e^{- i \langle \theta, \gamma \rangle} \left[ f(T_\gamma v) - \frac{1}{\lvert T_\gamma v \rvert} \sum_{w \sim T_\gamma v} f(w) \right]\\
 &  \stackrel{\text{\eqref{eq:lemma_1}}}{=}
 \tilde f_\theta(v)
 -
 \frac{1}{\lvert v \rvert}
 \sum_{\gamma \in \ZZ^d} \sum_{w' \sim v} e^{-i \langle \theta, \gamma \rangle} f(T_\gamma w')
  =
 \tilde f_\theta(v)
 -
 \frac{1}{\lvert v \rvert}
 \sum_{w \sim v} \sum_{\gamma \in \ZZ^d} e^{- i \langle \theta, \gamma \rangle} f(T_\gamma w)\\ 
 & \stackrel{\text{\eqref{eq:lemma_1}}}{=}
 \tilde f_\theta(v)
 -
 \sum_{w \sim v} 
 \frac{1}{\lvert v \rvert}
 \tilde f_\theta(w) 
 =
 \Delta^\theta \tilde f_\theta (v).
\end{align*}
Recall that $\Delta^\theta$ and $\Delta$ are formally defined via the same expressions, but they operate on different spaces: $\Delta$ operates on $\ell^2$-functions on $G$ while $\Delta^\theta$ operates on $\theta$-quasiperiodic functions. 
\par
From~\eqref{eq:identify_operators}, we conclude
\[
 \chi_{(- \infty, E]}(\Delta)
 \cong
 \int_{\TT^d}^\oplus  
 \chi_{(- \infty, E]}(\Delta^\theta)
 \drm \theta.
\]
and therefore
\[
 (\widetilde{ \chi_{(- \infty, E]}(\Delta) f})_\theta
 =
 \chi_{(- \infty, E]}(\Delta^\theta) \widetilde f_\theta
\]
Now, we are in a position to calculate the IDS. We have
\begin{align*}
 N_G(E)
 &=
 \frac{1}{\lvert Q \rvert} \operatorname{Tr} \left( \chi_Q \chi_{(- \infty, E]}(\Delta) \right)
 =
 \frac{1}{\lvert Q \rvert} 
 \sum_{v \in Q} 
 \langle \delta_v, \chi_{(- \infty, E]}(\Delta) \delta_v \rangle\\
 &= 
 \frac{1}{(2 \pi)^d \lvert Q \rvert} 
 \sum_{v \in Q} 
 \int_{\TT^d} 
 \langle ( \widetilde{\delta_v})_\theta , ( \widetilde{\chi_{(- \infty, E]}(\Delta^\theta) \delta_v})_\theta \rangle \mathrm{d} \theta\\
 &=
 \frac{1}{(2 \pi)^d \lvert Q \rvert} 
 \int_{\TT^d} 
  \sum_{v \in Q}
  \langle (\widetilde{\delta_v})_\theta , \chi_{(- \infty, E]}(\Delta^\theta) \widetilde{\delta_v})_\theta  \rangle \mathrm{d} \theta
\end{align*}
The operator, $\chi_{(- \infty, E]}(\Delta^\theta)$ is an orthogonal projection onto the finite-dimensional span of eigenfunctions of $\Delta^\theta$ on $\ell^2(Q)$ with eigenvalues smaller or equal than $E$ (i.e. a matrix).
Hence, the trace $\operatorname{Tr} \left( \chi_{(-\infty, E]}(\Delta^\theta) \right)$ is the number of eigenvalues of $\Delta^\theta$ less or equal than $E$.
This finishes the proof of Theorem~\ref{thm:IDS_as_integral_expression}.\qed
\par
\medskip
The next results are useful to show absence of finitely supported eigenfunctions for particular graphs.

\begin{theorem}
 \label{thm:equivalence}
 The following are equivalent:
 \begin{enumerate}[(i)]
  \item 
  $N_G$ is continuous at $E$.
  \item
  $\Delta$ has no eigenfunctions with eigenvalue $E$ of finite support.
  \item
  $\Delta$ has no $\ell^2(\V)$-eigenfunctions with eigenvalue $E$.
  \item
  There is $\theta \in \TT^d$ such that $E \not\in \sigma(\Delta^\theta)$.
 \end{enumerate}
\end{theorem}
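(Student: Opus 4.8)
The plan is to prove the chain of equivalences by showing (i) $\Leftrightarrow$ (iv), (ii) $\Rightarrow$ (iii) $\Rightarrow$ (i), and (i) $\Rightarrow$ (ii), or some convenient cycle covering all four. The implication (ii) $\Rightarrow$ (iii) is trivial since a finitely supported eigenfunction lies in $\ell^2(\V)$. The equivalence (i) $\Leftrightarrow$ (iv) will follow almost immediately from the integral formula \eqref{eq:integral_formula}: the function $\theta \mapsto \#\{\text{eigenvalues of } \Delta^\theta \le E\}$ is integer-valued, and the jump of $N_G$ at $E$ equals $\frac{1}{(2\pi)^d \lvert Q\rvert} \int_{\TT^d} \#\{\text{eigenvalues of } \Delta^\theta = E\}\, \drm\theta$. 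This jump vanishes iff $E$ is an eigenvalue of $\Delta^\theta$ for only a measure-zero set of $\theta$; since the eigenvalues of $\Delta^\theta$ depend real-analytically on $\theta$ (the entries of the matrix are trigonometric polynomials in $\theta$), the set $\{\theta : E \in \sigma(\Delta^\theta)\}$ is either all of $\TT^d$ or has measure zero. Hence the jump vanishes iff there exists $\theta$ with $E \notin \sigma(\Delta^\theta)$, which is exactly (iv). I would spell out the analyticity argument: $\det(\Delta^\theta - E\,\id)$ is a real-analytic (indeed trigonometric-polynomial) function on $\TT^d$, so its zero set is either everything or a proper real-analytic subvariety, which has Lebesgue measure zero.

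For (iii) $\Rightarrow$ (i), equivalently the contrapositive: if $N_G$ jumps at $E$, then by the above $E \in \sigma(\Delta^\theta)$ for all $\theta$, i.e. $\det(\Delta^\theta - E\,\id) \equiv 0$. I then want to produce an $\ell^2(\V)$-eigenfunction. The idea is to pick, for each $\theta$, an eigenvector $\psi_\theta \in \ell^2(\V)_\theta$ of $\Delta^\theta$ with eigenvalue $E$, chosen measurably in $\theta$ (possible since the eigenprojection onto the $E$-eigenspace has rank $\ge 1$ everywhere and is measurable), and then integrate: set $f = \int_{\TT^d}^\oplus \psi_\theta\, \drm\theta$ under the direct-integral identification $\ell^2(\V) \simeq \int_{\TT^d}^\oplus \ell^2(\V)_\theta\, \drm\theta$. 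Normalizing so that $\theta \mapsto \psi_\theta$ is bounded (e.g. $\lVert \psi_\theta\rVert_\theta = 1$) makes $f \in \ell^2(\V)$, $f \ne 0$, and $\Delta f = E f$ because $\Delta \cong \int^\oplus \Delta^\theta\, \drm\theta$ acts fibrewise. This is the standard ``flat band'' construction.

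For (i) $\Rightarrow$ (ii) — producing a \emph{finitely supported} eigenfunction from continuity failing — I would argue as follows. If $N_G$ jumps at $E$ then, as above, $\det(\Delta^\theta - E\,\id) \equiv 0$ on $\TT^d$. Pulling this back via $z_j = e^{i\theta_j}$, we get that $\det(\Delta^{z} - E\,\id)$, a Laurent polynomial in $z_1^{\pm 1}, \dots, z_d^{\pm 1}$, vanishes identically on the torus $\lvert z_j\rvert = 1$, hence vanishes as a polynomial. Therefore $\Delta^z - E\,\id$ is singular for every $z \in (\CC^*)^d$, so one can choose a single rational (hence Laurent-polynomial, after clearing denominators) vector-valued solution $z \mapsto \psi(z) \in \ker(\Delta^z - E\,\id)$, $\psi \not\equiv 0$, with Laurent-polynomial entries. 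The Fourier coefficients of the components of $\psi$ are then finitely supported sequences on $\ZZ^d$, and reassembling them into a function on $\V$ via the inverse of the identification in the proof of Theorem~\ref{thm:IDS_as_integral_expression} yields a finitely supported $f \ne 0$ with $\Delta f = E f$. The main obstacle is this last step: getting a genuinely \emph{Laurent-polynomial} (not merely rational or real-analytic) kernel vector, which requires a careful argument — e.g. using the adjugate matrix $\mathrm{adj}(\Delta^z - E\,\id)$, whose entries are Laurent polynomials and one of whose nonzero columns lies in the kernel when the rank of $\Delta^z - E\,\id$ is exactly $\lvert Q\rvert - 1$ generically; if the generic corank is larger one passes to higher adjugates or to a suitable minor. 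I would handle the corank issue by noting that it suffices to find \emph{one} nonzero finitely supported eigenfunction, and a minimal-order nonvanishing compound-adjugate column provides it. This closes the cycle (ii) $\Rightarrow$ (iii) $\Rightarrow$ (i) $\Rightarrow$ (ii), and together with (i) $\Leftrightarrow$ (iv) completes the proof.
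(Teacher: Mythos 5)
Your equivalence (i) $\Leftrightarrow$ (iv) is correct and is essentially the paper's own argument (the paper proves exactly this part itself and simply cites \cite{Kuchment-91} and \cite{LenzV-09} for the equivalence of (i), (ii), (iii)). The problem is that your remaining implications do not close into a cycle. Unravelling the contrapositives, what you actually establish is: (iii) $\Rightarrow$ (ii) (your ``trivial'' step: since finitely supported functions lie in $\ell^2(\V)$, absence of $\ell^2$-eigenfunctions implies absence of finitely supported ones --- this is the reverse of the implication you announce; (ii) $\Rightarrow$ (iii) is the \emph{hard} direction and is not trivial); (iii) $\Rightarrow$ (i) (your direct-integral construction of an $\ell^2$-eigenfunction from a flat band proves $\lnot$(i) $\Rightarrow$ $\lnot$(iii), as you correctly say); and (ii) $\Rightarrow$ (i) (your Laurent-polynomial/adjugate construction proves $\lnot$(i) $\Rightarrow$ $\lnot$(ii), which is the contrapositive of (ii) $\Rightarrow$ (i), not of (i) $\Rightarrow$ (ii) as you label it). Together with (i) $\Leftrightarrow$ (iv) this yields (iii) $\Rightarrow$ (ii) $\Rightarrow$ (i) $\Leftrightarrow$ (iv), but no implication whose \emph{conclusion} is (iii). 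Your argument therefore does not exclude the scenario that $N_G$ is continuous at $E$ while $\Delta$ nevertheless has an $\ell^2(\V)$-eigenfunction with eigenvalue $E$.

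The missing piece is $\lnot$(iii) $\Rightarrow$ $\lnot$(iv) (equivalently (iv) $\Rightarrow$ (iii)): if $0 \ne f \in \ell^2(\V)$ satisfies $\Delta f = E f$, then under the identification $\ell^2(\V) \simeq \int_{\TT^d}^{\oplus} \ell^2(\V)_\theta \, \drm\theta$ one has $\Delta^\theta \tilde f_\theta = E \tilde f_\theta$ for a.e.\ $\theta$, and $\tilde f_\theta \ne 0$ on a set of positive measure; hence $P_{\Delta^\theta}(E) = 0$ on a set of positive measure and, by the same real-analyticity argument you already use for (i) $\Leftrightarrow$ (iv), identically, so $E \in \sigma(\Delta^\theta)$ for every $\theta$. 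Adding these few lines closes the cycle. The constructive steps you do carry out (measurable selection of normalized eigenvectors to build the $\ell^2$-eigenfunction; the generic-corank/minor argument extracting a Laurent-polynomial kernel vector and hence a finitely supported eigenfunction) are sound in outline and amount to reproving the results of \cite{Kuchment-91} and \cite{LenzV-09} that the paper merely cites; but as written the logical structure of your proof is incomplete.
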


\begin{corollary}
 \label{cor:IDS_continuous}
 If there exist $\theta, \theta' \in \TT^d$ such that $\sigma(\Delta^{\theta}) \cap \sigma(\Delta^{\theta'}) = \emptyset$, then $N_G$ is continuous.
\end{corollary}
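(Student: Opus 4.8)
The plan is to derive Corollary~\ref{cor:IDS_continuous} as an immediate consequence of Theorem~\ref{thm:equivalence}, specifically using the implication $(iv)\Rightarrow(i)$. First I would observe that $N_G$ is continuous on $\RR$ if and only if it is continuous at every energy $E \in \RR$, so it suffices to fix an arbitrary $E$ and check that condition (iv) of Theorem~\ref{thm:equivalence} holds at this $E$. Given the hypothesis, we have $\theta, \theta' \in \TT^d$ with $\sigma(\Delta^\theta) \cap \sigma(\Delta^{\theta'}) = \emptyset$.

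The key step is then purely set-theoretic: since the two spectra are disjoint, no real number can lie in both, so in particular $E$ cannot lie in both $\sigma(\Delta^\theta)$ and $\sigma(\Delta^{\theta'})$. Hence at least one of the two, say $\sigma(\Delta^\theta)$, does not contain $E$; that is, there exists $\theta \in \TT^d$ with $E \notin \sigma(\Delta^\theta)$, which is exactly condition (iv). By the equivalence (i)$\Leftrightarrow$(iv) in Theorem~\ref{thm:equivalence}, $N_G$ is continuous at $E$. Since $E$ was arbitrary, $N_G$ is continuous everywhere, completing the proof.

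There is essentially no obstacle here — the corollary is a one-line deduction from the theorem, and the only thing worth spelling out is the quantifier bookkeeping (``for all $E$ there exists $\theta$'' versus the single pair $\theta, \theta'$ from the hypothesis). One could optionally remark that the same argument shows more: if finitely many Floquet parameters $\theta_1, \dots, \theta_k$ have $\bigcap_j \sigma(\Delta^{\theta_j}) = \emptyset$ then $N_G$ is still continuous, since for each $E$ at least one $\sigma(\Delta^{\theta_j})$ omits $E$. But for the stated corollary the two-parameter version suffices, and I would keep the proof to the two or three sentences above.

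\begin{proof}[Proof of Corollary~\ref{cor:IDS_continuous}]
 Fix an arbitrary $E \in \RR$. Since $\sigma(\Delta^\theta) \cap \sigma(\Delta^{\theta'}) = \emptyset$, the energy $E$ cannot belong to both spectra; hence $E \notin \sigma(\Delta^\theta)$ or $E \notin \sigma(\Delta^{\theta'})$. In either case there is some element of $\TT^d$ whose Floquet matrix does not have $E$ as an eigenvalue, so condition (iv) of Theorem~\ref{thm:equivalence} holds at $E$. By the equivalence of (i) and (iv) in that theorem, $N_G$ is continuous at $E$. As $E \in \RR$ was arbitrary, $N_G$ is continuous.
\end{proof}
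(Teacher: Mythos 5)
Your proof is correct and is exactly the intended deduction: the paper states the corollary without a separate proof, treating it as an immediate consequence of the equivalence (i)$\Leftrightarrow$(iv) in Theorem~\ref{thm:equivalence}, which is precisely the argument you spell out. The quantifier bookkeeping (for each $E$, at least one of the two disjoint spectra omits $E$) is the only content, and you have it right.
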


\begin{proof}[Proof of Theorem~\ref{thm:equivalence}]
 The equivalence of items (i), (ii) and (iii) is proved in~\cite{Kuchment-91}, see also~\cite[Corollary~2.3]{LenzV-09} for a proof in a more general setting.
 
 It remains to show the equivalence of (i) and (iv). 
 We fix $E \in \RR$ and calculate, using the dominated convergence theorem,
 \begin{align*}
  N_G(E) &- \lim_{E' \nearrow E} N_G(E')
  =
  \frac{1}{(2 \pi)^d \lvert Q \rvert}
  \int_{\TT^d} 
  \lim_{E' \nearrow E} 
  \#
  \left\{
  \text{Eigenvalues of $\Delta^\theta$ in $(E',E]$}
  \right\}
  \mathrm{d} \theta
  \\
  &=
  \frac{1}{(2 \pi)^d \lvert Q \rvert}
  \int_{\TT^d} 
%
  \left\{
  \text{Multiplicity of the eigenvalue $E$ of $\Delta^\theta$}
  \right\}
  \mathrm{d} \theta.
 \end{align*}
 This is non-zero if and only if the characteristic polynomial
 \[
  P_{\Delta^\theta}(E) 
  :=
  \det
  \left( \Delta^\theta - E \cdot \id \right)
 \]
 vanishes on a set $S \subset \TT^d$ of positive measure.
 Since $\theta \mapsto P_{\Delta^\theta}(E)$ is a real analytic function, this is equivalent to $P_{\Delta^\theta}(E)$ vanishing identically on $\TT^d$ (see~\cite[p.~67]{KrantzP-92}).
 Thus discontinuity of $N_G$ at $E$ is equivalent to $E \in \sigma(\Delta^\theta)$ for all $\theta \in \TT^d$.
\end{proof}
Let us note that the analytic nature of the band functions has been used in similar arguments before, see e.g.~\cite[Corollary~6.19]{Kuchment-16}.

\subsection{The lattice $\ZZ^d$}

As a first application of~\eqref{eq:integral_formula}, we calculate the IDS of $\Delta$ on the lattice $\ZZ^d$.
An elementary cell $Q$ consists of a single point. In the $2$-dimensional case, we can view $\ZZ^2$ as a tiling by unit squares (i.e. as the $(4^4)$ tiling)
with $\ZZ^2$ generated by translation vectors $\omega_1 = (1,0)$, and $\omega_2 = (0,1)$, cf.~ Figure~\ref{fig:ZZ^2}.
The $(1 \times 1)$-matrix corresponding to $\Delta^\theta$ has the entry (and hence the only eigenvalue)
\begin{equation}
 \label{eq:lambda_ZZ^d}
 \lambda_{\ZZ^d}^\theta
 =
  1 
  -
  \frac{1}{2d}
  (
  \euler^{- i \theta_1} + \euler^{i \theta_1}
  +
  \dots
  +
  \euler^{- i \theta_d} + \euler^{i \theta_d}
  )
 =
 1 - \frac{1}{d} \sum_{j = 1}^d \cos(\theta_j).
\end{equation}

Thus,~\eqref{eq:integral_formula} simplifies to
\begin{equation}
 \label{eq:IDS_ZZ^d}
 N_{\ZZ^d}(E)
 =
 \frac{1}{(2 \pi)^d}
 \operatorname{Vol} 
 \left\{
  \theta \in \TT^d \colon \frac{1}{d} \sum_{j = 1}^d \cos(\theta_j) \geq 1 - E
 \right\}.
\end{equation}

It is clear that $N_{\ZZ^d}(E)$ is supported in $[0,2]$. 
Moreover, by Corollary~\ref{cor:IDS_continuous}, the IDS on $\ZZ^d$ is continuous and $\Delta$ has no $\ell^2$-eigenfunctions since from~\eqref{eq:lambda_ZZ^d} we conclude $\lambda_{\ZZ^d}^\theta = 0 \neq 2 = \lambda_{\ZZ^d}^{\theta'}$ for $\theta = (0,\dots, 0)$, and $\theta' = (\pi, \dots, \pi)$.

In dimensions $d = 1, 2$, the following expressions for the IDS follow
directly from~\eqref{eq:IDS_ZZ^d}. In the case $d=2$, we derive the
expression by applying the substitution $t =\cos \theta_1$.

\begin{proposition}
In dimension $d = 1$, we have
\[
 N_{\ZZ}(E)
 =
 \frac{\chi_{[0,2]}(E)}{\pi}
 \cdot 
 \arccos(1 - E).
\]
In dimension $d = 2$, we have
\begin{eqnarray*}
 N_{\ZZ^2}(E) = N_{(4^4)}(E) &=& \frac{1}{(2\pi)^2} \operatorname{Vol} 
 \left\{ \theta \in \TT^2 \mid \cos \theta_1 + \cos \theta_2 \ge 2-2E \right\} \\
 &=&
 \begin{cases}
  0
   &\quad
   \text{if}\ E < 0,\\
  \frac{1}{\pi^2}\int_{1 - 2 E}^1 \frac{\arccos(2 - 2 E - t)}{\sqrt{1-t^2}} \mathrm{d} t
   &\quad
   \text{if}\ 0 \leq E \leq 1,\\ 
  1 - N_{(4^4)}(2 - E)
   &\quad
   \text{if}\ 1 < E \leq 2,\\
  1
   &\quad
   \text{if}\ 2 < E.
 \end{cases}
\end{eqnarray*}
\end{proposition}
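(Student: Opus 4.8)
The plan is to evaluate the volumes appearing in~\eqref{eq:IDS_ZZ^d} directly: for $d=1$ this is immediate, and for $d=2$ one integrates out $\theta_2$ by Fubini, reduces to a one-dimensional integral to which the substitution $t=\cos\theta_1$ applies, and treats the extreme ranges $E<0$, $E>2$ together with the reflection symmetry $E\mapsto 2-E$ separately. For $d=1$, \eqref{eq:IDS_ZZ^d} reads $N_{\ZZ}(E)=\frac{1}{2\pi}\vol\{\theta\in\TT:\cos\theta\ge 1-E\}$. For $E\in[0,2]$ the superlevel set is the symmetric arc $[-\arccos(1-E),\arccos(1-E)]$ of length $2\arccos(1-E)$ (it is empty for $E<0$ and all of $\TT$ for $E>2$), so dividing by $2\pi$ agrees with the claimed formula on $[0,2]$, the conventions $\arccos(1)=0$ and $\arccos(-1)=\pi$ handling the endpoints.

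For $d=2$ and $0\le E\le 1$, Fubini gives
\[
\vol\{\cos\theta_1+\cos\theta_2\ge 2-2E\}=\int_{-\pi}^{\pi}\vol\{\theta_2\in\TT:\cos\theta_2\ge 2-2E-\cos\theta_1\}\,\drm\theta_1 .
\]
Writing $c=c(\theta_1)=2-2E-\cos\theta_1$, the hypothesis $E\le 1$ forces $c\ge 1-2E\ge -1$, so the degenerate case $c<-1$ (full circle) never arises, and the inner slice has length $2\arccos(c)$ when $c\le 1$, i.e. when $\cos\theta_1\ge 1-2E$, and length $0$ otherwise. As $1-2E\in[-1,1]$, only $\theta_1\in[-\arccos(1-2E),\arccos(1-2E)]$ contribute, and by evenness of the integrand the volume equals $4\int_0^{\arccos(1-2E)}\arccos(2-2E-\cos\theta_1)\,\drm\theta_1$. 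The substitution $t=\cos\theta_1$, $\drm t=-\sqrt{1-t^2}\,\drm\theta_1$, converts this into $4\int_{1-2E}^{1}\frac{\arccos(2-2E-t)}{\sqrt{1-t^2}}\,\drm t$, and dividing by $(2\pi)^2$ yields the stated formula.

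For $E<0$ the set $\{\cos\theta_1+\cos\theta_2\ge 2-2E\}$ is empty (so $N_{(4^4)}(E)=0$) and for $E>2$ it is all of $\TT^2$ (so $N_{(4^4)}(E)=1$). For $1<E\le 2$ the measure-preserving translation $\theta\mapsto\theta+(\pi,\pi)$ on $\TT^2$ sends $\cos\theta_j\mapsto-\cos\theta_j$, whence
\[
\vol\{\cos\theta_1+\cos\theta_2\ge 2-2E\}=\vol\{\cos\theta_1+\cos\theta_2\le 2E-2\}=(2\pi)^2-\vol\{\cos\theta_1+\cos\theta_2\ge 2-2(2-E)\},
\]
the level set $\{\cos\theta_1+\cos\theta_2=2E-2\}$ having measure zero; dividing by $(2\pi)^2$ gives $N_{(4^4)}(E)=1-N_{(4^4)}(2-E)$, and $2-E\in[0,1)$ is covered by the previous case. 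The only genuinely delicate point is the bookkeeping for $0\le E\le 1$: one must check that $2-2E-\cos\theta_1\ge -1$ throughout (so the inner slice is never the whole circle), identify exactly the subinterval of $\theta_1$ that contributes, and keep track of the factor two coming from evenness in the change of variables; everything else is routine.
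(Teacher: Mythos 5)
Your proposal is correct and follows exactly the route the paper intends: the paper derives the proposition ``directly from'' the volume formula \eqref{eq:IDS_ZZ^d} via the substitution $t=\cos\theta_1$, and your Fubini reduction, change of variables, and use of the symmetry $\theta\mapsto\theta+(\pi,\pi)$ for $1<E\le 2$ are precisely the omitted details. All the bookkeeping (the bound $c\ge -1$, the contributing range of $\theta_1$, and the factor from evenness) checks out.
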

\begin{figure}[ht]
 \begin{subfigure}{.45\textwidth}
 \begin{tikzpicture}
  \draw[fill = black] (0,0) circle (2pt);
  \draw[fill = white] (0,1) circle (2pt);
  \draw[fill = white] (0,-1) circle (2pt);
  \draw[fill = white] (1,0) circle (2pt);
  \draw[fill = white] (-1,0) circle (2pt);  
  
  \draw[thick] (0,.2) -- (0,.8);
  \draw[thick] (0,-.2) -- (0,-.8);
  \draw[thick] (.2,0) -- (.8,0);
  \draw[thick] (-.2,0) -- (-.8,0);

  \draw (.2,.2) node {$a$};
  \draw[fill = white] (1.3,.2) node {$(a + \omega_1)$};
  \draw[fill = white] (-1.3,.2) node {$(a - \omega_1)$};
  \draw[fill = white] (0,1.2) node {$(a + \omega_2)$};
  \draw[fill = white] (0,-1.2) node {$(a - \omega_2)$};  
 \end{tikzpicture}
\end{subfigure}
\begin{subfigure}{.45\textwidth}
 \begin{tikzpicture}
  \clip ({-.5*\linewidth},-2.5) rectangle ({.5*\linewidth},2.5);
  \draw (0,0) node {\includegraphics[width=\linewidth]{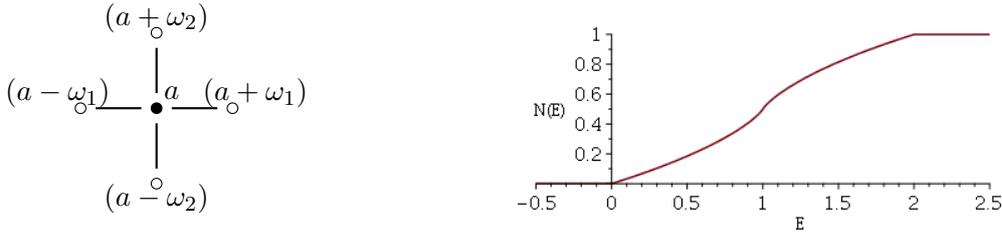}};
 \end{tikzpicture}
\end{subfigure}
\caption{Fundamental domain of the graph $\ZZ^2$ (left) and its IDS (right)}
\label{fig:ZZ^2}
\end{figure}


\section{Concrete integral expressions for the IDS of some Archimedean tilings}
\label{sec:3}

In this section, we present concrete integral expressions of the IDS of the Archimedean tilings with vertex types $(3^6)$, $(6^3)$, $(3.6)^2$, and $(3.12^2)$.
We will denote the corresponding IDS by $N_{(3^6)}$, etc.

We will see that only the last two tilings admit finitely supported eigenfunctions.

\subsection{IDS of the $(3^6)$ tiling (triangular lattice)}

A fundamental domain consists of a single point with translation vectors $\omega_1 = (1,0)$, $\omega_2 = (\cos(\pi/3),\sin(\pi/3))$, cf. Figure~\ref{fig:(3^6)}. 
The corresponding matrix $\Delta^\theta$ has the only entry and hence the only eigenvalue 
\begin{align*}
 \lambda_{(3^6)}^\theta
 &=
 \left( 1 - \frac{1}{6} (e^{i \theta_1} + e^{- i \theta_1} + e^{i \theta_2} + e^{- i \theta_2} + e^{i (\theta_2 - \theta_1)} + e^{- i (\theta_2 - \theta_1)}) \right)\\
 &=
 \left( 1 - \frac{1}{3} (\cos(\theta_1) + \cos(\theta_2) + \cos(\theta_2 - \theta_1) ) \right).
\end{align*}
Therefore,
\begin{equation}
 \label{eq:volume_formula_(3^6)}
 N_{(3^6)}(E)
 =
 \frac{1}{(2 \pi)^2}
 \operatorname{Vol}
 \left\{
  \theta \in \TT^2 
  \colon
  \frac{1}{3} (\cos(\theta_1) + \cos(\theta_2) + \cos(\theta_2 - \theta_1)) \geq 1 - E
 \right\}
\end{equation}

\begin{figure}[ht]

\begin{subfigure}{.45\textwidth}
 \begin{tikzpicture}
  \draw[fill = black] (0,0) circle (2pt);
  \draw[fill = white] (0:1) circle (2pt);
  \draw[fill = white] (60:1) circle (2pt);
  \draw[fill = white] (120:1) circle (2pt);  
  \draw[fill = white] (180:1) circle (2pt);
  \draw[fill = white] (240:1) circle (2pt);
  \draw[fill = white] (300:1) circle (2pt);
  \draw[thick] (0:.8)   -- (0:.2);
  \draw[thick] (60:.8)  -- (60:.2);
  \draw[thick] (120:.8) -- (120:.2);  
  \draw[thick] (180:.8) -- (180:.2);
  \draw[thick] (240:.8) -- (240:.2);
  \draw[thick] (300:.8) -- (300:.2);

  \draw (0,.35) node {$a$};
  \draw (0:1.9) node {$(a + \omega_1)$};
  \draw (55:1.8) node {$(a + \omega_2)$};
  \draw (125:1.8) node {$(a - \omega_1 + \omega_2)$};  
  \draw (180:1.9) node {$(a - \omega_1)$};
  \draw (235:1.8) node {$(a - \omega_2)$};
  \draw (305:1.8) node {$(a + \omega_1 - \omega_2)$};

\end{tikzpicture}
\end{subfigure}
%
%
\begin{subfigure}{.45\textwidth}
 \begin{tikzpicture}
  \clip ({-.5*\linewidth},-2.5) rectangle ({.5*\linewidth},2.5);
  \draw (0,0) node {\includegraphics[width=\linewidth]{3_6.png}};
 \end{tikzpicture}
\end{subfigure}

\caption{Fundamental domain of the $(3^6)$ tiling (left) and its IDS (right)}
\label{fig:(3^6)}
\end{figure}
Since the expression 
\[
 \operatorname{Vol}
 \left\{
 \theta \in \TT^2
 \colon
 \cos(\theta_1) + \cos(\theta_2)) + \cos(\theta_2 - \theta_1) 
 \geq 
 L
 \right\}
\]
will be relevant later on, we shall discuss it in more detail here.
By periodicity, we can consider $\TT^2$ as $(-\pi, \pi)^2$ (the boundary is a measure zero set and does not play any role).
Using the change of variables $u := (\theta_1 + \theta_2)/\sqrt{2}$, $v = (\theta_1 - \theta_2)/\sqrt{2}$ we find
\begin{align*}
 \cos (\theta_1) + \cos (\theta_2) + \cos (\theta_1 - \theta_2)
 &=
 \cos \left( \frac{u + v}{\sqrt{2}} \right)
 +
 \cos \left( \frac{u - v}{\sqrt{2}} \right)
 +
 \cos \left( \sqrt{2} u \right)\\
 &=
 2\cos \left( \frac{u}{\sqrt{2}} \right)
 \cos \left( \frac{v}{\sqrt{2}} \right)
 +
 2 \cos^2 \left( \frac{u}{\sqrt{2}} \right)
 - 
 1.
\end{align*}
The new variables $(u,v)$ identify $\TT^2$ with the domain $\largediamond := \{ (u,v) \in \RR^2 \colon \lvert u \rvert + \lvert v \rvert < \pi \sqrt{2} \}$.
 
  \begin{lemma}
 \label{lem:F}
 The function $F : \largediamond \to \RR$, defined by 
 \[
 F(u,v) = 2\cos \left( \frac{u}{\sqrt{2}} \right)
 \cos \left( \frac{v}{\sqrt{2}} \right)
 +
 2 \cos^2 \left( \frac{u}{\sqrt{2}} \right)
 -
 1
 \]
 has the following properties:
 \begin{enumerate}[i)]
   \item The global maximum of $F$ is at $(u,v) = (0,0)$, where $F(u,v) = 3$.
   \item The two global minima of $F$ are at $(u,v) = (\pm 2/3 \cdot \sqrt{2} \pi,0)$, where $F(u,v) = - 3 /2$.
   \item $F \geq -1$ in the ``hexagon'' $\hex := \{ (u,v) \in \largediamond \colon \lvert u \rvert \leq \pi / \sqrt{2} \}$ and $F \leq -1$ in the complemetary set $\tri_- \cup \tri_+$ which consists of two rectangular triangles.
   \item We have
   \begin{align*}
    \operatorname{Vol} \{ F \geq L \}
    =
    \begin{cases}
     (2 \pi)^2
      &\quad
      \text{if}\ L < -3/2,\\
      (2 \pi^2) - 8
      \int_{ - 1 /2 -\frac{1}{2} \sqrt{ 2 L + 3}}^{ - 1 /2 +\frac{1}{2} \sqrt{ 2 L + 3}}
      \frac{\arccos \left( \frac{L + 1}{2 t} - t \right)}{\sqrt{1 - t^2}}
      \mathrm{d}t
      &\quad
      \text{if}\ -3/2 \leq L < -1,\\
      8 \int_{- \frac{1}{2} + \frac{1}{2} \sqrt{2 L + 3}}^1
      \frac{\arccos \left( \frac{L + 1}{2 t} - t \right)}{\sqrt{1 - t^2}}
      \mathrm{d}t
      &\quad
      \text{if}\ -1 \leq L < 3,\\
     0
      &\quad
      \text{if}\ 3 \leq L.
    \end{cases}
   \end{align*}
  \end{enumerate} 
\end{lemma}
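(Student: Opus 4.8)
The plan is to pass to the auxiliary variables $x := u/\sqrt{2}$, $y := v/\sqrt{2}$, in which $\largediamond$ becomes the tilted square $\{(x,y) : |x|+|y| < \pi\}$, the hexagon $\hex$ becomes $\{|x| \le \pi/2\}\cap\{|x|+|y|<\pi\}$, the two points $\bigl(\pm\tfrac{2}{3}\sqrt{2}\,\pi,0\bigr)$ become $\bigl(\pm\tfrac{2}{3}\pi,0\bigr)$, and $F$ takes the compact form
\[
 F + 1 \;=\; 2\cos x\,\bigl(\cos x + \cos y\bigr).
\]
The one structural fact that drives everything is that, by the product formula $\cos x + \cos y = 2\cos\tfrac{x+y}{2}\cos\tfrac{x-y}{2}$ together with the bounds $|x\pm y| < \pi$ which hold on the open square, one has $\cos x + \cos y > 0$ on all of $\largediamond$. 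Hence the sign of $F+1$ equals the sign of $\cos x$, and $\cos x \ge 0$ precisely for $|x| \le \pi/2$; this proves iii) at once, with $F = -1$ exactly on $\{|u| = \pi/\sqrt{2}\}$, and shows that $\largediamond\setminus\hex$ splits into the two right triangles $\tri_\pm$ cut off by the lines $u = \pm\pi/\sqrt{2}$. Evaluating the same expression on $\partial\largediamond$, where $\cos y = \cos(\pi - |x|) = -\cos x$, also yields $F \equiv -1$ on the entire boundary, which will simplify the extremum discussion.

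For i) and ii) I would carry out a routine critical point analysis in $(x,y)$. From $F_y = -2\cos x\sin y = 0$ one gets either $\cos x = 0$, which forces the point onto $\partial\largediamond$, or $\sin y = 0$, i.e. $y = 0$; and with $y = 0$ the equation $F_x = -2\sin x(1+2\cos x) = 0$ yields exactly the interior critical points $(0,0)$ with $F = 3$ and $\bigl(\pm\tfrac{2}{3}\pi,0\bigr)$ with $F = -\tfrac{3}{2}$. Since $F$ is continuous on the compact set $\overline{\largediamond}$ and is identically $-1$ on the boundary, comparing these finitely many values gives the global maximum $3$ and the global minimum $-\tfrac{3}{2}$ at the claimed points.

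For iv) the plan is a Cavalieri (slicing) computation in the $(x,y)$-picture, keeping track of the Jacobian $\drm u\,\drm v = 2\,\drm x\,\drm y$ (consistently, $\vol(\largediamond) = (2\pi)^2$). Fixing $x$ and setting $t := \cos x$, the condition $F \ge L$ reads $\cos y \ge \tfrac{L+1}{2t} - t$ if $t > 0$ and $\cos y \le \tfrac{L+1}{2t} - t$ if $t < 0$, so the admissible $y$ form an interval of half-length $\arccos\!\bigl(\tfrac{L+1}{2t}-t\bigr)$, which then must be intersected with the diamond constraint $|y| < \pi - |x|$. A short check shows this intersection never actually cuts anything off: for $L \ge -1$ the $\arccos$-bound is the binding one throughout the hexagon (equivalently $\tfrac{L+1}{2t}\ge 0$ there), while for $-\tfrac{3}{2}\le L < -1$ the sublevel set $\{F < L\}$ is contained in $\tri_-\cup\tri_+$, where the $\arccos$-bound binds for the same reason. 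The finite endpoints of the $t$-range come from the non-emptiness condition $\tfrac{L+1}{2t}-t \le 1$, i.e. from the roots $-\tfrac12 \pm \tfrac12\sqrt{2L+3}$ of $2t^2 + 2t - (L+1) = 0$; together with the endpoint $t = 1$ (coming from $x = 0$) these are exactly the integration limits in the statement. Substituting $t = \cos x$ (so that $\drm x = -\drm t/\sqrt{1-t^2}$) and collecting the factor $2$ from the Jacobian, the factor $2$ from the two halves of the $y$-interval, and the factor $2$ from the symmetry $x \mapsto -x$ (or, in the range $-\tfrac32\le L<-1$, from the two triangles $\tri_\pm$) into the overall constant $8$, and — in that last range — writing $\vol\{F \ge L\} = (2\pi)^2 - \vol\{F < L\}$ and integrating over the two triangles, one lands exactly on the four-case formula.

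Once the shape $F + 1 = 2\cos x(\cos x + \cos y)$ is available, the trigonometric reductions and the integral bookkeeping in iv) are routine. The step that genuinely needs care — and which I expect to be the \emph{main obstacle} — is the case distinction in iv): for each $L$ and each sign of $t = \cos x$ one must decide whether the $\arccos$-bound or the diamond cutoff $|y| < \pi - |x|$ is active, since an error there changes the domain of integration and hence the final formula. As indicated above, all these decisions reduce to the single inequality $\tfrac{L+1}{2t}\ge 0$ together with the positivity of $\cos x + \cos y$ on $\largediamond$.
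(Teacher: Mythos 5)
Your proposal is correct and, for the substantive part iv), follows essentially the same route as the paper: both slice the region, express the admissible $v$-range as $2\arccos\bigl(\tfrac{L+1}{2t}-t\bigr)$, obtain the integration limits from the roots of $2t^2+2t-(L+1)=0$, and substitute $t=\cos(u/\sqrt{2})$; your factorization $F+1=2\cos x(\cos x+\cos y)$ with $\cos x+\cos y>0$ on $\largediamond$ is a clean way to dispatch i)--iii), which the paper declares straightforward, and it also justifies explicitly that the diamond cutoff never binds --- a point the paper leaves implicit. One small imprecision: $\cos x=0$ alone does not place a critical point on $\partial\largediamond$; you need to combine it with $F_x=-2\sin x\cos y=0$ to force $\cos y=0$ and hence $|x|+|y|=\pi$.
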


\begin{remark}
 It is known that additional symmetries (e.g. a rotational symmetries) of the underlying graph are reflected in symmetries of the dispersion relation.
 More precisely, in an appropriate basis, the function $F$ is symmetric under rotations by $\pi/3$ around its maximum and symmetric under rotations by $2 \pi/3$ around its minima.
 This corresponds to symmetries of the underlying graph, see~\cite[Lemma~2.1]{BerkolaikoC-18} for details.
\end{remark}

\begin{proof}
 It is straightforward to check i) to iii) and using symmetry and monotonicity considerations
  \begin{align*}
    \operatorname{Vol} \{ F \geq L \}
    =
    \begin{cases}
     (2 \pi)^2
      &\quad
      \text{if}\ L < -3/2,\\
     (2 \pi)^2 - 2 \cdot \vol \left\{ (u,v) \in \tri_+ \colon  F (u,v) \geq L \right\}
      &\quad
      \text{if}\ -3/2 \leq L < -1,\\
     4 \cdot \vol \left\{ (u,v) \in \hex \cap \RR_+^2 \colon F(u,v) \geq L \right\}
      &\quad
      \text{if}\ -1 \leq L < 3,\\
     0
     &\quad
     \text{if}\ 3 \leq L.
    \end{cases}
   \end{align*}
%
%
\begin{figure}
\begin{tikzpicture}
 \draw (0,0) node {\includegraphics[scale=.4]{Fuv_diamond.eps}};
 \draw (-3.5,1) node {$\tri_-$};
 \draw (3.5,1) node {$\tri_+$};
 \draw (-1.75,3) node {$\hex$};
\end{tikzpicture}
\caption{Level sets of the function $F \colon \largediamond \to \RR$ and the domains $\hex$ and $\tri_{\pm}$}
 \end{figure}

 To calculate the area within $\tri_+$ we consider the upper half (i.e. $v \geq 0$) of $\tri_+$ (i.e. $u \geq \pi/\sqrt{2}$).
 Therein, $\cos( u/\sqrt{2}) < 0$, whence $F(u,v) \leq L$ is equivalent to
 \begin{equation}
 \label{eq:triangle_1}
  \cos \left( \frac{v}{\sqrt{2}} \right)
  \geq
  \frac{L + 1}{2 \cos \left( \frac{u}{\sqrt{2}} \right)} 
  - 
  \cos \left( \frac{u}{\sqrt{2}} \right).
 \end{equation}
 and we found that the area is the area under a graph.
 Since $\cos( v/ \sqrt{2} ) \leq 1$, we conclude that~\eqref{eq:triangle_1} can only be fulfilled if $u$ is in the interval between the two solutions of $\cos(u/\sqrt{2}) = (L + 1)/2 - \cos^2(u/\sqrt{2})$ in $(\pi/\sqrt{2}, \sqrt{2} \pi)$, i.e. for
 \[
  u 
  \in
  (u_-,u_+)
  :=
  \left(
   \sqrt{2} 
   \arccos
    \left( 
     - 1 /2 +\frac{1}{2} \sqrt{ 2 L + 3}
    \right),
   \sqrt{2} 
   \arccos
    \left( 
     - 1 /2 -\frac{1}{2} \sqrt{ 2 L + 3}
     \right)
  \right).
 \]
 Together with~\eqref{eq:triangle_1}, we find
 \begin{align*}
  &\vol \left\{ (u,v) \in \tri_+ \colon  F (u,v) \geq L \right\} \\
  =
  &2 \int_{u_-}^{u_+}
   \sqrt{2} 
   \arccos
    \left(
     \frac{L + 1}{2 \cos \left( \frac{u}{\sqrt{2}} \right)} 
      - 
      \cos \left( \frac{u}{\sqrt{2}} \right)
    \right)
    \mathrm{d} u \\
   =
   &\int_{ - 1 /2 -\frac{1}{2} \sqrt{ 2 L + 3}}^{ - 1 /2 +\frac{1}{2} \sqrt{ 2 L + 3}}
    4 \frac{ 
    \arccos 
     \left(
      \frac{L + 1}{2 t} - t
     \right)
    }
    {\sqrt{1 - t^2}}
    \mathrm{d}t
 \end{align*}
 where in the last step, we used the transformation $u = \sqrt{2}\arccos(t)$.
 \par
 As for the area in the hexagon $H$, by an analogous argument,
 \begin{align*}
  &\vol \left\{ (u,v) \in \hex \cap \RR_+^2 \colon F(u,v) \geq L \right\} =\\
  =
  &\int_0^{u_0} 
  \sqrt{2} 
  \arccos
   \left(
    \frac{L + 1}{2 \cos (u / \sqrt{2})}
    -
    \cos ( u/\sqrt{2})
   \right)
   \mathrm{d}u
  =
  2 
  \int_{- \frac{1}{2} + \frac{1}{2} \sqrt{2 L + 3}}^1
  \frac{ 
    \arccos 
     \left(
      \frac{L + 1}{2 t} - t
     \right)
    }
    {\sqrt{1 - t^2}}
    \mathrm{d}t
 \end{align*}
 where 
 \[
  u_0 
  = 
  \sqrt{2} \arccos 
   \left(
    - \frac{1}{2} + \frac{1}{2} \sqrt{2 L + 3} 
   \right)
 \]
 is the solution of $\cos(u/\sqrt{2}) = (L + 1)/2 - \cos^2(u/\sqrt{2})$ in $[0, \pi/\sqrt{2}]$.
\end{proof}

Combining Lemma~\ref{lem:F} and~\eqref{eq:volume_formula_(3^6)}, we find:

\begin{proposition}
\begin{align*}
 N_{(3^6)}(E)
 &=
 \frac{1}{(2 \pi)^2}
 \operatorname{Vol}
 \left\{
  (u,v) \in \largediamond
  \colon
  F(u,v)
  \geq
  3 - 3 E
 \right\}
 \\
 &=
 \begin{cases}
  0
   &\quad 
   \text{if}\ E < 0,\\
   \frac{2}{\pi^2}
   \int_{- \frac{1}{2} + \frac{1}{2} \sqrt{9 - 6 E}}^1
   \frac{ 
    \arccos 
     \left(
      \frac{4 - 3 E}{2 t} - t
     \right)
    }
    {\sqrt{1 - t^2}}
    \mathrm{d}t
   &\quad 
   \text{if}\ 0 \leq E < \frac{4}{3},\\
   1 
   -
   \frac{2}{\pi^2}
   \int_{ - 1 /2 -\frac{1}{2} \sqrt{9 - 6 E}}^{ - 1 /2 +\frac{1}{2} \sqrt{9 - 6 E}}
    \frac{ 
    \arccos 
     \left(
      \frac{4 - 3 E}{2 t} - t
     \right)
    }
    {\sqrt{1 - t^2}}
    \mathrm{d}t
   &\quad 
   \text{if}\ \frac{4}{3} \leq E < \frac{3}{2},\\
   1
    &\quad 
    \text{if}\ 3/2 < E.\\
 \end{cases} 
\end{align*}
In particular, $N_{(3^6)}$ is continuous and there are no $\ell^2$-eigenfunctions.
\end{proposition}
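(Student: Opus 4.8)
The plan is to combine the volume formula~\eqref{eq:volume_formula_(3^6)} with Lemma~\ref{lem:F}. First I would observe that the linear change of variables $(\theta_1,\theta_2)\mapsto(u,v)=\bigl((\theta_1+\theta_2)/\sqrt2,\,(\theta_1-\theta_2)/\sqrt2\bigr)$ is an orthogonal transformation, hence measure preserving, and maps $\TT^2\cong(-\pi,\pi)^2$ bijectively onto $\largediamond$ up to a set of measure zero; under it the function $\cos\theta_1+\cos\theta_2+\cos(\theta_2-\theta_1)$ becomes exactly $F(u,v)$, as was computed just before Lemma~\ref{lem:F}. Consequently the defining condition $\tfrac{1}{3}\bigl(\cos\theta_1+\cos\theta_2+\cos(\theta_2-\theta_1)\bigr)\ge 1-E$ turns into $F(u,v)\ge 3-3E$, so that
\[
 N_{(3^6)}(E)=\frac{1}{(2\pi)^2}\operatorname{Vol}\{(u,v)\in\largediamond\colon F(u,v)\ge 3-3E\}.
\]

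The second step is to substitute $L=3-3E$ into Lemma~\ref{lem:F}~iv). The four regimes $L<-3/2$, $-3/2\le L<-1$, $-1\le L<3$, $3\le L$ correspond respectively to $E>3/2$, $4/3<E\le 3/2$, $0<E\le 4/3$, $E\le 0$; since the IDS is right continuous and agrees with the volume formula for all but a null set of energies, the boundary values may be placed as in the statement, and one checks directly (for instance via the substitution $t=\cos\phi$) that the two integral branches agree at $E=4/3$. In each nontrivial regime one replaces $2L+3$ by $9-6E$ and $L+1$ by $4-3E$ and divides by $(2\pi)^2$; this converts the prefactor $8$ of Lemma~\ref{lem:F}~iv) into $8/(2\pi)^2=2/\pi^2$ and reproduces the two displayed integrals, with the middle regime additionally contributing the constant $1$ coming from the full-torus term $(2\pi)^2$.

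Finally, for continuity and the absence of $\ell^2$-eigenfunctions I would appeal to Corollary~\ref{cor:IDS_continuous} and Theorem~\ref{thm:equivalence}. Here $Q$ consists of a single vertex, so $\Delta^\theta$ is the $1\times1$ matrix with unique eigenvalue $\lambda_{(3^6)}^\theta=1-\tfrac{1}{3}F(u,v)$. By parts i) and ii) of Lemma~\ref{lem:F}, $F$ takes the value $3$ at $\theta=(0,0)$ and the value $-3/2$ at its minima, hence there exist $\theta,\theta'\in\TT^2$ with $\sigma(\Delta^\theta)=\{0\}$ and $\sigma(\Delta^{\theta'})=\{3/2\}$, which are disjoint. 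Corollary~\ref{cor:IDS_continuous} then gives continuity of $N_{(3^6)}$, and the implication (i)$\Rightarrow$(iii) of Theorem~\ref{thm:equivalence}, applied at every $E$, yields that $\Delta$ has no $\ell^2(\V)$-eigenfunctions. The only slightly delicate points are routine bookkeeping: verifying that the change of variables is a measure-preserving bijection onto $\largediamond$, and carefully matching the four case boundaries after the substitution $L=3-3E$; there is no real obstacle, since Lemma~\ref{lem:F} already carries out all of the analytic work.
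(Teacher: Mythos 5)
Your proposal is correct and follows essentially the same route as the paper, which simply combines the volume formula~\eqref{eq:volume_formula_(3^6)} with Lemma~\ref{lem:F} under the substitution $L=3-3E$; your bookkeeping of the four regimes, the prefactor $8/(2\pi)^2=2/\pi^2$, and the matching of the branches at $E=4/3$ all check out. The concluding continuity/no-eigenfunction argument via the disjoint spectra $\{0\}$ and $\{3/2\}$ at the maximizer and minimizers of $F$ is exactly the intended application of Corollary~\ref{cor:IDS_continuous} and Theorem~\ref{thm:equivalence}.
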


\subsection{IDS of the $(6^3)$ (honeycomb) tiling}

The honeycomb tiling is of particular practical interest since this
structure appears in graphene and is closely related to fullerenes
(buckeyballs) and carbon nano-tubes. 
The earliest reference from which the dispersion relations for this tiling can be inferred seems to be~\cite{Wallace-46}.
Furthermore, parts of our calculations have an overlap with the metric graph investigations in
\cite{KP07},
where the authors derive dispersion relations and
determine various spectral types of the Hamiltonian not only for the
$(6^3)$ tiling, but also for metric nano-tube graphs isometrically
embedded in cylinders. 
Moreover, \cite{DM10} is a good source to find
further information and references about graphene under a magnetic
field.

\begin{figure}[ht]
\begin{subfigure}{.45\textwidth}
 \begin{tikzpicture}
  \draw[fill = black] (0,0) circle (2pt);
  \draw[fill = black] (0,1) circle (2pt);
  \draw[fill = white] (-30:1) circle (2pt);
  \draw[fill = white] (-150:1) circle (2pt);
  \begin{scope}[yshift = 1cm]
  \draw[fill = white] (30:1) circle (2pt);  
  \draw[fill = white] (150:1) circle (2pt);   
  \end{scope}

  \draw[thick] (0,.2) -- (0,.8);
  \draw[thick] (-30:.8)   -- (-30:.2);
  \draw[thick] (-150:.8)   -- (-150:.2);
  \begin{scope}[yshift = 1cm]
  \draw[thick] (30:.8)   -- (30:.2);
  \draw[thick] (150:.8)   -- (150:.2);   
  \end{scope}

  \draw (.2,.2) node {$b$};
  \draw (.2,.8) node {$a$};
  \draw (-30:1.5) node {$(a - \omega_2)$};
  \draw (-150:1.5) node {$(a - \omega_1)$};
  \begin{scope}[yshift = 1cm]
  \draw (30:1.5) node {$(b + \omega_1)$};
  \draw (150:1.5) node {$(b + \omega_2)$};
  \end{scope}
\end{tikzpicture}  
\end{subfigure}
\begin{subfigure}{.5\textwidth}
 \begin{tikzpicture}
  \clip ({-.5*\linewidth},-2.5) rectangle ({.5*\linewidth},2.5);
  \draw (0,0) node {\includegraphics[width=\linewidth]{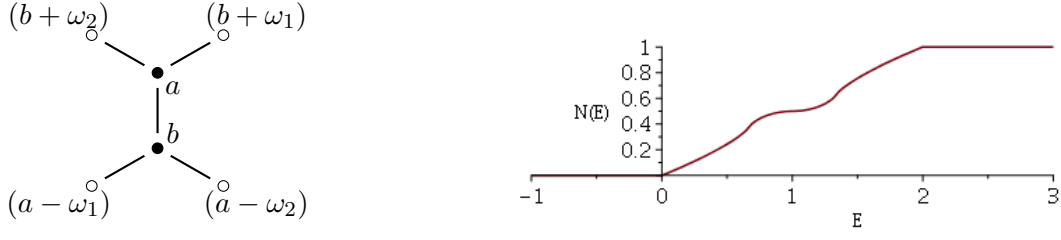}};
 \end{tikzpicture}
\end{subfigure}

\caption{Fundamental domain of the $(6^3)$ tiling (left) and its IDS (right)}
\label{fig:(6^3)}
\end{figure}

A fundamental domain is $Q = \{ a,b \} = \{ (0,0), (0,1) \}$, cf. Figure~\ref{fig:(6^3)}.
This implies
\[
 \Delta^\theta 
 = 
  \begin{pmatrix}
  1 & - \frac{1}{3} ( 1 + e^{i \theta_1} + e^{i \theta_2} ) \\
  - \frac{1}{3} ( 1 + e^{-i \theta_1} + e^{-i \theta_2} )  & 1\\  
 \end{pmatrix}
\]
which has the eigenvalues
\[
 \begin{cases}
  \lambda_{(6^3),1}^\theta &=
  1
  -
  \frac{1}{3}\sqrt
  {
   2 \cos \theta_1 + 2 \cos \theta_2 + 2 \cos(\theta_1 - \theta_2) + 3
  } 
  \\
  \lambda_{(6^3),2}^\theta &=
  1
  +
  \frac{1}{3}\sqrt
  {
   2 \cos \theta_1 + 2 \cos \theta_2 + 2 \cos(\theta_1 - \theta_2) + 3.
  } 
 \end{cases}
\]
Therefore,
\begin{align*}
 N_{(6^3)}(E)
 &=
 \frac{1}{2(2 \pi)^2}
 \left(
 \operatorname{Vol}
  \left\{
   \theta \in \TT^2 
   \colon
   \lambda_{(6^3),1}^\theta \leq E
  \right\}
 +
 \operatorname{Vol}
  \left\{
   \theta \in \TT^2 
   \colon
   \lambda_{(6^3),2}^\theta \leq E,
  \right\}
 \right)
\end{align*}
We see that $N_{(6^3)}(E)$ has support $[0,2]$ and is antisymmetric around $(E,N(E)) = (1,1/2)$.
For $E < 1$, we find by Lemma~\ref{lem:F}
\begin{align*}
 N_{(6^3)}(E)
 &=
 \frac{1}{2(2 \pi)^2}
 \operatorname{Vol}
  \left\{
   \theta \in \TT^2 
   \colon
   \lambda_{(6^3),1}^\theta \leq E
  \right\}\\
 &=
 \frac{1}{2(2 \pi)^2}
  \operatorname{Vol}
  \left\{
   (u,v) \in \largediamond 
   \colon
   F(u,v) \geq \frac{9}{2} (1 - E)^2 - 3/2
  \right\}\\
\end{align*}
 Therefore, using Lemma~\ref{lem:F} and antisymmetry around $E =1$, we find:
 
\begin{proposition}
 
\begin{align*}
 N_{(6^3)}(E)
 =
 \begin{cases}
  0
   &\quad
   \text{if}\ E < 0\\
  \frac{1}{\pi^2} 
   \int_{1 - \frac{3}{2} E}^1 
   \frac{\arccos \left( \frac{9(1-E)^2 - 1}{4t} - t \right)}{\sqrt{1 - t^2}}
   \mathrm{d}t
   &\quad
   \text{if}\ 0 \leq E < 2/3\\
  \frac{1}{2}
   -
   \frac{1}{\pi^2} 
   \int_{-2 + \frac{3}{2} E}^{1 - \frac{3}{2} E}  
   \frac{\arccos \left( \frac{9(1-E)^2 - 1}{4t} - t \right)}{\sqrt{1 - t^2}}
   \mathrm{d}t
    &\quad
   \text{if}\ 2/3 \leq E < 1\\
  \frac{1}{2}
   +
   \frac{1}{\pi^2} 
   \int_{1 - \frac{3}{2} E}^{-2 + \frac{3}{2} E}
   \frac{\arccos \left( \frac{9(E-1)^2 - 1}{4t} - t \right)}{\sqrt{1 - t^2}}
   \mathrm{d}t
    &\quad
   \text{if}\ 1 \leq E < 4/3\\
  1 
   -
   \frac{1}{\pi^2} 
   \int_{1 - \frac{3}{2} E}^1 
   \frac{\arccos \left( \frac{9(E-1)^2 - 1}{4t} - t \right)}{\sqrt{1 - t^2}}
   \mathrm{d}t
   &\quad
   \text{if}\ 4/3 \leq E < 2\\
  1
   &\quad
   \text{if}\ 2 \leq E.
 \end{cases}
\end{align*}
In particular, there are no $\ell^2$-eigenfunctions.
\end{proposition}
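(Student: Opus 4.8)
The plan is to read off the integrals of Lemma~\ref{lem:F} after substituting the dispersion relation of the honeycomb graph. By Theorem~\ref{thm:IDS_as_integral_expression} and the explicit form $\sigma(\Delta^\theta)=\{\lambda_{(6^3),1}^\theta,\lambda_{(6^3),2}^\theta\}$ computed above,
\[
 N_{(6^3)}(E)=\frac{1}{2(2\pi)^2}\Bigl(\vol\{\theta\in\TT^2:\lambda_{(6^3),1}^\theta\le E\}+\vol\{\theta\in\TT^2:\lambda_{(6^3),2}^\theta\le E\}\Bigr).
\]
The first observation is that under the area-preserving substitution $u=(\theta_1+\theta_2)/\sqrt2$, $v=(\theta_1-\theta_2)/\sqrt2$ used in the proof of Lemma~\ref{lem:F}, the radicand $2\cos\theta_1+2\cos\theta_2+2\cos(\theta_1-\theta_2)+3$ equals $2F(u,v)+3$, hence lies in $[0,9]$ by parts i) and ii) of that lemma; so $\lambda_{(6^3),1}^\theta\in[0,1]$ and $\lambda_{(6^3),2}^\theta\in[1,2]$ for every $\theta$. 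Consequently $N_{(6^3)}$ is supported in $[0,2]$, the second volume vanishes for $E<1$, and the first equals $(2\pi)^2$ for $E\ge 1$.

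For $E<1$ I would square the inequality $\lambda_{(6^3),1}^\theta\le E$ (legitimate since $1-E\ge 0$) to rewrite it as $\cos\theta_1+\cos\theta_2+\cos(\theta_1-\theta_2)\ge \tfrac92(1-E)^2-\tfrac32=:L(E)$, and then pass to $(u,v)$-coordinates to obtain $N_{(6^3)}(E)=\frac{1}{2(2\pi)^2}\vol\{(u,v)\in\largediamond:F(u,v)\ge L(E)\}$, the identity already recorded just before the proposition. Since $2L(E)+3=9(1-E)^2$, in this regime $\sqrt{2L(E)+3}=3(1-E)$, so the integration limits in Lemma~\ref{lem:F}~iv) become $-\tfrac12\pm\tfrac32(1-E)$, i.e.\ $1-\tfrac32E$ and $-2+\tfrac32E$, and $\tfrac{L(E)+1}{2t}=\tfrac{9(1-E)^2-1}{4t}$. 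It then only remains to sort $L(E)$ into the cases of Lemma~\ref{lem:F}~iv): one checks $L(E)\ge 3\iff E\le 0$ (forcing $N_{(6^3)}\equiv 0$ on $(-\infty,0)$), $-1\le L(E)<3\iff 0<E\le 2/3$, and $-3/2\le L(E)<-1\iff 2/3<E<1$; substituting the relevant branch and dividing by $2(2\pi)^2$ produces the first three cases of the asserted formula.

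For $E\ge 1$ I would argue symmetrically: now the first volume is full and $\lambda_{(6^3),2}^\theta\le E$ is equivalent — squaring with $E-1\ge 0$ and using $(E-1)^2=(1-E)^2$ — to $F(u,v)\le L(E)$, so that $N_{(6^3)}(E)=1-\frac{1}{2(2\pi)^2}\vol\{(u,v)\in\largediamond:F(u,v)\ge L(E)\}=1-N_{(6^3)}(2-E)$; this is the antisymmetry around $(1,1/2)$, and feeding the same cases of Lemma~\ref{lem:F}~iv) (now with $\sqrt{2L(E)+3}=3(E-1)$, and the ranges $-3/2\le L(E)<-1\iff 1\le E<4/3$, $-1\le L(E)<3\iff 4/3\le E<2$, $L(E)\ge 3\iff E\ge 2$) gives the remaining branches together with $N_{(6^3)}\equiv 1$ on $(2,\infty)$. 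For the absence of $\ell^2(\V)$-eigenfunctions I would invoke Corollary~\ref{cor:IDS_continuous}: at $\theta=(0,0)$ the radicand equals $9$, so $\sigma(\Delta^\theta)=\{0,2\}$, whereas at $\theta=(\pi,0)$ it equals $1$, so $\sigma(\Delta^\theta)=\{2/3,4/3\}$; these spectra are disjoint, hence $N_{(6^3)}$ is continuous and Theorem~\ref{thm:equivalence} excludes $\ell^2$-eigenfunctions (continuity is of course also plain from the explicit formula).

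I expect the only genuine work to be bookkeeping: keeping the sign straight when squaring, so that $\sqrt{2L(E)+3}$ resolves to $3(1-E)$ resp.\ $3(E-1)$ on the two halves, and then matching $L(E)$ to the correct case of Lemma~\ref{lem:F}~iv) so that the limits of integration and the argument of $\arccos$ come out exactly in the stated normalisation. No idea beyond Lemma~\ref{lem:F} and the already-computed matrix $\Delta^\theta$ is required.
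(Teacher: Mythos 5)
Your argument is correct and is essentially the paper's own (very terse) proof: reduce $\{\lambda_{(6^3),1}^\theta\le E\}$ resp.\ $\{\lambda_{(6^3),2}^\theta\le E\}$ to $\{F\ge \tfrac92(1-E)^2-\tfrac32\}$ resp.\ its complement by squaring, apply Lemma~\ref{lem:F}~iv), and use the antisymmetry about $(1,\tfrac12)$; your disjoint-spectra check at $\theta=(0,0)$ and $\theta=(\pi,0)$ is a valid (and more explicit) route to the absence of $\ell^2$-eigenfunctions than the paper's appeal to continuity of the formula. One caveat: carried out as you describe, the branch $4/3\le E<2$ comes out with lower integration limit $-\tfrac12+\tfrac32(E-1)=-2+\tfrac32E\in[0,1)$, not the $1-\tfrac32E\in(-2,-1]$ printed in the proposition (which even lies outside the domain of $1/\sqrt{1-t^2}$), so this is a typo in the statement and your computation does not literally reproduce that printed branch --- it produces the corrected one.
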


\subsection{IDS of the $(3.6)^2$ tiling (``the Kagome lattice'')}

Properties of Kagome lattice structures under magnetic fields have
been investigated both in the Applied Physics literature (e.g.,
\cite{Zhao16} and references therein studying Kagome staircase
compounds) and in the Theoretical Physics literature (butterfly-type
spectra for ultracold atoms in optical Kagome lattices, see
\cite{Hou09,KR-L14,HKR-L16} and references therein). We refer the
readers also to \cite{Mek03} for historical information on the name
``Kagome'' and how the scientific community became interested in this
structure.

\begin{figure}[ht]

\begin{subfigure}{.45\textwidth}
\begin{tikzpicture}[scale = 1.5]


   \draw[fill = black] (60:1) circle (2pt);
   \draw[fill = black] (120:1) circle (2pt);
   \draw[fill = black] (180:1) circle (2pt);
   
   \begin{scope}[xshift = 2cm]
    \draw[fill = white] (120:1) circle (2pt);
    \draw[fill = white] (180:1) circle (2pt);
   \end{scope}
   \begin{scope}[xshift = -1cm , yshift = 2 * 0.86602540378 * 1 cm]
    \draw[fill = white] (0:1) circle (2pt);
    \draw[fill = white] (240:1) circle (2pt);
   \end{scope}
   \draw[fill = white] (240:1) circle (2pt);
   \begin{scope}[xshift = -2cm]
    \draw[fill = white] (300:1) circle (2pt);
   \end{scope}

    \draw[thick] (7:.94) -- (53:.94);
    \draw[thick] (67:.94) -- (113:.94);
    \draw[thick] (127:.94) -- (173:.94);
    \draw[thick] (187:.94) -- (233:.94);

\begin{scope}[xshift = 1cm, yshift = 2 * 0.86602540378 * 1 cm]
    \draw[thick] (187:.94) -- (233:.94);
    \draw[thick] (247:.94) -- (293:.94);
\end{scope}
\begin{scope}[xshift = -1cm, yshift = 2 * 0.86602540378 * 1 cm]
    \draw[thick] (247:.94) -- (293:.94);
    \draw[thick] (307:.94) -- (353:.94);
\end{scope}
\begin{scope}[xshift = -2cm]
    \draw[thick] (7:.94) -- (53:.94);
    \draw[thick] (307:.94) -- (353:.94);
\end{scope}

   \draw (60:.7) node {$a$};
   \draw (120:.7) node {$b$};
   \draw (180:.85) node {$c$};

   \begin{scope}[xshift = 2cm]
    \draw (110:1.25) node {$(b + \omega_1)$};
    \draw (195:1) node {$(c + \omega_1)$};
   \end{scope}
   \begin{scope}[xshift = -1cm , yshift = 2 * 0.86602540378 * 1 cm]
    \draw (10:1) node {$(c + \omega_2)$};
   \end{scope}
   \draw (250:1.25) node {$(a - \omega_2)$};
   \begin{scope}[xshift = -2cm]
    \draw (70:1.25) node {$(a - \omega_1)$};
    \draw (290:1.25) node {$( b - \omega_2)$};
   \end{scope}
\end{tikzpicture}
\end{subfigure}
\begin{subfigure}{.5\textwidth}
 \begin{tikzpicture}
  \clip ({-.5*\linewidth},-2.5) rectangle ({.5*\linewidth},2.5);
  \draw (0,0) node {\includegraphics[width=\linewidth]{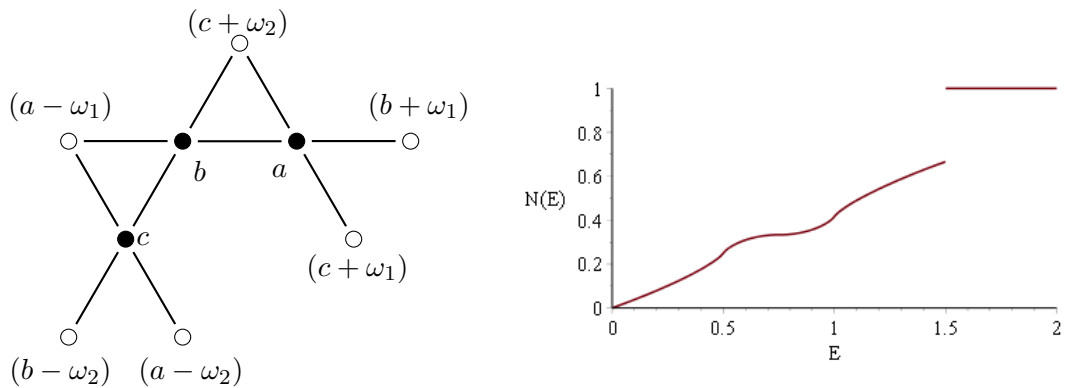}};
 \end{tikzpicture}
\end{subfigure}

 \caption{Fundamental domain of the $(3.6)^2$ tiling (left) and its IDS (right)}
 \label{fig:kagome}
\end{figure}

We would like to point out that jumps and strict monotonicity
properties of the IDS of the combinatorial Laplacian on the Kagome
lattice were already determined in \cite[Prop. 3.3]{LPPV09}. We now
derive an explicit formula for the IDS.

A fundamental domain of the Kagome lattice consists of three points,
cf. Figure~\ref{fig:kagome}.  This leads to the matrix
\[
 \Delta^\theta
 =
 \operatorname{Id}
 -
 \frac{1}{4}
 \begin{pmatrix}
 0 & (1 + \euler^{i \theta_1}) & (\euler^{i \theta_1} + \euler^{i \theta_2})\\
 (1 + \euler^{- i \theta_1}) & 0 & (1 + \euler^{i\theta_2})\\ 
 (\euler^{- i \theta_1} + \euler^{- i \theta_2}) &  (1 + \euler^{- i\theta_2}) & 0\\ 
 \end{pmatrix}
\]
with eigenvalues 
\[
 \begin{cases}
  \lambda_{(3.6)^2,1}^\theta
  &=
  \frac{3 - \sqrt{3 + 2 F(u,v)}}{4},\\
  \lambda_{(3.6)^2,2}^\theta
  &=
  \frac{3 + \sqrt{3 + 2 F(u,v)}}{4},\\
  \lambda_{(3.6)^2,3}^\theta
  &=
  \frac{3}{2},\\
 \end{cases}
\]
Furthermore the eigenvalue $3/2$ of $\Delta^\theta$ is $\theta$-independent whence by Theorem~\ref{thm:equivalence}, it corresponds to an infinitely degenerate eigenvalue of $\Delta$.
It can be seen that this eigenvalue is a linear combination of finitely supported eigenvalues on each hexagon where at the vertices of the hexagon, the eigenfunction takes the values $\pm 1$ in alternating order, see also Figure~\ref{fig:eigenfunctions}.
From Lemma~\ref{lem:F}, we deduce:

\begin{proposition}
 
\begin{align*}
  N_{(3.6)^2}(E)
 &=
  \frac{1}{3 (2 \pi)^2} 
  \sum_{k=1}^3
 \operatorname{Vol}
  \left\{
   \theta \in \TT^2 
   \colon
   \lambda_{(3.6)^2,k}^\theta \leq E
  \right\}
  \\
 &=
 \begin{cases}
  0
   &\quad
   \text{if}\ E < 0\\
  \frac{2}{3 \pi^2} 
   \int_{1 - 2 E}^1 
   \frac{ \arccos \left( \frac{4 E^2 - 6 E + 2}{t} - t \right)}{\sqrt{1 - t^2}}
   \mathrm{d}t
   &\quad
   \text{if}\ 0 \leq E < \frac{1}{2}\\
  \frac{1}{3}
   -
   \frac{2}{3 \pi^2} 
   \int_{2 E - 2}^{1 - 2 E} 
   \frac{ \arccos \left( \frac{4 E^2 - 6 E + 2}{t} - t \right)}{\sqrt{1 - t^2}}
   \mathrm{d}t
   &\quad
   \text{if}\ \frac{1}{2} \leq E < \frac{3}{4}\\
  \frac{1}{3}
   +
   \frac{2}{3 \pi^2} 
   \int_{1 - 2 E}^{2 E - 2} 
   \frac{ \arccos \left( \frac{4 E^2 - 6 E + 2}{t} - t \right)}{\sqrt{1 - t^2}}
   \mathrm{d}t
   &\quad
   \text{if}\ \frac{3}{4} \leq E < 1\\
  \frac{2}{3}
   -
   \frac{2}{3 \pi^2} 
   \int_{2 E - 2}^{1} 
   \frac{ \arccos \left( \frac{4 E^2 - 6 E + 2}{t} - t \right)}{\sqrt{1 - t^2}}
   \mathrm{d}t
   &\quad
   \text{if}\ 1 \leq E < \frac{3}{2}\\
   1
   &\quad
   \text{if}\ \frac{3}{2} < E \\
 \end{cases}
\end{align*}
For each hexagon $H$ there exists (up to scalar multiples) exactly one eigenfunction with support on $H$.
Every $\ell^2$-eigenfunction is a linear combination of these special finitely supported eigenfunctions.
\end{proposition}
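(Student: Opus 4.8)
The plan is to isolate the single energy at which $\ell^2$-eigenfunctions can occur, exhibit the hexagon eigenfunctions together with their local uniqueness, and then establish completeness through the Floquet decomposition of Section~\ref{sec:2}. First I would record, from the explicit eigenvalues $\lambda_{(3.6)^2,k}^\theta$ and Lemma~\ref{lem:F}, that $N_{(3.6)^2}$ is continuous on $\RR\setminus\{3/2\}$ and jumps by $1/3$ at $E=3/2$ (the eigenvalue $\lambda_{(3.6)^2,3}^\theta\equiv 3/2$ being $\theta$-independent). By Theorem~\ref{thm:equivalence}, every $\ell^2$-eigenfunction of $\Delta$ then has eigenvalue $3/2$. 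Reading off the jump height from the jump formula in the proof of Theorem~\ref{thm:equivalence},
\[
 \frac{1}{(2\pi)^2\lvert Q\rvert}\int_{\TT^2}\bigl(\text{multiplicity of }3/2\ \text{in}\ \Delta^\theta\bigr)\,\drm\theta=\frac13,
\]
and since $\lvert Q\rvert=3$ while $3/2$ is an eigenvalue of $\Delta^\theta$ for \emph{every} $\theta$, it is forced to be a \emph{simple} eigenvalue of $\Delta^\theta$ for almost every $\theta$ (equivalently, $\lambda_{(3.6)^2,2}^\theta=3/2$ forces $F(u,v)=3$, which by Lemma~\ref{lem:F}~i) occurs at a single $\theta$).

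Next I would treat the hexagon eigenfunctions. In the Kagome graph every vertex has degree $4$ and lies on exactly two triangles, and its four neighbours are the two further vertices of each of these triangles; hence, writing $S_T(f):=\sum_{w\in T}f(w)$, the equation $\Delta f=\tfrac32 f$ at a vertex on triangles $T,T'$ is equivalent to $S_T(f)+S_{T'}(f)=0$. Fix a hexagon $H$ with cyclically ordered vertices $v_1,\dots,v_6$; since each edge of the graph borders exactly one triangle and one hexagon, the six edges of $H$ are shared with triangles $T_i=\{v_i,v_{i+1},u_i\}$ in which $u_i$ and its two remaining neighbours lie off $H$. Putting $f_H(v_i):=(-1)^i$ and $f_H\equiv 0$ off $H$, a short check --- at $v_i$ the neighbour sum is $(-1)^{i-1}+(-1)^{i+1}=-2(-1)^i$; at $u_i$ it is $f_H(v_i)+f_H(v_{i+1})=0$; elsewhere it vanishes --- gives $\Delta f_H=\tfrac32 f_H$. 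Conversely, if $f$ is any eigenfunction with support contained in $H$, evaluating its eigenvalue equation at each $u_i$ yields $f(v_i)+f(v_{i+1})=0$ for all $i$, so $f$ is a scalar multiple of $f_H$ (and has eigenvalue $3/2$). This gives existence and uniqueness up to scalars.

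For completeness I would use that all hexagons are $\ZZ^2$-translates of a fixed one $H_0$, so the closed span $W$ of $\{f_H\}$ equals the closed span of the translates $\{T_\gamma f_{H_0}\}_{\gamma\in\ZZ^2}$, a $\ZZ^2$-invariant closed subspace of $\ker(\Delta-\tfrac32\id)$. Under $\ell^2(\V)\cong\int_{\TT^2}^\oplus\ell^2(\V)_\theta\,\drm\theta$, the translate $T_\gamma f_{H_0}$ corresponds to $\theta\mapsto\euler^{i\langle\theta,\gamma\rangle}(\widetilde{f_{H_0}})_\theta$, where each $v\in Q$ component of $(\widetilde{f_{H_0}})_\theta$ is a trigonometric polynomial in $\theta$; as $f_{H_0}\not\equiv 0$, this field is nonzero for a.e.\ $\theta$. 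Since $\Delta^\theta(\widetilde{f_{H_0}})_\theta=\tfrac32(\widetilde{f_{H_0}})_\theta$ and $3/2$ is a simple eigenvalue of $\Delta^\theta$ for a.e.\ $\theta$, we get $\ker(\Delta^\theta-\tfrac32\id)=\CC\,(\widetilde{f_{H_0}})_\theta$ for a.e.\ $\theta$, whence $\ker(\Delta-\tfrac32\id)\cong\int_{\TT^2}^\oplus\CC\,(\widetilde{f_{H_0}})_\theta\,\drm\theta$. On the other hand $W$ is the closure of $\{\theta\mapsto p(\theta)(\widetilde{f_{H_0}})_\theta: p\ \text{a trigonometric polynomial}\}$; since $\phi\mapsto\bigl(\theta\mapsto\phi(\theta)(\widetilde{f_{H_0}})_\theta\bigr)$ is an isometry of $L^2\bigl(\TT^2,\tfrac{1}{(2\pi)^2}\lVert(\widetilde{f_{H_0}})_\theta\rVert_\theta^2\,\drm\theta\bigr)$ onto $\int_{\TT^2}^\oplus\CC\,(\widetilde{f_{H_0}})_\theta\,\drm\theta$, and trigonometric polynomials are dense in that finite-measure $L^2$-space by Stone--Weierstrass, I conclude $W=\ker(\Delta-\tfrac32\id)$. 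Combined with the first step, every $\ell^2$-eigenfunction of $\Delta$ is an $\ell^2$-convergent linear combination of the $f_H$.

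The hard part is this last completeness step --- ruling out that $W$ is a proper subspace of the flat band. The two points that make it work are that $(\widetilde{f_{H_0}})_\theta$ cannot vanish on a set of positive measure (being a nonzero trigonometric polynomial) and that no multiplier outside the trigonometric polynomials escapes $W$ (the weighted-$L^2$ density) --- in effect matching the IDS jump $1/3$ with exactly one flat-band state per fundamental domain. A direct-integral-free alternative: by the first step a finitely supported eigenfunction $g$ has $S_T(g)=0$ for every triangle $T$, so, identifying the Kagome graph with the line graph of the honeycomb lattice (vertices $\leftrightarrow$ edges, triangles $\leftrightarrow$ vertices), $g$ becomes a finitely supported divergence-free flow on the honeycomb, hence a finite combination of the boundary cycles of the bounded hexagonal faces --- the $f_H$ --- by the standard planar fact that compactly supported $1$-cycles are spanned by bounded face boundaries; combined with the completeness of finitely supported eigenfunctions in the eigenspace at a jump of the IDS (\cite{Kuchment-91}, \cite[Corollary~2.3]{LenzV-09}), this again yields the statement.
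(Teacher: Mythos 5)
Your proposal is correct in substance and, on the eigenfunction claims, considerably more detailed than the paper, which only asserts them. For the displayed integral formula you and the paper follow the same skeleton --- diagonalise $\Delta^\theta$ over the three-point fundamental domain, observe the flat band $\lambda_{(3.6)^2,3}^\theta\equiv 3/2$ and express the other two bands through $F$, then feed the sublevel volumes into Theorem~\ref{thm:IDS_as_integral_expression} via Lemma~\ref{lem:F}(iv) --- and, like the paper (``From Lemma~\ref{lem:F}, we deduce''), you do not actually carry out the case-by-case substitution that produces the six integral expressions; that bookkeeping is the only content of the first half of the statement and is omitted on both sides. Where you genuinely diverge is the second half: the paper merely says ``It can be seen that'' the flat band is spanned by the alternating $\pm 1$ hexagon functions and points to Figure~\ref{fig:eigenfunctions}, whereas you supply (a) the local verification $\Delta f_H=\tfrac32 f_H$ together with uniqueness of $f_H$ among eigenfunctions supported in $H$, and (b) two completeness arguments: a direct-integral one (simplicity of the eigenvalue $3/2$ of $\Delta^\theta$ for a.e.\ $\theta$, non-vanishing of the trigonometric-polynomial fiber of $f_{H_0}$, and weighted-$L^2$ density of trigonometric multipliers), and a combinatorial one identifying the Kagome graph with the line graph of the honeycomb and invoking the planar cycle-space fact. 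Both are sound and either would close the gap the paper leaves open. Two small points to tighten: in the combinatorial route, passing from $S_T(g)+S_{T'}(g)=0$ at every vertex to $S_T(g)=0$ for every triangle requires the extra remark that $T\mapsto S_T(g)$ is finitely supported and changes sign across every edge of the connected honeycomb, hence vanishes identically; and your final appeal to \cite{Kuchment-91} and \cite[Corollary~2.3]{LenzV-09} uses the stronger form of those results (density of the finitely supported eigenfunctions in the whole eigenspace), not merely the equivalence recorded in Theorem~\ref{thm:equivalence} --- that stronger form is indeed what those references establish, but it should be invoked explicitly, whereas your direct-integral route does not need it at all.
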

\subsection{IDS of the $(3.12^2)$ tiling}
The $(3.12^2)$ tiling is the second Archimedean tiling after the $(3.6)^2$ (Kagome) tiling which has compactly supported eigenfunctions.
It also has the interesting feature that the spectrum consists of the two intervals $[0,2/3]$ and $[1, 5/3]$, i.e. it has a proper band structure which might make nanomaterials based on this tiling an interesting candidate for applications.

A fundamental domain consists of six points, $Q = \{ a,b,c,d,e,f \}$, cf. Figure~\ref{fig:super_kagome}. 
We have
\[
 \Delta^\theta 
 =
 \operatorname{Id} 
 -
 \frac{1}{3}
 \begin{pmatrix}
 0 & 1 & 1 & 0 & e^{-i \theta_2} & 0\\
 1 & 0 & 1 & 0 & 0 & e^{i \theta_1}\\
 1 & 1 & 0 & 1 & 0 & 0\\
 0 & 0 & 1 & 0 & 1 & 1\\
 e^{i \theta_2} & 0 & 0 & 1 & 0 & 1\\
 0 & e^{-i \theta_1} & 0 & 1 & 1 & 0\\
 \end{pmatrix}
\]
Its characteristic polynomial is
\begin{align*}
 P_{\Delta^\theta}(E)
 &=
 \frac{(E - 1)(3 E - 5)}{243}
 \big( 81 E^4 - 270 E^3 + 279 E^2 - 90 E -\\
 &\quad - 2 \left( \cos(\theta_1) + \cos(\theta_2) + \cos(\theta_1 - \theta_2) - 3 \right) \big)\\
 &=
 \frac{(E - 1)(3 E - 5)}{243}
 \left( 81 E^4 - 270 E^3 + 279 E^2 - 90 E - 2 F(u,v) + 6 \right),
\end{align*}
where we used again the change of variables $u := (\theta_1 + \theta_2)/\sqrt{2}$, $v = (\theta_1 - \theta_2)/\sqrt{2}$ and the function $F$ from Lemma~\ref{lem:F}.
This is a polynomial of degree $6$ and its roots are
\begin{align*}
 \lambda_{(3.12^2,1)}^\theta
 &=
 \frac{1}{6} \left( 5 - \sqrt{13 + 4 \sqrt{2 F(u,v) + 3}}  \right)
 \in \left[ 0, \frac{5 - \sqrt{13}}{6} \right]
 ,
 \\
 \lambda_{(3.12^2,2)}^\theta
 &=
 \frac{1}{6} \left( 5 - \sqrt{13 - 4 \sqrt{F(u,v) + 3}}  \right)
 \in \left[ \frac{5 - \sqrt{13}}{6} , \frac{2}{3} \right]
 ,
 \\
 \lambda_{(3.12^2,3)}^\theta
 &= 
 1,
 \\ 
 \lambda_{(3.12^2,4)}^\theta
 &=
 \frac{1}{6} \left( 5 + \sqrt{13 - 4 \sqrt{F(u,v) + 3}}  \right)
 \in \left[ 1, \frac{5 + \sqrt{13}}{6} \right]
 ,
 \\
 \lambda_{(3.12^2,5)}^\theta
 &=
 \frac{1}{6} \left( 5 + \sqrt{13 + 4 \sqrt{2 F(u,v) + 3}}  \right)
 \in \left[ \frac{5 + \sqrt{13}}{6}, \frac{5}{3} \right]
 ,
 \\
 \lambda_{(3.12^2,6)}^\theta
 &= 
 \frac{5}{3}.
\end{align*}
\begin{figure}[ht]
\begin{subfigure}{.45\textwidth}
 \begin{tikzpicture}
  \draw[fill = black] (0,0) circle (2pt);
  \draw (.2,0) node {$d$};
  \draw[fill = black] (0,1) circle (2pt);
  \draw (.2,1) node {$c$};
  \draw[fill = black] (-60:1) circle (2pt);
  \draw (-75:1.1) node {$e$};
  \draw[fill = black] (-120:1) circle (2pt);
  \draw (-105:1.15) node {$f$};
  \begin{scope}[yshift = 1cm]
  \draw[fill = black] (60:1) circle (2pt);  
  \draw (75:1.1) node {$b$};
  \draw[fill = black] (120:1) circle (2pt);   
  \draw (105:1.1) node {$a$};
  \end{scope}
  \draw[thick] (0,.2) -- (0,.8);
  \draw[thick] (-60:.8)   -- (-60:.2);
  \draw[thick] (-120:.8)   -- (-120:.2);
  \draw[thick] (-70:.91) -- (-110:.91);
  \begin{scope}[yshift = 1cm]
  \draw[thick] (60:.8)   -- (60:.2);
  \draw[thick] (120:.8)   -- (120:.2);
  \draw[thick] (70:.91) -- (110:.91);
  \end{scope}
  
  \begin{scope}[xshift = .5 cm, yshift = -0.86602540378 * 1 cm]
   \draw [fill = white] (-30:1) circle (2pt);
   \draw (-40:1.2) node {$a - \omega_2$};
   \draw [thick] (-30:.2) -- (-30:.8);
  \end{scope}
  \begin{scope}[xshift = -.5 cm, yshift = -0.86602540378 * 1 cm]
   \draw [fill = white] (-150:1) circle (2pt);
   \draw (-140:1.2) node {$b - \omega_1$};
   \draw [thick] (-150:.2) -- (-150:.8);
  \end{scope}
  \begin{scope}[xshift = .5 cm, yshift = 1.86602540378 * 1 cm]
   \draw [fill = white] (30:1) circle (2pt);
   \draw (40:1.2) node {$f + \omega_1$};
   \draw [thick] (30:.2) -- (30:.8);
  \end{scope}
  \begin{scope}[xshift = -.5 cm, yshift = 1.86602540378 * 1 cm]
   \draw [fill = white] (150:1) circle (2pt);
   \draw [thick] (150:.2) -- (150:.8);
   \draw (140:1.2) node {$e + \omega_2$};
  \end{scope}
\end{tikzpicture}
\end{subfigure}
\begin{subfigure}{.5\textwidth}
 \begin{tikzpicture}
  \clip ({-.5*\linewidth},-2.5) rectangle ({.5*\linewidth},2.5);
  \draw (0,0) node {\includegraphics[width=\linewidth]{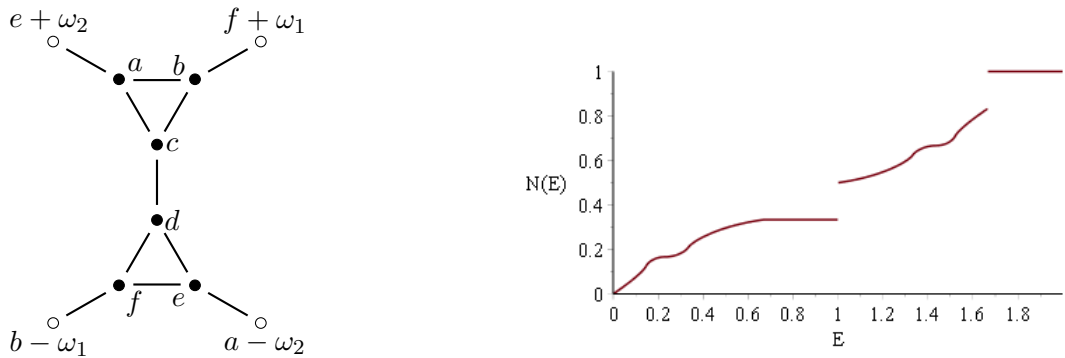}};
 \end{tikzpicture}
\end{subfigure}
\caption{Fundamental domain of the $(3.12^2)$ tiling (left) and its IDS (right)}
\label{fig:super_kagome}
\end{figure}
We see that the spectrum of $\Delta$ is supported in the two bands $[0, 2/3]$ and $[1, 5/3]$. 
Furthermore the eigenvalues $1$ and $5/3$ of $\Delta^\theta$ are $\theta$-independent whence by Theorem~\ref{thm:equivalence}, they correspond to two linearly independent, infinitely degenerate supported eigenvalues of $\Delta$.
It can be seen that the corresponding space of eigenfunctions is spanned by functions which are supported on the vertices of a single $12$-gon where cyclically at the vertices of the $12$-gon either the values $1,-1,1,-1,\dots$ (in case $\lambda = 5/3$) or the values $1,1,-1,-1,1,1,\dots$ (in case $\lambda = 1$) appear, see also Figure~\ref{fig:eigenfunctions}.

Using some elementary algebra and Theorem~\ref{thm:IDS_as_integral_expression}, we find
\begin{proposition}
\begin{align*}
 N_{(3.12^2)}(E)
 &=
 \frac{1}{6 (2 \pi)^2}
 \sum_{k = 1}^6
 \vol \left\{ \theta \in \TT^2 \colon \lambda_{(3.12^2,k)}^\theta \leq E \right\}
 \\
 &
  \! \! \! \! 
  \! \! \! \! 
  \! \! \! \! 
  \! \! \! \! 
  =
  \begin{cases}
  0 
  &
  \text{if $E < 0$},\\
  \frac{1}{6 (2 \pi)^2} \vol \left\{ F \geq - \frac{3}{2} + \frac{9}{2} ( 3 E^2 - 5 E + 1)^2 \right\}
  &
  \text{if $0 \leq E < \frac{5 - \sqrt{13}}{6}$},\\
  \frac{1}{3} - \frac{1}{6 (2 \pi)^2} \vol \left\{ F \geq - \frac{3}{2} + \frac{9}{2} ( 3 E^2 - 5 E + 1)^2 \right\}
  &
  \text{if $\frac{5 - \sqrt{13}}{6} \leq E < \frac{2}{3}$},\\
  \frac{1}{3}
  &
  \text{if $\frac{2}{3} \leq E < 1$},\\
  \frac{1}{2} + \frac{1}{6 (2 \pi)^2} \vol \left\{ F \geq - \frac{3}{2} + \frac{9}{2} ( 3 E^2 - 5 E + 1)^2 \right\}
  &
  \text{if $1 \leq E < \frac{5 + \sqrt{13}}{6}$},\\
  \frac{5}{6} - \vol \left\{ F \geq - \frac{3}{2} + \frac{9}{2} ( 3 E^2 - 5 E + 1)^2 \right\}
  &
  \text{if $\frac{5 + \sqrt{13}}{6} \leq E < \frac{5}{3}$},\\
  1
  &
  \text{if $\frac{5}{3} \leq E$}\\
 \end{cases}
 \\
 &
 \! \! \! \! 
 \! \! \! \! 
 \! \! \! \! 
 \! \! \! \! 
 =
 \begin{cases}
  0 
  &
  \text{if $E < 0$},\\[1em]
  \frac{1}{3 \pi^2}
      \int \limits_{- \frac{1}{2} + \frac{3}{2} (3 E^2 - 5 E + 1)}^1
      \frac{\arccos \left( \frac{- \frac{1}{2} + \frac{9}{2} (3 E^2 - 5 E + 1)^2}{2 t} - t \right)}{\sqrt{1 - t^2}}
      \drm t
  &\text{if $0 \leq E < \frac{5 - \sqrt{17}}{6}$},\\[1em]
      \frac{1}{6}
      -
      \frac{1}{3 \pi^2}
      \int \limits_{- \frac{1}{2} - \frac{3}{2} (3 E^2 - 5 E + 1)}^{- \frac{1}{2} + \frac{3}{2} (3 E^2 - 5 E + 1)}
      \frac{\arccos \left( \frac{ - \frac{1}{2} + \frac{9}{2} (3 E^2 - 5 E + 1)^2}{2 t} - t \right)}{\sqrt{1 - t^2}}
      \drm t
  &\text{if $\frac{5 - \sqrt{17}}{6} \leq E < \frac{5 - \sqrt{13}}{6}$},\\[1em]
    \frac{1}{6}
    +
    \frac{1}{3 \pi^2}
    \int \limits_{- \frac{1}{2} - \frac{3}{2} (3 E^2 - 5 E + 1)}^{- \frac{1}{2} + \frac{3}{2} (3 E^2 - 5 E + 1)}
    \frac{\arccos \left( \frac{ - \frac{1}{2} + \frac{9}{2} (3 E^2 - 5 E + 1)^2}{2 t} - t \right)}{\sqrt{1 - t^2}}
    \drm t
   &\text{if $\frac{5 - \sqrt{13}}{6} \leq E < \frac{1}{3}$},\\[1em]
   \frac{1}{3}
   -
   \frac{1}{3 \pi^2}
      \int \limits_{- \frac{1}{2} + \frac{3}{2} (3 E^2 - 5 E + 1)}^1
      \frac{\arccos \left( \frac{- \frac{1}{2} + \frac{9}{2} (3 E^2 - 5 E + 1)^2}{2 t} - t \right)}{\sqrt{1 - t^2}}
      \drm t
   &\text{if $\frac{1}{3} \leq E < \frac{2}{3}$},\\[1em]
   \frac{1}{3}
  &
  \text{if $\frac{2}{3} \leq E < 1$},\\[1em]
  \frac{5}{6} - N_{(3.12^2)} \left( \frac{5}{3} - E \right)
  &
  \text{if $1 \leq E < \frac{5}{3}$},\\[1em]
  1
  &
  \text{if $\frac{5}{3} \leq E$}.
 \end{cases}
\end{align*}
where $F$ is the function explicitly given in Lemma~\ref{lem:F}.
For each $12$-gon $D$, there exist (up to scalar multiples) exactly two linear independent eigenfunctions with support on $D$. 
Every $\ell^2$-eigenfunction is a linear combination of one type of these special finitely supported eigenfunctions.
\end{proposition}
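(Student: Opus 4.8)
The plan is to evaluate the integral in Theorem~\ref{thm:IDS_as_integral_expression} with $|Q|=6$ and $d=2$, and then to handle the eigenfunctions by combining Theorem~\ref{thm:equivalence} with a trace (``eigenfunctions per fundamental domain'') count. First I would read off $\Delta^\theta$ from the fundamental domain of Figure~\ref{fig:super_kagome}: since every vertex has degree $3$ this is $\id-\tfrac13 A^\theta$, with $A^\theta$ the displayed Floquet adjacency matrix. Expanding $\det(\Delta^\theta-E\id)$ one verifies the stated factorisation of $P_{\Delta^\theta}(E)$ as $\tfrac{(E-1)(3E-5)}{243}$ times a quartic in $E$ whose only $\theta$-dependence is through $F(u,v)$, with $(u,v)=((\theta_1+\theta_2)/\sqrt2,(\theta_1-\theta_2)/\sqrt2)$ and $F$ as in Lemma~\ref{lem:F}. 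The linear factors give the $\theta$-independent eigenvalues $1$ and $5/3$; the quartic is quadratic in $E^2-\tfrac{10}{3}E$, so completing the square twice yields the four remaining branches in the claimed closed form in terms of $\sqrt{13\pm4\sqrt{2F(u,v)+3}}$, and the indicated ranges of the branches follow from $-\tfrac32\le F\le 3$.

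By Theorem~\ref{thm:IDS_as_integral_expression}, $N_{(3.12^2)}(E)=\tfrac{1}{6(2\pi)^2}\sum_{k=1}^{6}\vol\{\theta\in\TT^2:\lambda_{(3.12^2,k)}^\theta\le E\}$. The two constant branches contribute $(2\pi)^2\chi_{[1,\infty)}(E)$ and $(2\pi)^2\chi_{[5/3,\infty)}(E)$. For every non-constant branch and every $E$-interval on which it is neither always nor never $\le E$, I would turn $\lambda_{(3.12^2,k)}^\theta\le E$ into an inequality for $F(u,v)$ by two successive squarings, each justified once the sign of the squared quantity is pinned down from the branch ranges and from the sign of $3E^2-5E+1$ (which changes exactly at the band edges $\tfrac{5\pm\sqrt{13}}{6}$). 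The result is always a condition $F(u,v)\ge L(E)$ or $F(u,v)\le L(E)$ with $L(E):=-\tfrac32+\tfrac92(3E^2-5E+1)^2$, and $\vol\{F\le L(E)\}=(2\pi)^2-\vol\{F\ge L(E)\}$ since the real-analytic nonconstant $F$ has null level sets. Summing the six terms gives the compact form; substituting the evaluation of $\vol\{F\ge L\}$ from Lemma~\ref{lem:F}~iv) at $L=L(E)$, with $\tfrac12\sqrt{2L(E)+3}=\tfrac32|3E^2-5E+1|$, gives the explicit form, the extra break points $\tfrac{5\pm\sqrt{17}}{6}$ and $E=\tfrac13$ being exactly where $L(E)$ crosses the threshold $-1$ of that lemma. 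The reflection $N_{(3.12^2)}(E)=\tfrac56-N_{(3.12^2)}(\tfrac53-E)$ on $[1,\tfrac53)$, which recovers the upper band from the lower one, follows from $\lambda_{(3.12^2,1)}^\theta+\lambda_{(3.12^2,5)}^\theta=\lambda_{(3.12^2,2)}^\theta+\lambda_{(3.12^2,4)}^\theta=\tfrac53$ together with $\lambda_{(3.12^2,3)}^\theta=1$, $\lambda_{(3.12^2,6)}^\theta=\tfrac53$.

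From the formula, $N_{(3.12^2)}$ is continuous except at $E=1$ and $E=5/3$, with a jump of $\tfrac16$ at each (there $\Delta^\theta$ has a simple eigenvalue for almost every $\theta$, coinciding with a neighbouring branch only at the point $F=3$), so by Theorem~\ref{thm:equivalence} and Remark~\ref{rem:IDS_spectral_measure} the point spectrum of $\Delta$ is exactly $\{1,5/3\}$. Next I would classify the eigenfunctions supported on one $12$-gon $D$: each vertex of $D$ has a single neighbour outside $D$, and the two endpoints of each of the six ``triangle edges'' of $D$ share that outside neighbour; so for cyclically ordered vertex values $x_0,\dots,x_{11}$ the eigenvalue equations on $D$ are $x_{i-1}+x_{i+1}=3(1-\lambda)x_i$, while those at the six outside vertices force $x_{2j}+x_{2j+1}=0$. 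Eliminating gives $(3\lambda-4)^2=1$, hence $\lambda\in\{1,5/3\}$, and in each case the solution is unique up to scalars — the alternating pattern for $\lambda=5/3$ and the pattern $+,+,-,-,+,+,\dots$ for $\lambda=1$ — so each $12$-gon supports exactly two such eigenfunctions, one per eigenvalue. Since there is exactly one $12$-gon per fundamental domain, for fixed $\lambda$ these single-$12$-gon eigenfunctions form one $\ZZ^2$-orbit of a nonzero finitely supported vector; a fiberwise Floquet computation gives $\tr(\chi_Q P_{\mathcal{V}_\lambda})=1$ for the projection $P_{\mathcal{V}_\lambda}$ onto their closed span $\mathcal{V}_\lambda$ (the Floquet transform of a finitely supported function is a nonzero trigonometric polynomial, hence nonzero almost everywhere), whereas by the definition of the IDS $\tr(\chi_Q P_{\mathcal{E}_\lambda})=|Q|\cdot(\text{jump at }\lambda)=6\cdot\tfrac16=1$ for the projection onto the $\ell^2$-eigenspace $\mathcal{E}_\lambda\supseteq\mathcal{V}_\lambda$. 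Since $\tr(\chi_Q\cdot)$ vanishes on a nonnegative $\ZZ^2$-periodic operator only if that operator is $0$, we get $\mathcal{V}_\lambda=\mathcal{E}_\lambda$; hence every $\ell^2$-eigenfunction, necessarily with eigenvalue $1$ or $5/3$, is an $\ell^2$-limit of linear combinations of the single-$12$-gon eigenfunctions of that eigenvalue.

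The main obstacle I anticipate is the bookkeeping of the first two steps: explicitly diagonalising the $6\times6$ Floquet matrix and then performing the two nested squarings branch by branch and interval by interval while correctly tracking every sign change, so that each level set of a branch is matched with the right super- or sub-level set of $F$; this is what produces the long case distinction and the $\sqrt{13}$, $\sqrt{17}$ thresholds. The eigenfunction part is by comparison routine — the local recursion on a $12$-gon is elementary linear algebra, and the completeness is the standard IDS-jump count.
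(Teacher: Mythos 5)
Your proposal is correct and follows essentially the same route as the paper: read off the $6\times6$ Floquet matrix, factor the characteristic polynomial so that the $\theta$-dependence enters only through $F(u,v)$, identify the two flat bands at $1$ and $5/3$ and the four dispersive branches, and reduce each sublevel-set volume to Lemma~\ref{lem:F}~iv) via $L(E)=-\tfrac32+\tfrac92(3E^2-5E+1)^2$; your eigenfunction argument (local recursion on a $12$-gon forcing $(3\lambda-4)^2=1$, plus the trace/IDS-jump count giving completeness) supplies details the paper only asserts. One small slip: the quartic factor is quadratic in $E^2-\tfrac53E$, not $E^2-\tfrac{10}{3}E$, though the closed forms $\tfrac16\bigl(5\pm\sqrt{13\pm4\sqrt{2F+3}}\bigr)$ you state are the correct outcome.
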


\begin{remark}
 The eigenfunctions on the $(6.3)^2$ and the $(3.12^2)$ are (finite or infinite) linear combinations of eigenfunctions supported on a single hexagon or $12$-gon, respectively, see Figure~\ref{fig:eigenfunctions} for an illustration.
 One observes that both these tesselations share the feature that they contain an $2n$-gon which is either completely surrounded by triangles or where triangles are adjacent to every second edge.
 Since the $(3.6)^2$ tiling and the $(3.12^2)$ tiling are the only ones with this property, this might give an intuitive explaination why exactly these two tilings have finitely supported eigenfunctions.
 However, if one considers periodic graphs which are not based on a tesselation by regular polygons the situation might be different.
 Figure~\ref{fig:non-archimedean_eigenfunction} gives an example of a (non-archimedean) tesselation with finitely supported eigenfunctions.
\end{remark}

\begin{figure}[ht]
 \begin{tikzpicture}
 

\begin{scope}[scale =.6, xshift = -6.5cm]
 
 
 \draw[thick] (30:2) -- (150:2) -- (270:2) -- cycle;
 \draw[thick] (90:2) -- (210:2) -- (330:2) -- cycle;
 
 
 \draw[white, fill=white] (0:1.15470053838) circle (.15cm);
 \draw[white, fill=white] (60:1.15470053838) circle (.15cm);
 \draw[white, fill=white] (120:1.15470053838) circle (.15cm);
 \draw[white, fill=white] (180:1.15470053838) circle (.15cm);
 \draw[white, fill=white] (240:1.15470053838) circle (.15cm);
 \draw[white, fill=white] (300:1.15470053838) circle (.15cm);
 
 \fill (0:1.15470053838) circle (.1cm);
 \fill (60:1.15470053838) circle (.1cm);
 \fill (120:1.15470053838) circle (.1cm);
 \fill (180:1.15470053838) circle (.1cm);
 \fill (240:1.15470053838) circle (.1cm);
 \fill (300:1.15470053838) circle (.1cm);
 
 
 \draw[white, fill=white] (30:2) circle (.15cm);
 \draw[white, fill=white] (90:2) circle (.15cm);
 \draw[white, fill=white] (150:2) circle (.15cm);
 \draw[white, fill=white] (210:2) circle (.15cm);
 \draw[white, fill=white] (270:2) circle (.15cm);
 \draw[white, fill=white] (330:2) circle (.15cm);
 
 \fill (30:2) circle (.1cm);
 \fill (90:2) circle (.1cm);
 \fill (150:2) circle (.1cm);
 \fill (210:2) circle (.1cm);
 \fill (270:2) circle (.1cm);
 \fill (330:2) circle (.1cm); 
 
 
 \draw (0:.7) node {$1$};
 \draw (60:.7) node {$-1$};
 \draw (120:.7) node {$1$};
 \draw (180:.65) node {$-1$};
 \draw (240:.7) node {$1$};
 \draw (300:.7) node {$-1$};
 
 \draw (30:2.4) node {$0$};
 \draw (90:2.4) node {$0$};
 \draw (150:2.4) node {$0$};
 \draw (210:2.4) node {$0$};
 \draw (270:2.4) node {$0$};
 \draw (330:2.4) node {$0$};
  
 \end{scope}


 \begin{scope}[scale =.6]

 
  \draw[thick] (45:2) -- (75:2);
  \draw[thick] (165:2) -- (195:2);
  \draw[thick] (285:2) -- (315:2);
 
 \begin{scope}[rotate=120]
 \draw[thick] (-45:2) -- (-15:2) -- (15:2) -- (45:2);
 \draw[thick] (-45:2) -- (-30:2.82842712475) -- (-15:2);
 \draw[thick] (45:2) -- (30:2.82842712475) -- (15:2);

 \draw[white, fill=white] (-45:2) circle (.15cm);
 \draw[white, fill=white] (-15:2) circle (.15cm);
 \draw[white, fill=white] (15:2) circle (.15cm);
 \draw[white, fill=white] (45:2) circle (.15cm);
 \draw[white, fill=white] (-30:2.82842712475) circle (.15cm);
 \draw[white, fill=white] (30:2.82842712475) circle (.15cm);
 
 \fill (-45:2) circle (.1cm);
 \fill (-15:2) circle (.1cm);
 \fill (15:2) circle (.1cm);
 \fill (45:2) circle (.1cm);
 \fill (-30:2.82842712475) circle (.1cm);
 \fill (30:2.82842712475) circle (.1cm);
 
 \draw (-45:1.5) node {$1$};
 \draw (-15:1.5) node {$-1$};
 \draw (15:1.5) node {$1$};
 \draw (45:1.5) node {$-1$};
 \draw (-30:3.3) node {$0$};
 \draw (30:3.3) node {$0$};
 \end{scope}
 
 \begin{scope}[rotate=0]
 \draw[thick] (-45:2) -- (-15:2) -- (15:2) -- (45:2);
 \draw[thick] (-45:2) -- (-30:2.82842712475) -- (-15:2);
 \draw[thick] (45:2) -- (30:2.82842712475) -- (15:2);

 \draw[white, fill=white] (-45:2) circle (.15cm);
 \draw[white, fill=white] (-15:2) circle (.15cm);
 \draw[white, fill=white] (15:2) circle (.15cm);
 \draw[white, fill=white] (45:2) circle (.15cm);
 \draw[white, fill=white] (-30:2.82842712475) circle (.15cm);
 \draw[white, fill=white] (30:2.82842712475) circle (.15cm);
 
 \fill (-45:2) circle (.1cm);
 \fill (-15:2) circle (.1cm);
 \fill (15:2) circle (.1cm);
 \fill (45:2) circle (.1cm);
 \fill (-30:2.82842712475) circle (.1cm);
 \fill (30:2.82842712475) circle (.1cm);
 
 \draw (-45:1.5) node {$1$};
 \draw (-15:1.5) node {$-1$};
 \draw (15:1.5) node {$1$};
 \draw (45:1.5) node {$-1$};
 \draw (-30:3.3) node {$0$};
 \draw (30:3.3) node {$0$};
 \end{scope}

 \begin{scope}[rotate=240]
 \draw[thick] (-45:2) -- (-15:2) -- (15:2) -- (45:2);
 \draw[thick] (-45:2) -- (-30:2.82842712475) -- (-15:2);
 \draw[thick] (45:2) -- (30:2.82842712475) -- (15:2);

 \draw[white, fill=white] (-45:2) circle (.15cm);
 \draw[white, fill=white] (-15:2) circle (.15cm);
 \draw[white, fill=white] (15:2) circle (.15cm);
 \draw[white, fill=white] (45:2) circle (.15cm);
 \draw[white, fill=white] (-30:2.82842712475) circle (.15cm);
 \draw[white, fill=white] (30:2.82842712475) circle (.15cm);
 
 \fill (-45:2) circle (.1cm);
 \fill (-15:2) circle (.1cm);
 \fill (15:2) circle (.1cm);
 \fill (45:2) circle (.1cm);
 \fill (-30:2.82842712475) circle (.1cm);
 \fill (30:2.82842712475) circle (.1cm);
 
 \draw (-45:1.5) node {$1$};
 \draw (-15:1.5) node {$-1$};
 \draw (15:1.5) node {$1$};
 \draw (45:1.5) node {$-1$};
 \draw (-30:3.3) node {$0$};
 \draw (30:3.3) node {$0$};
 \end{scope}  
 \end{scope}


 \begin{scope}[scale =.6, xshift = 7cm]

 
  \draw[thick] (45:2) -- (75:2);
  \draw[thick] (165:2) -- (195:2);
  \draw[thick] (285:2) -- (315:2);
 
 \begin{scope}[rotate=120]
 \draw[thick] (-45:2) -- (-15:2) -- (15:2) -- (45:2);
 \draw[thick] (-45:2) -- (-30:2.82842712475) -- (-15:2);
 \draw[thick] (45:2) -- (30:2.82842712475) -- (15:2);

 \draw[white, fill=white] (-45:2) circle (.15cm);
 \draw[white, fill=white] (-15:2) circle (.15cm);
 \draw[white, fill=white] (15:2) circle (.15cm);
 \draw[white, fill=white] (45:2) circle (.15cm);
 \draw[white, fill=white] (-30:2.82842712475) circle (.15cm);
 \draw[white, fill=white] (30:2.82842712475) circle (.15cm);
 
 \fill (-45:2) circle (.1cm);
 \fill (-15:2) circle (.1cm);
 \fill (15:2) circle (.1cm);
 \fill (45:2) circle (.1cm);
 \fill (-30:2.82842712475) circle (.1cm);
 \fill (30:2.82842712475) circle (.1cm);
 
 \draw (-45:1.5) node {$1$};
 \draw (-15:1.5) node {$-1$};
 \draw (15:1.5) node {$-1$};
 \draw (45:1.5) node {$1$};
 \draw (-30:3.3) node {$0$};
 \draw (30:3.3) node {$0$};
 \end{scope}
 
 \begin{scope}[rotate=0]
 \draw[thick] (-45:2) -- (-15:2) -- (15:2) -- (45:2);
 \draw[thick] (-45:2) -- (-30:2.82842712475) -- (-15:2);
 \draw[thick] (45:2) -- (30:2.82842712475) -- (15:2);

 \draw[white, fill=white] (-45:2) circle (.15cm);
 \draw[white, fill=white] (-15:2) circle (.15cm);
 \draw[white, fill=white] (15:2) circle (.15cm);
 \draw[white, fill=white] (45:2) circle (.15cm);
 \draw[white, fill=white] (-30:2.82842712475) circle (.15cm);
 \draw[white, fill=white] (30:2.82842712475) circle (.15cm);
 
 \fill (-45:2) circle (.1cm);
 \fill (-15:2) circle (.1cm);
 \fill (15:2) circle (.1cm);
 \fill (45:2) circle (.1cm);
 \fill (-30:2.82842712475) circle (.1cm);
 \fill (30:2.82842712475) circle (.1cm);
 
 \draw (-45:1.5) node {$1$};
 \draw (-15:1.5) node {$-1$};
 \draw (15:1.5) node {$-1$};
 \draw (45:1.5) node {$1$};
 \draw (-30:3.3) node {$0$};
 \draw (30:3.3) node {$0$};
 \end{scope}

 \begin{scope}[rotate=240]
 \draw[thick] (-45:2) -- (-15:2) -- (15:2) -- (45:2);
 \draw[thick] (-45:2) -- (-30:2.82842712475) -- (-15:2);
 \draw[thick] (45:2) -- (30:2.82842712475) -- (15:2);

 \draw[white, fill=white] (-45:2) circle (.15cm);
 \draw[white, fill=white] (-15:2) circle (.15cm);
 \draw[white, fill=white] (15:2) circle (.15cm);
 \draw[white, fill=white] (45:2) circle (.15cm);
 \draw[white, fill=white] (-30:2.82842712475) circle (.15cm);
 \draw[white, fill=white] (30:2.82842712475) circle (.15cm);
 
 \fill (-45:2) circle (.1cm);
 \fill (-15:2) circle (.1cm);
 \fill (15:2) circle (.1cm);
 \fill (45:2) circle (.1cm);
 \fill (-30:2.82842712475) circle (.1cm);
 \fill (30:2.82842712475) circle (.1cm);
 
 \draw (-45:1.5) node {$1$};
 \draw (-15:1.5) node {$-1$};
 \draw (15:1.5) node {$-1$};
 \draw (45:1.5) node {$1$};
 \draw (-30:3.3) node {$0$};
 \draw (30:3.3) node {$0$};
 \end{scope}  
 \end{scope} 
 
 \end{tikzpicture}
 \caption{Eigenfunction in the $(3.6)^2$ tesselation with support on a single hexagon (left) and the two types of eigenfunctions in the $(3.12^2)$ tesselation with support on a single $12$-gon (center and right).}
 \label{fig:eigenfunctions}
 \end{figure}
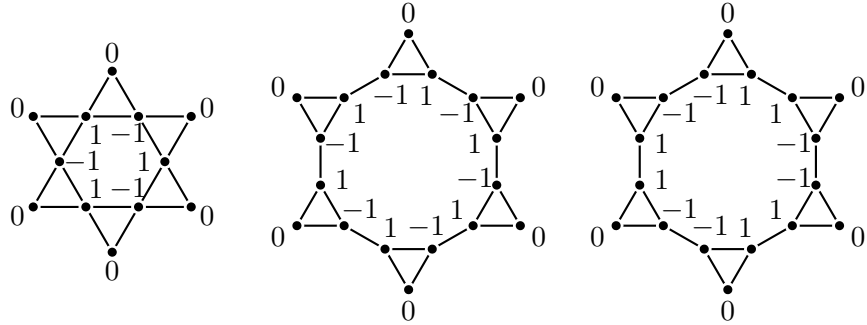

  \begin{figure}
 \begin{tikzpicture}

\begin{scope}
 \draw[thick] (22.5:1) -- 
	(67.5:1) --
	(112.5:1) --
	(157.5:1) --
	(202.5:1) --
	(247.5:1) --
	(292.5:1) --
	(-22.5:1) -- cycle;

\draw[thick] (22.5:1) -- (22.5:2) -- (0:2.2) -- (-22.5:2) -- (-22.5:1);
\draw[thick] (22.5:2) -- (2.2,2.2) -- (67.5:2) -- (67.5:1);

\begin{scope}[rotate=0]
\draw[thick] (22.5:1) -- (22.5:2) -- (0:2.2) -- (-22.5:2) -- (-22.5:1);
\draw[thick] (22.5:2) -- (2.2,2.2) -- (67.5:2) -- (67.5:1);

\draw[white, fill=white] (-22.5:1) circle (.15cm);
\draw[white, fill=white] (-22.5:2) circle (.15cm);
\draw[white, fill=white] (0:2.2) circle (.15cm);
\draw[white, fill=white] (22.5:1) circle (.15cm);
\draw[white, fill=white] (22.5:2) circle (.15cm);
\draw[white, fill=white] (2.2,2.2) circle (.15cm);

\fill (-22.5:1) circle (.075cm);
\fill (-22.5:2) circle (.075cm);
\fill (0:2.2) circle (.075cm);
\fill (22.5:1) circle (.075cm);
\fill (22.5:2) circle (.075cm);
\fill (2.2,2.2) circle (.075cm);

\draw (-22.5:.7) node {$a$};
\draw (22.5:.65) node {$-a$};
\draw (-21:2.4) node {$-1$};
\draw (22.5:2.3) node {$1$};
\draw (0:2.55) node {$0$};
\end{scope}	

\begin{scope}[rotate=90]
\draw[thick] (22.5:1) -- (22.5:2) -- (0:2.2) -- (-22.5:2) -- (-22.5:1);
\draw[thick] (22.5:2) -- (2.2,2.2) -- (67.5:2) -- (67.5:1);

\draw[white, fill=white] (-22.5:1) circle (.15cm);
\draw[white, fill=white] (-22.5:2) circle (.15cm);
\draw[white, fill=white] (0:2.2) circle (.15cm);
\draw[white, fill=white] (22.5:1) circle (.15cm);
\draw[white, fill=white] (22.5:2) circle (.15cm);
\draw[white, fill=white] (2.2,2.2) circle (.15cm);

\fill (-22.5:1) circle (.075cm);
\fill (-22.5:2) circle (.075cm);
\fill (0:2.2) circle (.075cm);
\fill (22.5:1) circle (.075cm);
\fill (22.5:2) circle (.075cm);
\fill (2.2,2.2) circle (.075cm);

\draw (-22.5:.7) node {$a$};
\draw (22.5:.65) node {$-a$};
\draw (-22.5:2.35) node {$-1$};
\draw (22.5:2.3) node {$1$};
\draw (0:2.55) node {$0$};
\end{scope}	

\begin{scope}[rotate=180]
\draw[thick] (22.5:1) -- (22.5:2) -- (0:2.2) -- (-22.5:2) -- (-22.5:1);
\draw[thick] (22.5:2) -- (2.2,2.2) -- (67.5:2) -- (67.5:1);

\draw[white, fill=white] (-22.5:1) circle (.15cm);
\draw[white, fill=white] (-22.5:2) circle (.15cm);
\draw[white, fill=white] (0:2.2) circle (.15cm);
\draw[white, fill=white] (22.5:1) circle (.15cm);
\draw[white, fill=white] (22.5:2) circle (.15cm);
\draw[white, fill=white] (2.2,2.2) circle (.15cm);

\fill (-22.5:1) circle (.075cm);
\fill (-22.5:2) circle (.075cm);
\fill (0:2.2) circle (.075cm);
\fill (22.5:1) circle (.075cm);
\fill (22.5:2) circle (.075cm);
\fill (2.2,2.2) circle (.075cm);

\draw (-22.5:.7) node {$a$};
\draw (22.5:.65) node {$-a$};
\draw (-21:2.4) node {$-1$};
\draw (22.5:2.3) node {$1$};
\draw (0:2.55) node {$0$};
\end{scope}	

\begin{scope}[rotate=270]
\draw[thick] (22.5:1) -- (22.5:2) -- (0:2.2) -- (-22.5:2) -- (-22.5:1);
\draw[thick] (22.5:2) -- (2.2,2.2) -- (67.5:2) -- (67.5:1);

\draw[white, fill=white] (-22.5:1) circle (.15cm);
\draw[white, fill=white] (-22.5:2) circle (.15cm);
\draw[white, fill=white] (0:2.2) circle (.15cm);
\draw[white, fill=white] (22.5:1) circle (.15cm);
\draw[white, fill=white] (22.5:2) circle (.15cm);
\draw[white, fill=white] (2.2,2.2) circle (.15cm);

\fill (-22.5:1) circle (.075cm);
\fill (-22.5:2) circle (.075cm);
\fill (0:2.2) circle (.075cm);
\fill (22.5:1) circle (.075cm);
\fill (22.5:2) circle (.075cm);
\fill (2.2,2.2) circle (.075cm);

\draw (-22.5:.7) node {$a$};
\draw (22.5:.65) node {$-a$};
\draw (-22.5:2.35) node {$-1$};
\draw (22.5:2.3) node {$1$};
\draw (0:2.55) node {$0$};
\end{scope}	


\draw (2.55,2.3) node {$0$};
\draw (-2.55,2.3) node {$0$};
\draw (2.55,-2.3) node {$0$};
\draw (-2.55,-2.3) node {$0$};
	
\draw[white, fill=white] (-22.5:1) circle (.15cm);
\draw[white, fill=white] (-22.5:2) circle (.15cm);
\fill (-22.5:1) circle (.075cm);
\fill (-22.5:2) circle (.075cm);

\end{scope}

 \end{tikzpicture}
 \caption{An example of an elementary cell of a planar, periodic, but non-Archimedean tiling with finitely supported eigenfunctions where the choice $a = 1 \pm \sqrt{2}$ yields an eigenfunctions to the eigenvalue $\lambda = 1 + a/3$.
 }
 
 \label{fig:non-archimedean_eigenfunction}
\end{figure}
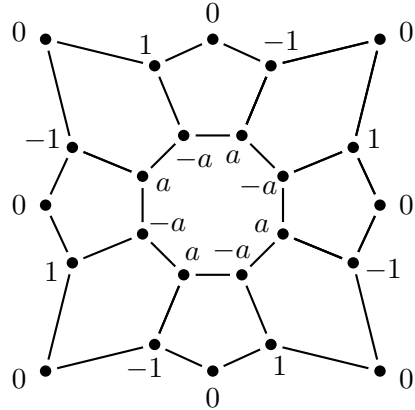

\section{Absence of $\ell^2$-eigenfunctions on the remaining Archimedean tilings}
\label{sec:4}

We show in this section that the remaining Archimedean Tilings, namely
$(3^3.4^2)$, $(4.8^2)$, $(3^2.4.3.4)$, $(3.4.6.4)$, $(4.6.12)$, and
$(3^4.6)$ do not have $\ell^2$-eigenfunctions.  Therefore, their IDS'
are continuous whence -- in the light of the discussion in Remark~\ref{rem:IDS_spectral_measure} -- their have purely absolutely continuous spectrum.
Sufficient geometric conditions for the absence of
finitely supported eigenfunctions in plane tessellations, based on
combinatorial curvature, were given in \cite{KLPS06,Kel11} (see also
\cite{PTV17} about the topic of finitely supported eigenfunctions and
unique continuation). These curvature conditions are not satisfied in the
examples under consideration, so we need to employ Theorem
\ref{thm:equivalence} instead. Since we do not always have explicit
expressions of the eigenvalues of the operators $\Delta^\theta$ or the
volumes of their sublevels sets are too difficult to handle, we will
not provide explicit integral expressions for these IDS', but we are
still able to exclude the existence of $\ell^2$-eigenfunctions.
In fact, for each tiling, we will find the $\theta$-dependent matrix $\Delta^\theta$, make two choices $\theta, \theta' \in \TT^2$, and see that the sets of eigenvalues of $\Delta^\theta$ and $\Delta^{\theta'}$ are disjoint.

\subsection{IDS of the $(3^3.4^2)$ tiling}

A fundamental domain consists of two points $\{a,b\}$ as in Figure~\ref{fig:nikolaushaus}.
This leads to the matrix
\[
 \Delta^\theta 
 =
 \operatorname{Id} 
 -
 \frac{1}{5}
 \begin{pmatrix}
  e^{i \theta_1} + e^{-i \theta_1} & 1 + e^{i \theta_2} + e^{i (\theta_2 - \theta_1)}\\
  1 + e^{- i \theta_2} + e^{- i (\theta_2 - \theta_1)} & e^{i \theta_1} + e^{-i \theta_1}\\
 \end{pmatrix}.
\]
with eigenvalues
\begin{align*}
 \lambda_{\pm} 
 &= 
 1 - \frac{2}{5} \cos(\theta_1) \pm \frac{1}{5} \lvert 1 + e^{i(\theta_1 - \theta_2)} + e^{i \theta_2} \rvert\\
 &=
 1 - \frac{2}{5} \cos(\theta_1) \pm \frac{1}{5} \sqrt{3 + 2 \cos(\theta_1) + 2 \cos(\theta_2) + 2 \cos(\theta_2 - \theta_1)}.
\end{align*}
\begin{figure}[ht]
 \begin{tikzpicture}
  \draw[fill = black] (0,0) circle (2pt);
  \draw[fill = white] (0:1) circle (2pt);
  \draw[fill = white] (60:1) circle (2pt);
  \draw[fill = white] (120:1) circle (2pt);
  \draw[fill = white] (180:1) circle (2pt);
  
  \draw[thick] (0:.2) -- (0:.8);
  \draw[thick] (60:.2) -- (60:.8);
  \draw[thick] (120:.2) -- (120:.8);
  \draw[thick] (180:.2) -- (180:.8);
  
  \draw (.2,-.2) node {$a$};
  \draw (0:1.7) node {$a + \omega_1$};
  \draw (60:1.6) node {$b + \omega_2$};
  \draw (120:1.3) node {$b + \omega_2 - \omega_1$};
  \draw (180:1.6) node {$a - \omega_1$};
%
  
  \begin{scope}[yshift = -1cm]
  \draw[fill = black] (0,0) circle (2pt);
  \draw[fill = white] (180:1) circle (2pt);
  \draw[fill = white] (240:1) circle (2pt);
  \draw[fill = white] (300:1) circle (2pt);
  \draw[fill = white] (360:1) circle (2pt);
  
  \draw[thick] (180:.2) -- (180:.8);
  \draw[thick] (240:.2) -- (240:.8);
  \draw[thick] (300:.2) -- (300:.8);
  \draw[thick] (360:.2) -- (360:.8);
  
  \draw (.2,.2) node {$b$};
  \draw (180:1.6) node {$b - \omega_1$};
  \draw (240:1.6) node {$a - \omega_2$};
  \draw (300:1.3) node {$a - \omega_2 + \omega_1$};
  \draw (360:1.7) node {$b + \omega_1$};
  \end{scope}

  \draw[thick] (0,-.2) -- (0,-.8);
 \end{tikzpicture}
 \caption{Fundamental domain of the $(3^3.4^2)$ tiling}
\label{fig:nikolaushaus}
\end{figure}
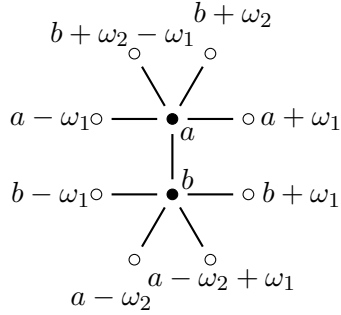
Plugging in $\theta = (0,0)$ and $\theta = (0, \pi)$, we find
\begin{align*}
 \sigma \left( \Delta^{(0,0)} \right)
 =
 \left\{
  0, \frac{6}{5}
 \right\}
 \quad
 \text{and}
 \quad
 \sigma \left( \Delta^{(0,\pi)} \right)
 =
 \left\{
  \frac{2}{5}, \frac{4}{5}
 \right\}.
\end{align*}
Since these sets are disjoint, Theorem~\ref{thm:equivalence} and Corollary~\ref{cor:IDS_continuous} imply
\begin{proposition}
 The $(3^3.4^2)$ tiling has no $\ell^2(\V)$-eigenfunctions.
\end{proposition}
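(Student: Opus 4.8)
The plan is to invoke Corollary~\ref{cor:IDS_continuous} together with the equivalence of items (i) and (iii) in Theorem~\ref{thm:equivalence}. Thus it suffices to exhibit two parameters $\theta,\theta'\in\TT^2$ whose Floquet matrices $\Delta^\theta$ and $\Delta^{\theta'}$ have disjoint spectra; this will force $N_{(3^3.4^2)}$ to be continuous at every energy, and hence $\Delta$ will have no $\ell^2(\V)$-eigenfunction for any eigenvalue.

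First I would write down the $2\times2$ Floquet matrix attached to the fundamental domain $Q=\{a,b\}$ of Figure~\ref{fig:nikolaushaus}. Both $a$ and $b$ have degree $5$; reading off the neighbours ($a$ is joined to $a\pm\omega_1$, to $b$, $b+\omega_2$ and $b+\omega_2-\omega_1$, and symmetrically for $b$) yields
\[
 \Delta^\theta = \operatorname{Id} - \frac{1}{5}\begin{pmatrix} 2\cos\theta_1 & 1 + e^{i\theta_2} + e^{i(\theta_2-\theta_1)} \\ 1 + e^{-i\theta_2} + e^{-i(\theta_2-\theta_1)} & 2\cos\theta_1 \end{pmatrix}.
\]
A Hermitian matrix with equal diagonal entries $\alpha$ and off-diagonal entry $\beta$ has eigenvalues $\alpha\pm|\beta|$, so
\[
 \lambda_\pm(\theta) = 1 - \tfrac{2}{5}\cos\theta_1 \pm \tfrac{1}{5}\bigl|1 + e^{i(\theta_1-\theta_2)} + e^{i\theta_2}\bigr| = 1 - \tfrac{2}{5}\cos\theta_1 \pm \tfrac{1}{5}\sqrt{3 + 2\cos\theta_1 + 2\cos\theta_2 + 2\cos(\theta_2-\theta_1)}.
\]

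Next I would evaluate at the convenient choices $\theta=(0,0)$ and $\theta'=(0,\pi)$. At $\theta=(0,0)$ the radicand equals $9$ and $\cos\theta_1=1$, so $\lambda_\pm=\tfrac35\pm\tfrac35$ and $\sigma(\Delta^{(0,0)})=\{0,\tfrac65\}$. At $\theta'=(0,\pi)$ one has $\cos\theta_1=1$ and $\cos\theta_2=\cos(\theta_2-\theta_1)=-1$, so the radicand collapses to $3+2-2-2=1$ and $\lambda_\pm=\tfrac35\pm\tfrac15$, giving $\sigma(\Delta^{(0,\pi)})=\{\tfrac25,\tfrac45\}$. Since $\{0,\tfrac65\}\cap\{\tfrac25,\tfrac45\}=\emptyset$, Corollary~\ref{cor:IDS_continuous} yields continuity of the IDS and Theorem~\ref{thm:equivalence} finishes the proof.

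I do not expect any genuine obstacle here: the argument is a $2\times2$ eigenvalue computation followed by a numerical disjointness check. The only points requiring a little care are the combinatorial bookkeeping that produces the correct off-diagonal entry of $\Delta^\theta$ from Figure~\ref{fig:nikolaushaus}, and the (mild) cleverness in choosing $\theta'=(0,\pi)$: this keeps the spectral centre $1-\tfrac25\cos\theta_1=\tfrac35$ unchanged while shrinking the spectral radius $\tfrac15\sqrt{\cdots}$ from $\tfrac35$ to $\tfrac15$, so that the pair $\{\tfrac25,\tfrac45\}$ lands strictly inside the gap $(0,\tfrac65)$ between the two eigenvalues at $\theta=(0,0)$.
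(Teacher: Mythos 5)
Your proposal is correct and follows exactly the paper's argument: the same Floquet matrix for the fundamental domain $\{a,b\}$, the same eigenvalue formula $\lambda_\pm = 1 - \tfrac{2}{5}\cos\theta_1 \pm \tfrac{1}{5}\sqrt{3 + 2\cos\theta_1 + 2\cos\theta_2 + 2\cos(\theta_2-\theta_1)}$, and the same choices $\theta=(0,0)$, $\theta'=(0,\pi)$ yielding the disjoint spectra $\{0,\tfrac{6}{5}\}$ and $\{\tfrac{2}{5},\tfrac{4}{5}\}$. All computations check out.
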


\subsection{IDS of the $(4.8^2)$ tiling}

A fundamental domain consists of the four vertices $\{ a,b,c,d \}$ adjacent to a square, cf. Figure~\ref{fig:4.8^2}.
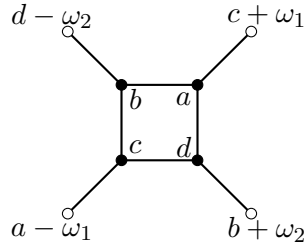
\begin{figure}[ht]
 \begin{tikzpicture}
 \draw[fill = black] (45:0.70710678118) circle (2pt);
 \draw[fill = black] (135:0.70710678118) circle (2pt);
 \draw[fill = black] (225:0.70710678118) circle (2pt);
 \draw[fill = black] (315:0.70710678118) circle (2pt); 
 
 \draw (45:0.45) node {$a$};
 \draw (135:0.45) node {$b$};
 \draw (225:0.45) node {$c$};
 \draw (315:0.45) node {$d$};
 
 \draw[thick] (45:0.70710678118) -- (135:0.70710678118) -- (225:0.70710678118) -- (315:0.70710678118) -- (45:0.70710678118);
  \draw[thick] (45:0.70710678118) -- (45:1.70710678118);
  \draw[thick] (135:0.70710678118) -- (135:1.70710678118);
  \draw[thick] (225:0.70710678118) -- (225:1.70710678118);
  \draw[thick] (315:0.70710678118) -- (315:1.70710678118);
  
 \draw[fill = white] (45:1.70710678118) circle (2pt);
 \draw[fill = white] (135:1.70710678118) circle (2pt);
 \draw[fill = white] (225:1.70710678118) circle (2pt);
 \draw[fill = white] (315:1.70710678118) circle (2pt);
 
 \draw (45:2) node {$c + \omega_1$};
 \draw (135:2) node {$d - \omega_2$};
 \draw (225:2) node {$a - \omega_1$};
 \draw (315:2) node {$b + \omega_2$};
 \end{tikzpicture}
 \caption{Fundamental domain of the $(4.8^2)$ tiling}
\label{fig:4.8^2}
\end{figure}
It leads to the matrix
\[
 \Delta^\theta 
 =
 \operatorname{Id}
 -
 \frac{1}{3}
 \begin{pmatrix}
 0 & 1 & e^{i \theta_1} & 1\\
 1 & 0 & 1 & e^{- i \theta_2}\\
 e^{ - i \theta_1} & 1 & 0 & 1 \\
 1 & e^{ i \theta_2} & 1 & 0 \\ 
 \end{pmatrix}
\] 
Inserting the values $\theta = (0,0)$ and $\theta = (\pi, \pi)$, we find
\begin{align*}
 \sigma \left( \Delta^{(0,0)} \right)
 =
 \left\{ 
 0, \frac{4}{3}
 \right\}
 \quad
 \text{and}
 \quad
 \sigma \left( \Delta^{(\pi, \pi)} \right)
 =
 \left\{
 \frac{2}{3}, 2
 \right\}.
\end{align*}
Since the spectra are disjoint, Theorem~\ref{thm:equivalence} and Corollary~\ref{cor:IDS_continuous}   imply
\begin{proposition}
 The $(4.8^2)$ tiling has no $\ell^2(\V)$-eigenfunctions.
\end{proposition}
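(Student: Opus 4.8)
The proof will follow the scheme announced at the beginning of Section~\ref{sec:4}. Since the combinatorial-curvature conditions of \cite{KLPS06,Kel11} are not met here, the idea is to produce two phases $\theta,\theta'\in\TT^2$ with $\sigma(\Delta^\theta)\cap\sigma(\Delta^{\theta'})=\emptyset$; then Corollary~\ref{cor:IDS_continuous} shows that $N_{(4.8^2)}$ is continuous, and the equivalence of items (i) and (iii) in Theorem~\ref{thm:equivalence} yields that $\Delta$ has no $\ell^2(\V)$-eigenfunctions at all.

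Concretely, I would take the $4\times4$ matrix $\Delta^\theta$ attached to the fundamental domain $\{a,b,c,d\}$ of Figure~\ref{fig:4.8^2} (the matrix displayed just above the statement) and evaluate it at $\theta=(0,0)$ and at $\theta'=(\pi,\pi)$. For $\theta=(0,0)$ the matrix becomes $\operatorname{Id}-\tfrac13(J-\operatorname{Id})$, where $J$ is the $4\times4$ all-ones matrix; from $\sigma(J)=\{4,0,0,0\}$ one obtains $\sigma(J-\operatorname{Id})=\{3,-1,-1,-1\}$ and hence $\sigma(\Delta^{(0,0)})=\{0,\tfrac43\}$. For $\theta'=(\pi,\pi)$ the substitution $e^{\pm i\pi}=-1$ turns $\Delta^{\theta'}$ into $\operatorname{Id}-\tfrac13 B$ with $B$ a real symmetric matrix whose entries are $\pm1$; one checks directly that the three sign vectors $(1,1,1,1)$, $(1,1,-1,-1)$, $(1,-1,-1,1)$ span the eigenspace of $B$ for the eigenvalue $1$, while $(1,-1,1,-1)$ is an eigenvector of $B$ for the eigenvalue $-3$. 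Hence $\sigma(\Delta^{(\pi,\pi)})=\{\tfrac23,2\}$, and since $\{0,\tfrac43\}\cap\{\tfrac23,2\}=\emptyset$ the argument is concluded by invoking Corollary~\ref{cor:IDS_continuous} and Theorem~\ref{thm:equivalence}.

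There is essentially no analytic obstacle in this Proposition: everything reduces to diagonalising two small Hermitian matrices. The only step requiring a moment's thought --- and the only place one could get stuck --- is the \emph{choice} of the two phases: one needs $\theta$ and $\theta'$ for which the finite eigenvalue sets turn out to be genuinely disjoint rather than merely partially overlapping, and it is worth recording why $(0,0)$ and $(\pi,\pi)$ work, namely that at $(0,0)$ the hopping matrix is ``coherent'' (essentially $J-\operatorname{Id}$, forcing the extreme eigenvalues $0$ and $\tfrac43$) whereas at $(\pi,\pi)$ the sign changes destroy this coherence and push the spectrum to $\{\tfrac23,2\}$. Beyond that, the computation of the two spectra is routine and the remaining steps are direct citations of Corollary~\ref{cor:IDS_continuous} and Theorem~\ref{thm:equivalence}.
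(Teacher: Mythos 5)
Your proposal is correct and follows exactly the paper's own argument: the same choices $\theta=(0,0)$ and $\theta'=(\pi,\pi)$, the same resulting spectra $\{0,\tfrac43\}$ and $\{\tfrac23,2\}$, and the same invocation of Theorem~\ref{thm:equivalence} and Corollary~\ref{cor:IDS_continuous}. The only difference is that you spell out the diagonalisation (via $J-\operatorname{Id}$ and the explicit sign eigenvectors), which the paper leaves implicit; your computations check out.
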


\subsection{IDS of the $(3^2.4.3.4)$ tiling}

\begin{figure}[ht]
 \begin{tikzpicture}
 \draw[fill = black] (45:0.70710678118) circle (2pt);
 \draw[fill = black] (135:0.70710678118) circle (2pt);
 \draw[fill = black] (225:0.70710678118) circle (2pt);
 \draw[fill = black] (315:0.70710678118) circle (2pt); 
  \draw[thick] (45:0.70710678118) -- (135:0.70710678118) -- (225:0.70710678118) -- (315:0.70710678118) -- (45:0.70710678118);
  
 \draw (45:0.45) node {$a$};
 \draw (135:0.45) node {$b$};
 \draw (225:0.45) node {$c$};
 \draw (315:0.45) node {$d$};


\begin{scope}[rotate = 0]
 \draw[thick] (135:0.70710678118) -- (0,1.36602540378) -- (45:0.70710678118) -- (1,1.36602540378) -- (0,1.36602540378);
 \draw[fill = white] (0,1.36602540378) circle (2pt);
 \draw[fill = white] (1,1.36602540378) circle (2pt);
 \draw (-.2,1.6) node {$c + \omega_1$};
 \draw (1.2,1.6) node {$d + \omega_1$};
\end{scope}

\begin{scope}[rotate = 270]
 \draw[thick] (135:0.70710678118) -- (0,1.36602540378) -- (45:0.70710678118) -- (1,1.36602540378) -- (0,1.36602540378);
 \draw[fill = white] (0,1.36602540378) circle (2pt);
 \draw[fill = white] (1,1.36602540378) circle (2pt);
 \draw (0,2) node {$b + \omega_2$};
 \draw (1,2) node {$c + \omega_2$};
\end{scope}

\begin{scope}[rotate = 180]
 \draw[thick] (135:0.70710678118) -- (0,1.36602540378) -- (45:0.70710678118) -- (1,1.36602540378) -- (0,1.36602540378);
 \draw[fill = white] (0,1.36602540378) circle (2pt);
 \draw[fill = white] (1,1.36602540378) circle (2pt);
 \draw (-.2,1.6) node {$a - \omega_1$};
 \draw (1.2,1.6) node {$b - \omega_1$};
\end{scope}

\begin{scope}[rotate = 90]
 \draw[thick] (135:0.70710678118) -- (0,1.36602540378) -- (45:0.70710678118) -- (1,1.36602540378) -- (0,1.36602540378);
 \draw[fill = white] (0,1.36602540378) circle (2pt);
 \draw[fill = white] (1,1.36602540378) circle (2pt);
 \draw (0,2) node {$d - \omega_2$};
 \draw (1,2) node {$a - \omega_2$}; 
\end{scope}
 \end{tikzpicture}
 \caption{Fundamental domain of the $(3^2.4.3.4)$ tiling}
\label{fig:3^2.4.3.4}
\end{figure}

A fundamental domain consists of the four vertices $\{ a,b,c,d \}$ adjacent to a square with edges parallel to the axes, cf. Figure~\ref{fig:3^2.4.3.4}.
It leads to the matrix
\[
 \Delta^\theta 
 =
 \operatorname{Id}
 -
 \frac{1}{5}
 \begin{pmatrix}
 0 & 1 + e^{i \theta_2} & e^{i \theta_1} & 1 + e^{i \theta_1}\\
 1 + e^{- i \theta_2} & 0 & 1 + e^{i \theta_1} & e^{- i \theta_2}  \\
 e^{- i \theta_1} & 1 + e^{- i \theta_1} & 0 & 1 + e^{- i \theta_2}  \\
 1 + e^{- i \theta_1} & e^{i \theta_2} & 1 + e^{i \theta_2} & 0 \\
 \end{pmatrix}.
\] 
Inserting $\theta = (0,0)$ and $\theta = (\pi,0)$, we find
\[
 \sigma \left( \Delta^{(0,0)} \right)
 =
 \left\{ 
  0, \frac{6}{5}, \frac{8}{5}
 \right\}
 \quad
 \text{and}
 \quad
 \sigma \left( \Delta^{(\pi,0)} \right)
 =
 \left\{
 1 - \sqrt{5}^{-1}, 1 + \sqrt{5}^{-1}
 \right\}
 .
\]
Again, these sets are disjoint whence Theorem~\ref{thm:equivalence} and Corollary~\ref{cor:IDS_continuous} imply
\begin{proposition}
 The $(3^2.4.3.4)$-tiling has no $\ell^2(\V)$-eigenfunctions.
\end{proposition}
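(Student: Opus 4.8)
The plan is to invoke Corollary~\ref{cor:IDS_continuous}: it suffices to produce two quasi-momenta $\theta, \theta' \in \TT^2$ with $\sigma(\Delta^\theta) \cap \sigma(\Delta^{\theta'}) = \emptyset$, since then $N_{(3^2.4.3.4)}$ is continuous and Theorem~\ref{thm:equivalence} (equivalence of (i) and (iii)) excludes $\ell^2(\V)$-eigenfunctions altogether. Here $|Q| = 4$, so the displayed $\Delta^\theta$ is a Hermitian $4 \times 4$ matrix; the strategy is to pick evaluation points in $\{0,\pi\}^2$ at which it becomes real and has enough structure to be diagonalized by hand.

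First I would take $\theta = (0,0)$. Then every exponential equals $1$ and $\Delta^{(0,0)} = \id - \tfrac15 N$, where $N$ is the real symmetric \emph{circulant} matrix with first row $(0,2,1,2)$. The eigenvalues of a $4 \times 4$ circulant with symbol $(c_0,c_1,c_2,c_3)$ are $\sum_{j} c_j\, i^{jk}$ for $k = 0,1,2,3$, which here yields $5, -1, -3, -1$; hence $\sigma(\Delta^{(0,0)}) = \{0, \tfrac65, \tfrac85\}$.

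Next I would take $\theta' = (\pi,0)$, so that $e^{\pm i\theta_1} = -1$ and $e^{\pm i\theta_2} = 1$, giving $\Delta^{(\pi,0)} = \id - \tfrac15 M$ with $M$ the real symmetric matrix whose rows are $(0,2,-1,0)$, $(2,0,0,1)$, $(-1,0,0,2)$, $(0,1,2,0)$. A one-line computation shows $M^2 = 5\,\id$, so $M$ has the two eigenvalues $\pm\sqrt5$ (each of multiplicity $2$, since $\tr M = 0$), and therefore $\sigma(\Delta^{(\pi,0)}) = \{1 - \tfrac{1}{\sqrt5}, 1 + \tfrac{1}{\sqrt5}\}$.

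Finally, since $1 - 1/\sqrt5 \approx 0.553$ and $1 + 1/\sqrt5 \approx 1.447$ coincide with none of $0$, $\tfrac65 = 1.2$, $\tfrac85 = 1.6$, the two spectra are disjoint, and Corollary~\ref{cor:IDS_continuous} together with Theorem~\ref{thm:equivalence} finishes the argument. There is no genuine obstacle beyond choosing the right $\theta, \theta'$: one wants points where $\Delta^\theta$ becomes real and either circulant or a scalar multiple of an involution, and the corners of $\{0,\pi\}^2$ provide enough candidates that a short inspection succeeds; were none to work, one would fall back on the real-analyticity of $\theta \mapsto P_{\Delta^\theta}(E)$ in Theorem~\ref{thm:equivalence}(iv) with finer choices of $\theta$.
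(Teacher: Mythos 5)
Your proposal is correct and follows exactly the paper's route: the paper likewise applies Corollary~\ref{cor:IDS_continuous} with the same two points $\theta=(0,0)$ and $\theta'=(\pi,0)$, obtaining $\sigma(\Delta^{(0,0)})=\{0,\tfrac65,\tfrac85\}$ and $\sigma(\Delta^{(\pi,0)})=\{1\pm\sqrt{5}^{-1}\}$ and concluding from disjointness. The only difference is that you spell out how to diagonalize the two matrices (circulant structure at $(0,0)$, the involution $M^2=5\,\id$ at $(\pi,0)$), which the paper leaves implicit.
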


\subsection{IDS of the $3.4.6.4$ tiling}

\begin{figure}[ht]
 \begin{tikzpicture}
 \draw[fill = black] (0:1) circle (2pt);
 \draw[fill = black] (60:1) circle (2pt);
 \draw[fill = black] (120:1) circle (2pt);
 \draw[fill = black] (180:1) circle (2pt);
 \draw[fill = black] (240:1) circle (2pt);
 \draw[fill = black] (300:1) circle (2pt);
 \draw[thick] (0:1) -- (60:1) -- (120:1) -- (180:1) -- (240:1) -- (300:1) -- (0:1);

 \draw (0:0.7) node {$a$};
 \draw (60:0.7) node {$b$};
 \draw (120:0.7) node {$c$};
 \draw (180:0.7) node {$d$};
 \draw (240:0.7) node {$e$};
 \draw (300:0.7) node {$f$};
%
\begin{scope}[rotate = 0]
 \draw[thick] (120:1) -- (-.5,1.86602540378) -- (.5,1.86602540378)  -- (60:1);
 \draw[fill = white] (-.5,1.86602540378) circle (2pt);
 \draw[fill = white] (.5,1.86602540378) circle (2pt);
 \draw (-.75,2.2) node {$e + \omega_2$};
 \draw (.75,2.2) node {$f + \omega_2$};
\end{scope}

\begin{scope}[rotate = 60]
 \draw[thick] (120:1) -- (-.5,1.86602540378) -- (.5,1.86602540378)  -- (60:1);
 \draw[fill = white] (-.5,1.86602540378) circle (2pt);
 \draw[fill = white] (.5,1.86602540378) circle (2pt);
 \draw (-1.1,2.9) node {$f + \omega_2 - \omega_1$};
 \draw (.6,2.1) node {$a + \omega_2 - \omega_1$};
\end{scope}

\begin{scope}[rotate = 120]
 \draw[thick] (120:1) -- (-.5,1.86602540378) -- (.5,1.86602540378)  -- (60:1);
 \draw[fill = white] (-.5,1.86602540378) circle (2pt);
 \draw[fill = white] (.5,1.86602540378) circle (2pt);
 \draw (-.6,2.2) node {$a - \omega_1$};
 \draw (.6,2.5) node {$b - \omega_1$};
\end{scope}

\begin{scope}[rotate = 180]
 \draw[thick] (120:1) -- (-.5,1.86602540378) -- (.5,1.86602540378)  -- (60:1);
 \draw[fill = white] (-.5,1.86602540378) circle (2pt);
 \draw[fill = white] (.5,1.86602540378) circle (2pt);
 \draw (-.75,2.2) node {$b - \omega_2$};
 \draw (.75,2.2) node {$c - \omega_2$};
\end{scope}

\begin{scope}[rotate = 240]
 \draw[thick] (120:1) -- (-.5,1.86602540378) -- (.5,1.86602540378)  -- (60:1);
 \draw[fill = white] (-.5,1.86602540378) circle (2pt);
 \draw[fill = white] (.5,1.86602540378) circle (2pt);
 \draw (-1,3) node {$c + \omega_1 - \omega_2$};
 \draw (.55,2.4) node {$d  + \omega_1- \omega_2$};
\end{scope}

\begin{scope}[rotate = 300]
 \draw[thick] (120:1) -- (-.5,1.86602540378) -- (.5,1.86602540378)  -- (60:1);
 \draw[fill = white] (-.5,1.86602540378) circle (2pt);
 \draw[fill = white] (.5,1.86602540378) circle (2pt);
 \draw (-.6,2.1) node {$d + \omega_1$};
 \draw (.8,2.5) node {$e + \omega_1$};
\end{scope}
\end{tikzpicture}
 \caption{Fundamental domain of the $(3.4.6.4)$ tiling}
\label{fig:3.4.6.4}
\end{figure}

A fundamental domain consists of the six vertices $\{a,b,c,d,e,f \}$ around a hexagon, cf. Figure~\ref{fig:3.4.6.4}.
It leads to the matrix
\[
 \Delta^\theta 
 =
 \operatorname{Id}
 -
 \frac{1}{4}
 \begin{pmatrix}
   0 & 1 & e^{-i(\theta_2 - \theta_1)} & 0 & e^{i \theta_1} & 1 \\
   1 & 0 & 1 & e^{i \theta_1} & 0 & e^{i \theta_2} \\
   e^{i(\theta_2 - \theta_1)} & 1 & 0 & 1 & e^{i \theta_2} & 0 \\
   0 & e^{- i \theta_1} & 1 & 0 & 1 & e^{i(\theta_2 - \theta_1)} \\
   e^{- i \theta_1} & 0 & e^{- i \theta_2} & 1 & 0 & 1 \\
   1 & e^{- i \theta_2} & 0 & e^{-i(\theta_2 - \theta_1)} & 1 & 0\\
 \end{pmatrix}.
\] 
We compare the spectra of $\Delta^\theta$ at $\theta = (0,0)$ and $\theta = (\pi, \pi/2)$: 
\begin{align*}
 \sigma \left( \Delta^{(0,0)} \right)
 &=
 \left\{ 
 0, 1, \frac{3}{2}
 \right\}
 \quad
 \text{and}
 \\
 \sigma \left( \Delta^{(\pi, \pi/2)} \right)
 &=
 \left\{
 \lambda \in \CC 
 \colon
 \lambda^6 - 6 \lambda^5 + \frac{57}{4} \lambda^4 - 17 \lambda^3 + \frac{85}{8} \lambda^2 - \frac{13}{4} \lambda + \frac{95}{256} 
 = 
 0
 \right\}.
\end{align*}
It is straightforward to verify that these sets are disjoint.
By Theorem~\ref{thm:equivalence} and Corollary~\ref{cor:IDS_continuous}, we find
\begin{proposition}
 The $(3.4.6.4)$ tiling has no $\ell^2(\V)$-eigenfunctions.
\end{proposition}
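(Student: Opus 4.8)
The plan is to invoke Theorem~\ref{thm:equivalence} together with Corollary~\ref{cor:IDS_continuous}. Concretely, it suffices to produce two parameters $\theta,\theta'\in\TT^2$ whose Floquet matrices $\Delta^\theta$ and $\Delta^{\theta'}$ have disjoint spectra: Corollary~\ref{cor:IDS_continuous} then gives continuity of $N_{(3.4.6.4)}$ on all of $\RR$, and the implication (i)$\Rightarrow$(iii) of Theorem~\ref{thm:equivalence} rules out any $\ell^2(\V)$-eigenfunction of $\Delta$.

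First I would fix the fundamental domain $Q=\{a,b,c,d,e,f\}$ consisting of the six vertices around a hexagon (Figure~\ref{fig:3.4.6.4}) and read off the $6\times 6$ Hermitian matrix $\Delta^\theta=\id-\tfrac14 A(\theta)$ displayed above, carefully tracking which neighbours of each vertex lie in translated copies of $Q$ and hence contribute phase factors $e^{\pm i\theta_j}$. At $\theta=(0,0)$ the matrix $A(0,0)$ is the real symmetric circulant with first row $(0,1,1,0,1,1)$; diagonalising it by the characters of $\ZZ/6\ZZ$ gives eigenvalues $4,0,-2$ with multiplicities $1,3,2$, hence $\sigma(\Delta^{(0,0)})=\{0,1,\tfrac32\}$.

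Next I would take $\theta'=(\pi,\pi/2)$ and compute the characteristic polynomial $P_{\Delta^{\theta'}}(\lambda)=\det(\Delta^{\theta'}-\lambda\,\id)$, which equals the sextic $\lambda^6-6\lambda^5+\tfrac{57}{4}\lambda^4-17\lambda^3+\tfrac{85}{8}\lambda^2-\tfrac{13}{4}\lambda+\tfrac{95}{256}$. Rather than solving this sextic explicitly, I only need to verify that none of the three numbers $0,1,\tfrac32$ is a root, which is a finite rational computation: one finds $P_{\Delta^{\theta'}}(0)=\tfrac{95}{256}$ and $P_{\Delta^{\theta'}}(1)=P_{\Delta^{\theta'}}(\tfrac32)=-\tfrac1{256}$, all nonzero. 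Since $\sigma(\Delta^{(0,0)})=\{0,1,\tfrac32\}$ exactly, this yields $\sigma(\Delta^{(0,0)})\cap\sigma(\Delta^{(\pi,\pi/2)})=\emptyset$, and the proposition follows.

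The only genuine work is bookkeeping: assembling $A(\theta)$ correctly from the incidence structure of the fundamental domain, and expanding a $6\times 6$ determinant at the chosen point $\theta'=(\pi,\pi/2)$ — a point picked precisely so that the resulting spectrum misses the ``resonant'' values $0,1,\tfrac32$ occurring at the origin. As consistency checks on the sextic one can use $\tr\Delta^{(\pi,\pi/2)}=6$ (which pins down the $\lambda^5$-coefficient) and $\det\Delta^{(\pi,\pi/2)}=\tfrac{95}{256}$ (the constant term); there is no conceptual obstacle beyond these routine verifications.
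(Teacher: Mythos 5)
Your proposal is correct and follows essentially the same route as the paper: the authors also choose $\theta=(0,0)$ and $\theta'=(\pi,\pi/2)$, obtain $\sigma(\Delta^{(0,0)})=\{0,1,\tfrac32\}$ and the same sextic for $\Delta^{(\pi,\pi/2)}$, and conclude via Theorem~\ref{thm:equivalence} and Corollary~\ref{cor:IDS_continuous}. Your explicit evaluation of the characteristic polynomial at $0$, $1$, $\tfrac32$ (giving $\tfrac{95}{256}$, $-\tfrac{1}{256}$, $-\tfrac{1}{256}$) is exactly the ``straightforward verification'' of disjointness the paper leaves to the reader, and your values check out.
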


\subsection{IDS of the $(4.6.12)$ tiling}

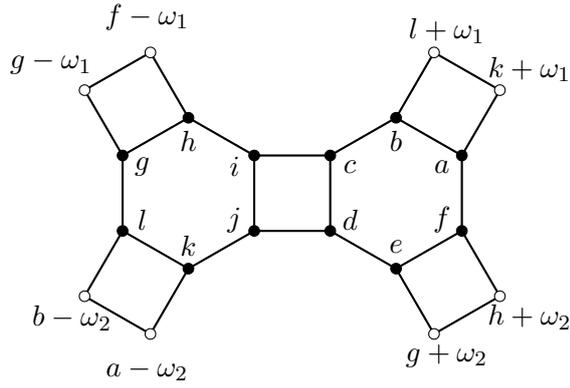
\begin{figure}[ht]
 \begin{tikzpicture}
 \draw[fill = black] (30:1) circle (2pt);
 \draw[fill = black] (90:1) circle (2pt);
 \draw[fill = black] (150:1) circle (2pt);
 \draw[fill = black] (210:1) circle (2pt);
 \draw[fill = black] (270:1) circle (2pt);
 \draw[fill = black] (330:1) circle (2pt);
 \draw[thick] (30:1) -- (90:1) -- (150:1) -- (210:1) -- (270:1) -- (330:1) -- (30:1);

 \draw (30:0.7) node {$a$};
 \draw (90:0.7) node {$b$};
 \draw (150:0.7) node {$c$};
 \draw (210:0.7) node {$d$};
 \draw (270:0.7) node {$e$};
 \draw (330:0.7) node {$f$};
\begin{scope}[rotate = 60]
 \draw[thick] (0.86602540378,-.5) -- (1.86602540378,-.5) -- (1.86602540378,.5) -- (0.86602540378,.5);
 \draw[fill = white] (1.86602540378,-.5) circle (2pt);
 \draw[fill = white] (1.86602540378,.5) circle (2pt);
 \draw (2.3,-.7) node {$k + \omega_1$};
 \draw (2.2,.5) node {$l + \omega_1$};
\end{scope}

\begin{scope}[rotate = -60]
 \draw[thick] (0.86602540378,-.5) -- (1.86602540378,-.5) -- (1.86602540378,.5) -- (0.86602540378,.5);
 \draw[fill = white] (1.86602540378,-.5) circle (2pt);
 \draw[fill = white] (1.86602540378,.5) circle (2pt);
 \draw (2.2,-.5) node {$g + \omega_2$};
 \draw (2.3,.7) node {$h + \omega_2$};
\end{scope}


\begin{scope}[xshift = -2.73205080757cm]

 \draw[thick] (0.86602540378,-.5) -- (1.86602540378,-.5) ;
 \draw[thick] (0.86602540378,.5) -- (1.86602540378,.5);

 \draw[fill = black] (30:1) circle (2pt);
 \draw[fill = black] (90:1) circle (2pt);
 \draw[fill = black] (150:1) circle (2pt);
 \draw[fill = black] (210:1) circle (2pt);
 \draw[fill = black] (270:1) circle (2pt);
 \draw[fill = black] (330:1) circle (2pt);
 \draw[thick] (30:1) -- (90:1) -- (150:1) -- (210:1) -- (270:1) -- (330:1) -- (30:1);

 \draw (30:0.7) node {$i$};
 \draw (90:0.7) node {$h$};
 \draw (150:0.7) node {$g$};
 \draw (210:0.7) node {$l$};
 \draw (270:0.7) node {$k$};
 \draw (330:0.7) node {$j$};
 
\begin{scope}[rotate = 120]
 \draw[thick] (0.86602540378,-.5) -- (1.86602540378,-.5) -- (1.86602540378,.5) -- (0.86602540378,.5);
 \draw[fill = white] (1.86602540378,-.5) circle (2pt);
 \draw[fill = white] (1.86602540378,.5) circle (2pt);
 \draw (2.3,-.7) node {$f - \omega_1$};
 \draw (2.4,.7) node {$g - \omega_1$};
\end{scope}

\begin{scope}[rotate = -120]
 \draw[thick] (0.86602540378,-.5) -- (1.86602540378,-.5) -- (1.86602540378,.5) -- (0.86602540378,.5);
 \draw[fill = white] (1.86602540378,-.5) circle (2pt);
 \draw[fill = white] (1.86602540378,.5) circle (2pt);
 \draw (2.2,-.5) node {$b - \omega_2$};
 \draw (2.3,.7) node {$a - \omega_2$};
\end{scope}
 
\end{scope}

\end{tikzpicture}
 \caption{Fundamental domain of the $(4.6.12)$ tiling}
\label{fig:4.6.12}
\end{figure}

A fundamental domain consists of the $12$ vertices constituting two neighboring hexagons.
This leads to 
\[
 \Delta^\theta
 =
 \operatorname{Id} 
 -
 \frac{1}{3}
 \begin{pmatrix}
  A & B \\
  \overline{B}^T & A \\
 \end{pmatrix},
\]
where
\[
 A 
 =
 \begin{pmatrix}
  0 & 1 & 0 & 0 & 0 & 1 \\
  1 & 0 & 1 & 0 & 0 & 0 \\
  0 & 1 & 0 & 1 & 0 & 0 \\
  0 & 0 & 1 & 0 & 1 & 0 \\
  0 & 0 & 0 & 1 & 0 & 1 \\
  1 & 0 & 0 & 0 & 1 & 0 \\
 \end{pmatrix}
 \quad
 \text{and}
 \quad
 B
 =
 \begin{pmatrix}
  0 & 0 & 0 & 0 & e^{\theta_1} & 0 \\
  0 & 0 & 0 & 0 & 0 & e^{\theta_1} \\
  0 & 0 & 1 & 0 & 0 & 0 \\
  0 & 0 & 0 & 1 & 0 & 0 \\
  e^{\theta_2} & 0 & 0 & 0 & 0 & 0 \\
  0 & e^{\theta_2} & 0 & 0 & 0 & 0 \\
 \end{pmatrix}.
\]
It suffices to study the spectrum of the adjacency matrix
\[
 M^\theta
 :=
 \begin{pmatrix}
  A & B \\
  \overline{B}^T & A \\
 \end{pmatrix}
\]
since the spectrum of $M^\theta$ differs from the spectrum of $\Delta^\theta$ only by an invertible linear affine transformation. 
Thus, we need to check that there are $\theta$, $\theta'$ such that $\sigma(M^\theta) \cap \sigma(M^{\theta'}) = \emptyset$.
Plugging in the values $(0,0)$ and $(\pi, \pi/2)$ for $\theta$, we find
\begin{align*}
 \sigma \left( M^{(0,0)} \right)
 &=
 \left\{ 
  \pm 1, \pm \sqrt{3}, \pm 3
 \right\}
 \quad
 \text{and}\\
 \sigma \left( M^{(\pi,\pi/2)} \right)
 &=
 \left\{
 \lambda \in \CC
 \colon
 \lambda^{12} - 18 \lambda^{10} + 111 \lambda^8 - 268 \lambda^6 + 207 \lambda^4 - 50 \lambda^2 + 1 
 = 
 0
 \right\}
\end{align*}
and again it is straightforward to verify that these sets are disjoint whence also $\sigma(\Delta^{(0,0)}) \cap \sigma(\Delta^{(\pi, \pi/2)}) = \emptyset$.
Theorem~\ref{thm:equivalence} and Corollary~\ref{cor:IDS_continuous} imply
\begin{proposition}
 The $(4.6.12)$-tiling has no $\ell^2(\V)$-eigenfunctions.
\end{proposition}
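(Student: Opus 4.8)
The plan is to follow verbatim the template already used for the other tilings in this section: write down the Floquet matrix $\Delta^\theta$ for a concrete fundamental domain, evaluate it at two cleverly chosen parameters $\theta, \theta' \in \TT^2$, check that the two resulting spectra are disjoint, and then invoke Corollary~\ref{cor:IDS_continuous} together with Theorem~\ref{thm:equivalence}.

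First I would fix the fundamental domain consisting of the twelve vertices of two adjacent hexagons as in Figure~\ref{fig:4.6.12} and carefully record the Floquet matrix $\Delta^\theta = \id - \tfrac13 M^\theta$ with $M^\theta = \left(\begin{smallmatrix} A & B \\ \overline{B}^T & A \end{smallmatrix}\right)$, the only delicate bookkeeping being the placement of the translations $\omega_1,\omega_2$ in the off-diagonal block $B$ (which entries carry $e^{i\theta_1}$, which carry $e^{i\theta_2}$, and which are plain $1$'s). Since $\Delta^\theta$ and $M^\theta$ differ only by the fixed invertible affine map $x \mapsto 1 - x/3$, it is enough to separate the spectra of $M^\theta$, which is the adjacency matrix of the finite quotient graph $G/\ZZ^2$ twisted by $\theta$.

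Next I would evaluate at $\theta = (0,0)$ and $\theta' = (\pi, \pi/2)$. At $\theta = (0,0)$ the block $B$ specializes to a symmetric permutation matrix $B_0$, so $M^{(0,0)} = \left(\begin{smallmatrix} A & B_0 \\ B_0 & A \end{smallmatrix}\right)$ has spectrum $\sigma(A + B_0) \cup \sigma(A - B_0)$; each summand is a sparse real symmetric $6\times 6$ matrix whose six eigenvalues are computed directly, giving $\sigma(M^{(0,0)}) = \{ \pm 1, \pm\sqrt 3, \pm 3 \}$. At $\theta' = (\pi, \pi/2)$ the substitutions $e^{i\theta_1} = -1$, $e^{i\theta_2} = i$ turn $M^{\theta'}$ into an explicit Hermitian matrix whose characteristic polynomial (best extracted with computer algebra) is the even polynomial
\[
 \lambda^{12} - 18\lambda^{10} + 111\lambda^8 - 268\lambda^6 + 207\lambda^4 - 50\lambda^2 + 1 = q(\lambda^2),
 \qquad
 q(\mu) = \mu^6 - 18\mu^5 + 111\mu^4 - 268\mu^3 + 207\mu^2 - 50\mu + 1 .
\]

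Finally, disjointness of the two spectra reduces to checking that the squares of the elements of $\sigma(M^{(0,0)})$, namely $1$, $3$ and $9$, are not roots of $q$; and indeed $q(1) = -16$, $q(3) = -176$, $q(9) = 17776$, all nonzero. Hence $\sigma(M^{(0,0)}) \cap \sigma(M^{(\pi,\pi/2)}) = \emptyset$, so also $\sigma(\Delta^{(0,0)}) \cap \sigma(\Delta^{(\pi,\pi/2)}) = \emptyset$; Corollary~\ref{cor:IDS_continuous} then shows that $N_{(4.6.12)}$ is continuous, and Theorem~\ref{thm:equivalence} gives that $\Delta$ has no $\ell^2(\V)$-eigenfunctions. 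The only genuine obstacle is computational accuracy — getting the $12\times 12$ matrix, in particular the entries of $B$, exactly right and then extracting the degree-$12$ characteristic polynomial at $\theta'$; once both spectra are in hand, the disjointness is a one-line finite verification.
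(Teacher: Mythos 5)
Your proposal follows the paper's argument essentially verbatim: the same fundamental domain of twelve vertices, the same reduction to the adjacency matrix $M^\theta$, the same two Floquet parameters $(0,0)$ and $(\pi,\pi/2)$, the same spectra, and the same appeal to Theorem~\ref{thm:equivalence} and Corollary~\ref{cor:IDS_continuous}. The only addition is that you make the paper's ``straightforward to verify'' disjointness explicit by exploiting the evenness of the degree-$12$ characteristic polynomial and checking $q(1), q(3), q(9) \neq 0$, which is a nice touch but not a different route.
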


\subsection{IDS of the $(3^4.6)$ tiling}

\begin{figure}[ht]
 \begin{tikzpicture}
 \draw[fill = black] (0:1) circle (2pt);
 \draw[fill = black] (60:1) circle (2pt);
 \draw[fill = black] (120:1) circle (2pt);
 \draw[fill = black] (180:1) circle (2pt);
 \draw[fill = black] (240:1) circle (2pt);
 \draw[fill = black] (300:1) circle (2pt);
 \draw[thick] (0:1) -- (60:1) -- (120:1) -- (180:1) -- (240:1) -- (300:1) -- (0:1);

 \draw (0:0.7) node {$a$};
 \draw (60:0.7) node {$b$};
 \draw (120:0.7) node {$c$};
 \draw (180:0.7) node {$d$};
 \draw (240:0.7) node {$e$};
 \draw (300:0.7) node {$f$};

\begin{scope}[rotate = 0]
\draw[thick] (60:1) -- (0,1.73205080757) -- (120:1) -- (-1,1.73205080757) -- (0,1.73205080757);
\draw[fill = white] (0,1.73205080757) circle (2pt);
\draw[fill = white] (-1,1.73205080757) circle (2pt);
\draw (0,2.35) node {$f + \omega_1 - \omega_2$};
\draw (-1.5,2) node {$e + \omega_1 - \omega_2$};
\end{scope}

\begin{scope}[rotate = 60]
\draw[thick] (60:1) -- (0,1.73205080757) -- (120:1) -- (-1,1.73205080757) -- (0,1.73205080757);
\draw[fill = white] (0,1.73205080757) circle (2pt);
\draw[fill = white] (-1,1.73205080757) circle (2pt);
\draw (0,2.25) node {$a - \omega_2$};
\draw (-1.25,2.4) node {$f - \omega_2$};
\end{scope}

\begin{scope}[rotate = 120]
\draw[thick] (60:1) -- (0,1.73205080757) -- (120:1) -- (-1,1.73205080757) -- (0,1.73205080757);
\draw[fill = white] (0,1.73205080757) circle (2pt);
\draw[fill = white] (-1,1.73205080757) circle (2pt);
\draw (.25,2.25) node {$b - \omega_1$};
\draw (-.75,2.4) node {$a - \omega_1$};
\end{scope}

\begin{scope}[rotate = 180]
\draw[thick] (60:1) -- (0,1.73205080757) -- (120:1) -- (-1,1.73205080757) -- (0,1.73205080757);
\draw[fill = white] (0,1.73205080757) circle (2pt);
\draw[fill = white] (-1,1.73205080757) circle (2pt);
\draw (0,2.35) node {$c - \omega_1 + \omega_2$};
\draw (-1.5,2) node {$b - \omega_1 + \omega_2$};
\end{scope}

\begin{scope}[rotate = 240]
\draw[thick] (60:1) -- (0,1.73205080757) -- (120:1) -- (-1,1.73205080757) -- (0,1.73205080757);
\draw[fill = white] (0,1.73205080757) circle (2pt);
\draw[fill = white] (-1,1.73205080757) circle (2pt);
\draw (-.2,2.4) node {$d + \omega_2$};
\draw (-1.2,2.4) node {$c + \omega_2$};
\end{scope}

\begin{scope}[rotate = 300]
\draw[thick] (60:1) -- (0,1.73205080757) -- (120:1) -- (-1,1.73205080757) -- (0,1.73205080757);
\draw[fill = white] (0,1.73205080757) circle (2pt);
\draw[fill = white] (-1,1.73205080757) circle (2pt);
\draw (.2,2.4) node {$e + \omega_1$};
\draw (-.8,2.4) node {$d + \omega_1$};
\end{scope}
\end{tikzpicture}
 \caption{Fundamental domain of the $(3^4.6)$ tiling}
\label{fig:3^4.6}
\end{figure}

A fundamental domain consists of the six vertices $\{ a,b,c,d,e,f \}$ corresponding to a hexagon, cf. Figure~\ref{fig:3^4.6}.
This leads to the matrix
\[
 \Delta^\theta 
 =
 \operatorname{Id}
 -
 \frac{1}{5}
 \begin{pmatrix}
  0 & 1 & e^{i \theta_2} & e^{i \theta_2} & e^{i \theta_1} & 1 \\
  1 & 0 & 1 & e^{i \theta_1} & e^{i \theta_1} & e^{- i (\theta_2 - \theta_1)} \\
  e^{- i \theta_2} & 1 & 0 & 1 & e^{-i (\theta_2 - \theta_1)} & e^{-i (\theta_2 - \theta_1)} \\
  e^{-i \theta_2} & e^{- i \theta_1} & 1 & 0 & 1 & e^{- i \theta_2} \\
  e^{- i \theta_1} & e^{- i \theta_1} & e^{i (\theta_2 - \theta_1)} & 1 & 0 & 1 \\
  1 & e^{i (\theta_2 - \theta_1)} & e^{i (\theta_2 - \theta_1)} & e^{i \theta_2} & 1 & 0 \\
  \end{pmatrix}.
\] 

We choose the particular values $\theta = (0,0)$ and $\theta = (\pi, \pi/2)$ and find
\begin{align*}
 \sigma \left( \Delta^{(0,0)} \right)
 &=
 \left\{
 0, \frac{6}{5}
 \right\}
 \quad
 \text{and}\\
 \sigma \left( \Delta^{(\pi,\pi/2)} \right)
 &=
 \left\{
 \lambda \in \CC
 \colon
 \lambda^6 
 - 6 \lambda^5 
 + \frac{72}{5} \lambda^4 
 - \frac{2192}{125} \lambda^3 
 + \frac{7056}{625} \lambda^2
 - \frac{11192}{3125} \lambda
 + \frac{6656}{15625}
 = 
 0
 \right\}
 \!
 .
\end{align*}
It is straightforward to verify that these sets are disjoint and by Theorem~\ref{thm:equivalence} and Corollary~\ref{cor:IDS_continuous} we find
\begin{proposition}
 The $(3^4.6)$-tiling has no $\ell^2(\V)$-eigenfunctions.
\end{proposition}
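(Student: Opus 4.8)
The plan is to argue exactly as for the other tilings in this section, via Corollary~\ref{cor:IDS_continuous} and Theorem~\ref{thm:equivalence}: if we can exhibit two quasi-momenta $\theta,\theta'\in\TT^2$ with $\sigma(\Delta^\theta)\cap\sigma(\Delta^{\theta'})=\emptyset$, then $N_{(3^4.6)}$ is continuous by Corollary~\ref{cor:IDS_continuous}, and the equivalence (i)$\Leftrightarrow$(iii) of Theorem~\ref{thm:equivalence} then shows that $\Delta$ has no $\ell^2(\V)$-eigenfunctions at all.

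First I would set up the Floquet matrix. Taking the fundamental domain $Q=\{a,b,c,d,e,f\}$ consisting of the six vertices of one hexagon, with translation vectors $\omega_1,\omega_2$ generating the hexagonal super-lattice as in Figure~\ref{fig:3^4.6}, one reads off the $6\times 6$ Hermitian matrix $\Delta^\theta=\id-\tfrac15 M^\theta$ displayed above. Each hexagon vertex has degree $5$ --- two neighbours along the hexagon and three more coming from the adjacent triangles --- and for each such neighbour one must record whether it lies in $Q$ itself or in one of the translates $T_{\omega_1}Q$, $T_{\omega_2}Q$, $T_{\omega_2-\omega_1}Q$, which assigns the phase $1$, $e^{\pm i\theta_1}$, $e^{\pm i\theta_2}$ or $e^{\pm i(\theta_2-\theta_1)}$ to the corresponding off-diagonal entry.

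Next I would evaluate at the two quasi-momenta $\theta=(0,0)$ and $\theta'=(\pi,\pi/2)$. At $\theta=(0,0)$ the matrix $M^{(0,0)}$ is the adjacency matrix of the finite quotient graph, whose characteristic polynomial factors completely, yielding $\sigma(\Delta^{(0,0)})=\{0,\tfrac65\}$. At $\theta'=(\pi,\pi/2)$ the characteristic polynomial of $\Delta^{\theta'}$ is the explicit degree-$6$ polynomial with rational coefficients written above; rather than determine its roots, it suffices to check that neither of the two values $0$ and $\tfrac65$ is a root --- the constant term $\tfrac{6656}{15625}$ is nonzero, and a single substitution shows the value at $\lambda=\tfrac65$ is nonzero as well. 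Hence $\sigma(\Delta^{(0,0)})\cap\sigma(\Delta^{(\pi,\pi/2)})=\emptyset$, and the proposition follows from Corollary~\ref{cor:IDS_continuous} and Theorem~\ref{thm:equivalence}.

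The main obstacle is the first step: correctly extracting the combinatorial structure of the quotient graph, in particular deciding for each of the three ``triangle'' neighbours of a hexagon vertex in which translate of $Q$ it sits, since any sign or phase error there corrupts both subsequent eigenvalue computations. Once $\Delta^\theta$ is pinned down, the remainder is a finite algebraic check; the only point requiring slightly more care than in the $(3^3.4^2)$ or $(4.8^2)$ cases is that here the spectrum at $\theta'$ is not available in closed form, so disjointness is verified by evaluating the characteristic polynomial at the two candidate eigenvalues rather than by comparing explicit finite lists.
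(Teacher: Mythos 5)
Your proposal follows exactly the paper's argument: write down the $6\times 6$ Floquet matrix for the hexagonal fundamental domain, evaluate at $\theta=(0,0)$ and $\theta'=(\pi,\pi/2)$, observe that $\sigma(\Delta^{(0,0)})=\{0,\tfrac65\}$ is disjoint from the root set of the degree-$6$ characteristic polynomial at $\theta'$, and conclude via Theorem~\ref{thm:equivalence} and Corollary~\ref{cor:IDS_continuous}. Your remark that disjointness only requires evaluating the polynomial at the two candidate values $0$ and $\tfrac65$ is precisely the content of the paper's ``straightforward to verify'' step.
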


{\bf Acknowledgement:}
The authors would like to thank Ivan Veseli\'c and Christoph Schumacher for helpful discussions.
Parts of this article were written while the authors enjoyed the hospitality of the Isaac Newton Institute during the programme \emph{Non-Positive Curvature Group Actions and Cohomology} and while the first named author enjoyed the hospitality of Technische Universit\"at Dortmund.
The second named author was in part supported by the European Research Council starting grant 639305 (SPECTRUM).

\newcommand{\etalchar}[1]{$^{#1}$}

\end{document}